\numberwithin{equation}{section}
\theoremstyle{plain}
\newtheorem{theorem}{Theorem}
\newtheorem{proposition}{Proposition}[section]
\newtheorem{lemma}[proposition]{Lemma}
\newtheorem*{theorem*}{Theorem}
\newtheorem{corollary}{Corollary}[section]
\newtheorem*{conjecture*}{Conjecture}
\newtheorem*{openpb*}{Open problem}
\theoremstyle{definition}
\newtheorem{definition}[proposition]{Definition}
\newtheorem{remark}[proposition]{Remark}
\numberwithin{equation}{section}
\title{Discretely self-similar exterior-naked singularities \\ for the Einstein-scalar field system}
\newcommand{\Rext}{\mathcal{R}_{\mathrm{Ext}}}
\author[1]{Serban~Cicortas\thanks{cicortas@princeton.edu}}
\author[2]{Christoph~Kehle\thanks{kehle@mit.edu}}
\affil[1]{\small  Department of Mathematics, Princeton University,
	
Fine Hall,	  Washington Road, Princeton NJ 08544, United States of America \vskip.1pc \ } 
\affil[2]{\small  Department of Mathematics, Massachusetts Institute of Technology, 
	
Building 2, 77 Massachusetts Avenue, Cambridge, MA 02139,  United States of America 
}
\date{12 December 2024}
\begin{document}
\maketitle
\begin{abstract} 
The problem of constructing naked singularities in general relativity can be naturally divided into two parts: 
\begin{enumerate}
    \item[(i)] the construction of the region exterior to the past light cone of the singularity, extending all the way to (an incomplete) future null infinity and yielding the nakedness property (what we will call \textit{exterior-naked singularity regions})
    \item[(ii)] attaching an interior fill-in that ensures that the singularity arises from regular initial data.
\end{enumerate}
This problem has been resolved for the spherically symmetric Einstein-scalar field system by Christodoulou \cite{C94}, but his construction, based on a continuously self-similar ansatz, requires that both the exterior and the interior regions are mildly irregular on the past cone of the singularity. On the other hand, numerical works suggest that there exist naked singularity spacetimes with discrete self-similarity arising from smooth initial data. In this paper, we revisit part (i) of the problem and we construct exterior-naked singularity regions with discretely self-similar profiles which are smooth on the past cone of the singularity. We show that the scalar field remains uniformly bounded, but the singularity is characterized by the infinite oscillations of the scalar field and the mass aspect ratio. (Our examples require however that the mass aspect ratio is uniformly small, and thus the solutions are distinct from the exterior regions of the numerical examples.) It remains an open problem to smoothly attach interior fill-ins as in (ii) to our solutions, which would yield a new construction of naked singularity spacetimes, now arising from smooth initial data.

\end{abstract}

\section{Introduction}
The Einstein-scalar field system 
\begin{equation}\label{Einstein Scalar Field}
    \begin{dcases}
    Ric_{\mu\nu}=2\partial_{\mu}\phi\partial_{\nu}\phi \\  
    \square_g\phi=0
    \end{dcases}
\end{equation}
for a spherically symmetric spacetime $\big(M^{3+1},g\big)$ and a scalar field $\phi$  is a well-established model for studying the physics of gravitational collapse \cite{Chr71}. The spacetime $\big(M^{3+1},g\big)$ admits double null coordinates $(u,v,\vartheta,\varphi)$ such that the metric takes the form
\begin{equation}\label{double null metric}
    g=-\Omega^2 (u,v) dudv+r^2(u,v)d\sigma_{S^2},
\end{equation}
where $d\sigma_{S^2} \doteq d \vartheta^2 + \sin^2 \vartheta d \varphi^2$ is the standard metric on the sphere $S^2$ and the function $r(u,v)$ is the area radius of the orbits of the $SO(3)$ action.

For asymptotically flat spherically symmetric initial data for \eqref{Einstein Scalar Field}, solutions arising from small data disperse to Minkowski space, while there exist large-data solutions that contain a black hole region. A third class of solutions that play a special role are \emph{naked singularity spacetimes}. Informally, these spacetimes contain a singularity that is visible to far away observers and not hidden behind the event horizon of a black hole. An alternative way to think about naked singularities is that they cause the \textit{incompleteness} of future null infinity $\mathcal I^+,$ which represents the set of idealized far away observers. It is well-known that such a singularity, denoted by $b_\Gamma$, can only form at the center $\{r=0\}$ and the past lightcone $C_0^-$ of $b_\Gamma$ naturally divides a naked singularity spacetime into an \emph{exterior-naked singularity region} $\mathcal R_{\mathrm{Ext}}$ and an \emph{interior fill-in}  $\mathcal R_{\mathrm{Int}}$, see \cref{fig:naked-singularity-intro}. It is precisely the geometry of the exterior region that determines the nakedness of the singularity, while the interior fill-in ensures that the spacetime arises dynamically from regular collapse.

The incompleteness of $\mathcal I^+$ of naked singularity spacetimes constitutes a failure of global existence for far away observers and threatens the predictability of general relativity. The weak cosmic censorship conjecture, originally formulated by Penrose in \cite{P69} for more general Einstein-matter systems, postulates that, \emph{generically}, naked singularities do not form and $\mathcal I^+$ is complete \cite{GH78,C99-GL}. In the seminal work \cite{C99}, Christodoulou proved the conjecture for \eqref{Einstein Scalar Field} in spherical symmetry. We shall return to this work in \cref{sec:intro-instability}.

\begin{figure}[h]
\centering{
\def\svgwidth{11pc}
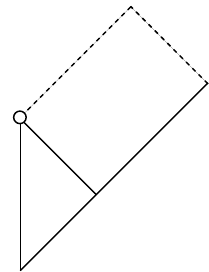}
\caption{Penrose diagram of a spacetime with future incomplete $\mathcal I^+$ and a naked singularity at $b_\Gamma$ arising from data on $C_0^+$. The past ingoing light cone of the singularity $b_\Gamma$ is denoted with $C_0^-$ and divides the spacetime into the exterior region $\mathcal R_{\mathrm{Ext}}$ and the interior fill-in $\mathcal R_{\mathrm{Int}}$.} 
\label{fig:naked-singularity-intro}
\end{figure}

The genericity condition in formulating the weak cosmic censorship conjecture is indeed necessary because naked singularity spacetimes have been shown to form dynamically for \eqref{Einstein Scalar Field} in spherical symmetry. We describe two kinds of naked singularities below, which can be classified as featuring \emph{continuous self-similarity} and \emph{discrete self-similarity}.
The continuously self-similar solutions were rigorously constructed, while the discretely self-similar solutions have been obtained only numerically.
We refer to \cref{fig:css-dss} for an illustration highlighting the differences between continuous and discrete self-similarity.

\begin{figure}[ht]
\begin{subfigure}{.48\textwidth}
\centering{
\def\svgwidth{12pc}
\begingroup%
  \makeatletter%
  \providecommand\color[2][]{%
    \errmessage{(Inkscape) Color is used for the text in Inkscape, but the package 'color.sty' is not loaded}%
    \renewcommand\color[2][]{}%
  }%
  \providecommand\transparent[1]{%
    \errmessage{(Inkscape) Transparency is used (non-zero) for the text in Inkscape, but the package 'transparent.sty' is not loaded}%
    \renewcommand\transparent[1]{}%
  }%
  \providecommand\rotatebox[2]{#2}%
  \newcommand*\fsize{\dimexpr\f@size pt\relax}%
  \newcommand*\lineheight[1]{\fontsize{\fsize}{#1\fsize}\selectfont}%
  \ifx\svgwidth\undefined%
    \setlength{\unitlength}{78.59154168bp}%
    \ifx\svgscale\undefined%
      \relax%
    \else%
      \setlength{\unitlength}{\unitlength * \real{\svgscale}}%
    \fi%
  \else%
    \setlength{\unitlength}{\svgwidth}%
  \fi%
  \global\let\svgwidth\undefined%
  \global\let\svgscale\undefined%
  \makeatother%
  \begin{picture}(1,1.18713699)%
    \lineheight{1}%
    \setlength\tabcolsep{0pt}%
    \put(0.08812372,0.32206218){\color[rgb]{0,0,0}\rotatebox{90}{\makebox(0,0)[lt]{\lineheight{1.25}\smash{\begin{tabular}[t]{l}$r=0$\end{tabular}}}}}%
    \put(0,0){\includegraphics[width=\unitlength,page=1]{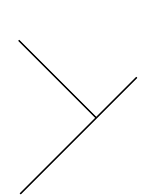}}%
    \put(0.0063832,0.9943573){\color[rgb]{0,0,0}\makebox(0,0)[lt]{\lineheight{1.25}\smash{\begin{tabular}[t]{l}$b_\Gamma$\end{tabular}}}}%
    \put(0.50687026,0.2704277){\color[rgb]{0,0,0}\rotatebox{45}{\makebox(0,0)[lt]{\lineheight{1.25}\smash{\begin{tabular}[t]{l}$\mathcal C_0^+$\end{tabular}}}}}%
    \put(0,0){\includegraphics[width=\unitlength,page=2]{naked-sing-CSS.pdf}}%
    \put(0.34933324,0.56317121){\color[rgb]{0,0,0}\rotatebox{-45}{\makebox(0,0)[lt]{\lineheight{1.25}\smash{\begin{tabular}[t]{l}$\mathcal C_0^-$\end{tabular}}}}}%
    \put(0,0){\includegraphics[width=\unitlength,page=3]{naked-sing-CSS.pdf}}%
  \end{picture}%
\endgroup%
}
\caption{Continuous self-similarity}
\end{subfigure}
\begin{subfigure}{.48\textwidth}
\centering{
\def\svgwidth{12pc}
\begingroup%
  \makeatletter%
  \providecommand\color[2][]{%
    \errmessage{(Inkscape) Color is used for the text in Inkscape, but the package 'color.sty' is not loaded}%
    \renewcommand\color[2][]{}%
  }%
  \providecommand\transparent[1]{%
    \errmessage{(Inkscape) Transparency is used (non-zero) for the text in Inkscape, but the package 'transparent.sty' is not loaded}%
    \renewcommand\transparent[1]{}%
  }%
  \providecommand\rotatebox[2]{#2}%
  \newcommand*\fsize{\dimexpr\f@size pt\relax}%
  \newcommand*\lineheight[1]{\fontsize{\fsize}{#1\fsize}\selectfont}%
  \ifx\svgwidth\undefined%
    \setlength{\unitlength}{78.59154168bp}%
    \ifx\svgscale\undefined%
      \relax%
    \else%
      \setlength{\unitlength}{\unitlength * \real{\svgscale}}%
    \fi%
  \else%
    \setlength{\unitlength}{\svgwidth}%
  \fi%
  \global\let\svgwidth\undefined%
  \global\let\svgscale\undefined%
  \makeatother%
  \begin{picture}(1,1.18713699)%
    \lineheight{1}%
    \setlength\tabcolsep{0pt}%
    \put(0.08812372,0.32206218){\color[rgb]{0,0,0}\rotatebox{90}{\makebox(0,0)[lt]{\lineheight{1.25}\smash{\begin{tabular}[t]{l}$r=0$\end{tabular}}}}}%
    \put(0.0063832,0.9943573){\color[rgb]{0,0,0}\makebox(0,0)[lt]{\lineheight{1.25}\smash{\begin{tabular}[t]{l}$b_\Gamma$\end{tabular}}}}%
    \put(0.50687026,0.2704277){\color[rgb]{0,0,0}\rotatebox{45}{\makebox(0,0)[lt]{\lineheight{1.25}\smash{\begin{tabular}[t]{l}$\mathcal C_0^+$\end{tabular}}}}}%
    \put(0,0){\includegraphics[width=\unitlength,page=1]{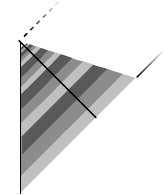}}%
    \put(0.29625709,0.62780191){\color[rgb]{0,0,0}\rotatebox{-45}{\makebox(0,0)[lt]{\lineheight{1.25}\smash{\begin{tabular}[t]{l}$\mathcal C_0^-$\end{tabular}}}}}%
    \put(0.3770897,0.95017086){\rotatebox{-17.35051561}{\makebox(0,0)[lt]{\lineheight{1.25}\smash{\begin{tabular}[t]{l}$\delta \log|u| = \Delta$\end{tabular}}}}}%
    \put(0,0){\includegraphics[width=\unitlength,page=2]{naked-sing-DSS.pdf}}%
  \end{picture}%
\endgroup%
}
\caption{Discrete self-similarity}
\end{subfigure}
\caption{Penrose diagram and illustration of the naked singularities with continuous and discrete self-similarity. The self-similarity is only depicted in a bounded region as the data on $C_0^+$ have to be truncated in order to obtain an asymptotically flat $C_0^+$.}
\label{fig:css-dss}
\end{figure}

\paragraph{Continuous self-similarity.}  
In \cite{C94}, Christodoulou constructed naked singularities based on a continuous self-similar ansatz for the solution. The singular nature of the spacetimes is a result of the blow-up of the scalar field towards $b_\Gamma$ along the ingoing self-similar cone $C_0^-$. The additional continuous symmetry reduces the PDE system \eqref{Einstein Scalar Field} to an autonomous ODE system, which plays an essential role in the construction. The proof is divided into two parts:
\begin{enumerate}
    \item[(i)] the construction of the exterior-naked singularity region $\mathcal R_{\mathrm{Ext}}$
    \item[(ii)] attaching an interior fill-in $\mathcal R_{\mathrm{Int}}$.
\end{enumerate}
In each region, the solution is obtained from a suitable orbit of the autonomous ODE system. A key feature of Christodoulou's examples is the limited regularity on the past cone of the singularity. In particular, the exterior-naked singularity region $\mathcal R_{\mathrm{Ext}}$ is already mildly irregular at $C_0^-$. Thus, when viewed in the context of the Cauchy problem, the data have an initial mild singularity at $C_0^- \cap C_0^+$. We refer to \cref{sec:intro-k-self-similar} for a more detailed discussion. These naked singularities have been generalized recently in \cite{S22} to solutions with an asymptotically continuous self-similar profile. The strategy of dividing the problem into (i) and (ii) is also used in the recent groundbreaking work \cite{RSR23, SR22} of Rodnianski and Shlapentokh-Rothman who constructed naked singularities for the Einstein vacuum equations, necessarily outside of spherical symmetry, for which the continuous self-similarity is generalized to a twisted self-similarity. We will discuss this work in more detail in \cref{sec:intro-k-self-similar}. 

\paragraph{Discrete self-similarity.} 
In the numerical study \cite{Cho93} on critical gravitational collapse for \eqref{Einstein Scalar Field}, Choptuik first observed naked singularities with asymptotically discretely self-similar symmetry to occur on the black hole formation threshold---the interface in the moduli space of solutions between dispersion and collapse. In these spacetimes, the singularity is caused by bounded infinite oscillations of the scalar field along the past light cone $C_0^-$ of the singular point $b_\Gamma$. Further numerical studies showed that all naked singularity solutions obtained using Choptuik's procedure asymptote to a universal solution with exact discretely self-similar symmetry as the singularity $b_{\Gamma}$ is approached, see the review article \cite{Gundlach2007}. We refer to this conjectured universal solution as the \textit{conjectured Choptuik solution}. Note that in view of the exact discrete self-similarity, the conjectured Choptuik solution would not itself be asymptotically flat, although it would occur as the limit of asymptotically flat solutions as the singularity is approached. 

\subsection{The main theorem: construction of exterior-naked singularity regions}
 
To date, no rigorous construction of discretely self-similar naked singularities has been achieved. As explained above, all examples to date have continuous self-similar symmetry and are of limited regularity. The difficulty arises from the fact that, unlike continuous self-similarity, imposing discrete self-similarity does not reduce the system to ODEs and one is still faced with a system of PDEs.

The main result of this paper is to address part (i) of the strategy outlined above and construct a class of exterior-naked singularity regions $\mathcal R_{\mathrm{Ext}}$ with discretely self-similar profile along $C_0^-$ arising from initial data which are moreover smooth away from the singularity.

\begin{theorem}\label{main theorem}
  For any $\Delta>0$, there exist spherically symmetric exterior-naked singularity solutions to \eqref{Einstein Scalar Field} in the region $\mathcal R_{\mathrm{Ext}} \doteq \{-1\leq u<0,\ v\geq0\}$ with a future incomplete $\mathcal{I}^+$ and a Penrose diagram as depicted in \cref{fig:main-thm}. In addition, the following hold.
  \begin{enumerate}
      \item The solutions are smooth in $\mathcal R_{\mathrm{Ext}}$, in particular up to and including the ingoing cone $C^-_0:=[-1,0)\times \{0\}$, i.e. they are smooth away from the singularity $b_\Gamma:=(0,0)$.
      \item Along the ingoing cone $C_0^-$, the solutions satisfy the discrete self-similarity properties 
\begin{equation}\label{self similar properties thm}
    \phi\circ  \theta_\Delta=\phi, \quad \mu\circ \theta_\Delta=\mu, \quad \frac{\partial_{v}  \phi}{\partial_v r} \circ \theta_\Delta = e^{\Delta} \frac{\partial_{v}\phi}{\partial_v r},
\end{equation}
where $\theta_\Delta (u,0) = (e^{-\Delta}u, 0)$ and the gauge condition $ r|_{C_0^-} = -u$ holds. 
\item The scalar field~$\phi$ and the mass aspect ratio $\mu= \frac{2m}{r} \doteq  1+4 \Omega^{-2} \partial_u r \partial_v r$ remain small in amplitude  \begin{equation}  \sup_{C_0^-} |\phi|\ll_{\Delta} 1, \quad  \sup_{C_0^-} |\mu| \ll_{\Delta} 1
\label{eq:smallness-hawking-mass}
\end{equation} but undergo infinite oscillations along $C_0^-$ as the singularity $b_\Gamma$ is approached.
  \end{enumerate} 
\begin{figure}[H]
\centering{
\def\svgwidth{11pc}
\begingroup%
  \makeatletter%
  \providecommand\color[2][]{%
    \errmessage{(Inkscape) Color is used for the text in Inkscape, but the package 'color.sty' is not loaded}%
    \renewcommand\color[2][]{}%
  }%
  \providecommand\transparent[1]{%
    \errmessage{(Inkscape) Transparency is used (non-zero) for the text in Inkscape, but the package 'transparent.sty' is not loaded}%
    \renewcommand\transparent[1]{}%
  }%
  \providecommand\rotatebox[2]{#2}%
  \newcommand*\fsize{\dimexpr\f@size pt\relax}%
  \newcommand*\lineheight[1]{\fontsize{\fsize}{#1\fsize}\selectfont}%
  \ifx\svgwidth\undefined%
    \setlength{\unitlength}{99.52730915bp}%
    \ifx\svgscale\undefined%
      \relax%
    \else%
      \setlength{\unitlength}{\unitlength * \real{\svgscale}}%
    \fi%
  \else%
    \setlength{\unitlength}{\svgwidth}%
  \fi%
  \global\let\svgwidth\undefined%
  \global\let\svgscale\undefined%
  \makeatother%
  \begin{picture}(1,0.99817224)%
    \lineheight{1}%
    \setlength\tabcolsep{0pt}%
    \put(0.79002683,0.84372848){\color[rgb]{0,0,0}\rotatebox{-45}{\makebox(0,0)[lt]{\lineheight{1.25}\smash{\begin{tabular}[t]{l}$\mathcal I^+$\end{tabular}}}}}%
    \put(0,0){\includegraphics[width=\unitlength,page=1]{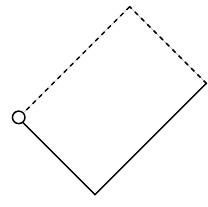}}%
    \put(-0.00091879,0.47907171){\color[rgb]{0,0,0}\makebox(0,0)[lt]{\lineheight{1.25}\smash{\begin{tabular}[t]{l}$b_\Gamma$\end{tabular}}}}%
    \put(0.70469565,0.16325318){\color[rgb]{0,0,0}\rotatebox{45}{\makebox(0,0)[lt]{\lineheight{1.25}\smash{\begin{tabular}[t]{l}$\mathcal C_0^+$\end{tabular}}}}}%
    \put(0.65613498,0.9705381){\color[rgb]{0,0,0}\makebox(0,0)[lt]{\lineheight{1.25}\smash{\begin{tabular}[t]{l}$i^{\mathrm{naked}}$\end{tabular}}}}%
    \put(0.42464003,0.00685392){\color[rgb]{0,0,0}\makebox(0,0)[lt]{\lineheight{1.25}\smash{\begin{tabular}[t]{l}$p$\end{tabular}}}}%
    \put(0,0){\includegraphics[width=\unitlength,page=2]{naked-sing-exterior.pdf}}%
    \put(0.27909655,0.31004515){\color[rgb]{0,0,0}\rotatebox{-45}{\makebox(0,0)[lt]{\lineheight{1.25}\smash{\begin{tabular}[t]{l}$\mathcal C_0^-$\end{tabular}}}}}%
  \end{picture}%
\endgroup%
}
\caption{Penrose diagram of an exterior-naked singularity region constructed in \cref{main theorem}. Along the ingoing cone $C_0^-$, the solution is discretely self-similar and satisfies \eqref{self similar properties thm}. The singularity at $b_\Gamma$ is characterized by the infinite but uniformly bounded oscillations of the scalar field $\phi$ and the mass aspect ratio $\mu$ towards $b_\Gamma$. }
\label{fig:main-thm}
\end{figure}
\end{theorem}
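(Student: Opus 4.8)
The plan is to exploit the scaling symmetry of \eqref{Einstein Scalar Field}: if $(r,\Omega,\phi)(u,v)$ is a solution then so is $(a\,r,\,\Omega,\,\phi)(u/a,v/a)$ for any $a>0$, and discrete self-similarity with period $\Delta$ is precisely invariance under this map with $a=e^{\Delta}$. I would pass to self-similar coordinates $s:=-\log(-u)\in[0,\infty)$ and $y:=v/(-u)\in[0,\infty)$, in which $C_0^-=\{y=0\}$, the endpoint $b_\Gamma$ corresponds to $s\to+\infty$, and the relations \eqref{self similar properties thm} express exactly that the gauge-invariant quantities $\phi$, $\mu$ and $\partial_v\phi/\partial_v r$, restricted to $\{y=0\}$, are $\Delta$-periodic in $s$ (the mild $s$-drift of $\Omega$ forced by the Raychaudhuri equation in the gauge $r|_{C_0^-}=-u$ cancels in all three of these quantities). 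Throughout I would use the standard double-null reduction of \eqref{Einstein Scalar Field} to a first-order system for $r$, $\partial_u r$, $\partial_v r$, the renormalised derivatives $r\partial_u\phi$, $r\partial_v\phi$, the mass aspect $\mu$ and the conformal factor $\Omega$, together with the Raychaudhuri equations and the monotonicity of the Hawking mass.

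First I would fix the data on $C_0^-$. Prescribe $\phi|_{C_0^-}(u)=\varepsilon\,\psi(-\log(-u))$ with $\psi$ smooth, $\Delta$-periodic and non-constant, and $\varepsilon=\varepsilon(\Delta)$ small; the gauge $r|_{C_0^-}=-u$ fixes $\partial_u r|_{C_0^-}=-1$, and the constraints along $C_0^-$ (the $u$-Raychaudhuri equation for $\Omega$, the mass equation for $m$, and the wave equation for $r\partial_v\phi$, with $\partial_v r$ determined algebraically by $\Omega$ and $\mu$) become a closed first-order ODE system in $s$ with $\Delta$-periodic coefficients. Each of these ODEs has an unstable homogeneous mode $e^s$, so the requirement of boundedness as $s\to+\infty$ selects a unique solution, which is automatically $\Delta$-periodic; fixing the corner values of $m$ and of $\partial_v\phi$ accordingly yields $\mu|_{C_0^-}$ and $(\partial_v\phi/\partial_v r)|_{C_0^-}$ that are $\Delta$-periodic and $O(\varepsilon)$-small, with $0\le\mu=O(\varepsilon^2)$ in particular.

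The main step is to construct a genuinely discretely self-similar solution in a wedge $\mathcal W=\{-1\le u<0,\ 0\le v\le Y_0(-u)\}$ adjacent to $C_0^-$, for a small constant $Y_0=Y_0(\Delta)$. By scaling it suffices to solve the characteristic initial value problem on one fundamental annulus $\mathcal W\cap\{e^{-\Delta}\le-u\le1\}$, with data on the ingoing segment of $C_0^-$ (just fixed, and $O(\varepsilon)$-small) and on the outgoing cone $\{u=-1\}$, subject to the self-similarity matching condition relating the development on $\{u=-e^{-\Delta}\}$ to the data on $\{u=-1\}$ via $\theta_\Delta$. I would set this up as a fixed-point problem for the (otherwise free, modulo corner compatibility) data on $\{u=-1\}$: because the fixed $C_0^-$ data is $O(\varepsilon)$-small and the annulus is thin ($Y_0$ small), the map sending such data to the $\theta_\Delta$-rescaled restriction of its development to $\{u=-e^{-\Delta}\}$ is a contraction on a ball of smooth, $O(\varepsilon)$-small functions in a high-order norm; its fixed point, propagated by iterating $\theta_\Delta$, gives a smooth discretely self-similar solution on $\mathcal W$ with $\mu<1$ realising \eqref{self similar properties thm} on $C_0^-$. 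I expect this to be the main obstacle: unlike in the continuously self-similar case, imposing discrete self-similarity does not reduce \eqref{Einstein Scalar Field} to an autonomous ODE system, so there is no ODE orbit to follow as in \cite{C94}; instead one must close nonlinear estimates for the full PDE system that are uniform in the self-similar time $s$---equivalently, do not accumulate over the infinitely many periods as $b_\Gamma$ is approached---and simultaneously uniform at every order of differentiation, which is precisely what produces smoothness up to $C_0^-$ and is the reason the construction needs the mass aspect to be small.

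Finally I would extend the solution from $\mathcal W$ to all of $\mathcal R_{\mathrm{Ext}}=\{-1\le u<0,\ v\ge0\}$ and read off the conclusions. The outer boundary $\{v=Y_0(-u)\}$ of $\mathcal W$ is a spacelike curve joining $(-1,Y_0)$ to $b_\Gamma$; taking the data induced there, the explicit self-similar data on the rest of $C_0^-$, and smooth $O(\varepsilon)$-small data on $\{u=-1\}$ that is compactly supported in $v$ (so that $C_0^+$ is asymptotically flat) and compatible to infinite order at the corner, I would run a continuity/bootstrap argument---using wave-equation energy estimates together with the Raychaudhuri and Hawking-mass monotonicity and the smallness of $\mu$---to obtain a smooth solution on $\mathcal R_{\mathrm{Ext}}$ with $\mu<1$ and $\partial_v r>0$ everywhere (hence no trapped surfaces and the Penrose diagram of \cref{fig:main-thm}), with each outgoing cone $\{u=\text{const}\}$ reaching future null infinity as $v\to\infty$. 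The incompleteness of $\mathcal I^+$ then follows as in \cite{C94}: the affine (Bondi) length of $\mathcal I^+$ is finite as $u\to0^-$, since the limiting lapse stays bounded while the self-similar scaling near $b_\Gamma$ makes the relevant integral $\int_{-1}^{0}(\,\cdots\,)\,du$ converge, and $b_\Gamma$ is a genuine terminal singularity because $\psi$ is non-constant and $\Delta$-periodic, so $\phi|_{C_0^-}(u)=\varepsilon\psi(-\log(-u))$ and $\mu|_{C_0^-}$ oscillate infinitely as $u\to0^-$. The discrete self-similarity \eqref{self similar properties thm} holds on $C_0^-$ by construction, the smallness \eqref{eq:smallness-hawking-mass} is the size of the data, and smoothness away from $b_\Gamma$ is the output of the estimates.
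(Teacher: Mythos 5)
Your first step---prescribing a $\Delta$-periodic profile $\phi=\varepsilon\psi(-\log(-u))$ on $C_0^-$ and observing that the constraint ODEs along $C_0^-$ (for $\Omega$, $\Omega^{-2}\partial_v r$ and $\Omega^{-2}\partial_v(r\phi)$) have an unstable homogeneous mode so that boundedness/periodicity rigidly selects the corner values $\partial_v r(p)$, $\partial_v\phi(p)$ with $|\mu|\lesssim\varepsilon^2$---is exactly the paper's \cref{lem:rigidity-of-Omega}, \cref{lem:derivation-of-c1} and \cref{lem:derivation-of-c2}, and the final incompleteness argument via the affine length of $\mathcal I^+$ also matches. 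The architecture diverges, however, at your ``main step,'' and that is where there is a genuine gap. You propose to build an \emph{exactly} discretely self-similar solution in a wedge adjacent to $C_0^-$ by a fixed point for the period map (solve from $\{u=-1\}$ to $\{u=-e^{-\Delta}\}$, pull back by $\theta_\Delta$). First, this is more than the theorem asks for and more than the paper does: \eqref{self similar properties thm} only constrains $\phi$, $\mu$ and the first transverse derivative ratio \emph{along} $C_0^-$, which is achieved purely at the level of characteristic data by the first-order rigidity at $p$; the paper's solutions are only approximately self-similar off $C_0^-$ (see the remark after \cref{main theorem}). Second, the contraction claim is unjustified and, as stated, false: to leading order the period map is $f\mapsto f(\lambda\,\cdot)$ with $\lambda=e^{-(1-\kappa)\Delta}<1$ plus small coupling terms, which is merely non-expansive at zeroth order (constants are fixed points of the principal part), so Banach's theorem does not apply on a ball of smooth small functions. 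One can only hope to contract in norms weighted by powers of $v$ after subtracting the value at $p$---but then exact discrete self-similarity forces the \emph{entire} transverse jet of the solution at $p$ to satisfy an infinite hierarchy of rigidity conditions (the higher-order analogues of \eqref{eq:rigidity-of-partial_vr} and \eqref{eq:value-of-c2}), and producing a fixed point that is smooth up to $v=0$ requires solving this formal-series problem and realizing it; none of this is addressed, and it is precisely the kind of difficulty the paper's construction is designed to avoid.

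A secondary but substantive shortfall: the global existence in $\mathcal R_{\mathrm{Ext}}$ up to $\{u=0\}$, which you compress into ``a continuity/bootstrap argument using energy estimates, Raychaudhuri and mass monotonicity,'' is the technical core of the result. A naive bootstrap accumulates errors over the infinitely many self-similar periods as $u\to 0^-$; the paper needs the region decomposition $\mathcal R_I$--$\mathcal R_{IV}$, the rescaled coordinate $V=v^{1/(1-\kappa)}$, and conjugated energy estimates (by $\exp(DV/u)$ in $\mathcal R_{II}$ and by $|u|^p/V^p$ in $\mathcal R_{III}$) to close; even then one only controls $\partial_u\phi$ up to a loss $|u|^{-p}$ near $\{u=0\}$ rather than the uniform boundedness your sketch implicitly assumes. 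You should either import these quantitative mechanisms or explain what replaces them.
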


\begin{remark}
    The solutions constructed in \cref{main theorem} have an \emph{exactly} discretely self-similar profile only along the ingoing cone $C_0^-$ and are approximately discretely self-similar away from $C_0^-$. We note that constructing spacetimes with asymptotically flat null infinity $\mathcal I^+$ requires the data on $C_0^+$ to be asymptotically flat, which is incompatible with global discrete self-similarity. However, one should be able to produce non-asymptotically flat spacetimes with exact discrete self-similarity using the extracting procedure of \cite[Theorem~1.2]{RSR18}. These would still have the Penrose diagram of \cref{fig:main-thm}, but would not satisfy the usual notions of asymptotical flatness at $\mathcal{I}^+$, as for instance the Bondi mass would diverge there.
\end{remark}

\subsection{Attaching an interior fill-in and constructing a naked singularity from smooth data} 
Our main result provides a construction of an exterior-naked singularity region $\mathcal R_{\mathrm{Ext}},$ while it remains an open problem to construct an appropriate interior fill-in $\mathcal R_{\mathrm{Int}}$, in the past of $C_0^-$. 

\begin{openpb*}
Does any of the exterior-naked singularity regions constructed in \cref{main theorem} admit an interior fill-in $\mathcal{R}_{\mathrm{Int}}$ which can be attached smoothly to obtain a naked singularity spacetime $\mathcal{R}_{\rm Int}\cup\mathcal{R}_{\rm Ext}$ as depicted in \cref{fig:naked-singularity-intro}?
\end{openpb*}

We stress that our exterior-naked singularity regions are fundamentally different from those numerically observed by Choptuik in \cite{Cho93}. Specifically, the naked singularities in Choptuik's simulations are incompatible with the smallness condition \eqref{eq:smallness-hawking-mass}.

A natural way to construct an interior solution $\mathcal{R}_{\mathrm{Int}}$ is to make an exactly discrete self-similar ansatz in the interior region $\mathcal R_{\mathrm{Int}}$. Since the self-similar vector field (see already \eqref{eq:S-vector-field}) \begin{equation*}S=u\partial_u + (1-\kappa) v \partial_v \end{equation*}is timelike in $\mathcal R_{\mathrm{Int}}$, this requires constructing periodic-in-time solutions to hyperbolic PDEs. On the other hand, in the region $\mathcal R_{\mathrm{Ext}}$ which is successfully constructed in \cref{main theorem}, the vector field $S$ is spacelike, and we have to construct approximately periodic-in-space solutions. (For the construction in the exterior region there are robust techniques available, even outside spherical symmetry \cite{RSR23}, to which we return in \cref{sec:intro-k-self-similar}.)

An additional challenge in smoothly attaching an interior fill-in $\mathcal{R}_{\mathrm{Int}}$ to the exterior-naked singularity regions of \cref{main theorem} follows from the extension principle of Christodoulou \cite[Theorem~5.1]{C93}, applied to solutions that are at least $C^2$ regular. By the extension principle, for any such interior fill-in, the mass aspect ratio $\mu$ cannot satisfy the smallness bounds \eqref{eq:smallness-hawking-mass} globally on $\mathcal R_{\mathrm{Int}}$, and there would have to be a region in $\mathcal R_{\mathrm{Int}}$ where $\mu\sim 1$.

\subsection{The black hole formation threshold and the conjectured Choptuik solution}
We recall that the numerical simulations in \cite{Cho93} suggest that the black hole formation threshold consists of asymptotically discretely self-similar naked singularities which asymptote to a universal solution---the conjectured Choptuik solution. Based on these numerics we formulate the following open problem.

\begin{openpb*}
Does the black hole formation threshold consist of asymptotically discretely self-similar naked singularities? If so, do all such naked singularities asymptote to a universal discretely self-similar solution---the conjectured Choptuik solution---as the singularity is approached?
\end{openpb*}

A natural first problem in addressing the above is constructing a possible candidate for the conjectured Choptuik solution. Indeed, based on the numerical simulations in \cite{Cho93}, the work \cite{RT19} gives a computer-assisted construction of a smooth, discretely self-similar, spherically symmetric solution to \eqref{Einstein Scalar Field}, defined however only in an open neighborhood of the causal past of an ingoing cone $C_0^-$ with singular vertex.  If it can be proved that one can smoothly attach an exterior discretely self-similar solution, the construction would correspond to a 
natural candidate for the conjectured Choptuik solution.\footnote{By an easy truncation argument in the far region one could then produce asymptotically flat examples, and this would then show the formation of naked singularities from smooth asymptotically flat initial data.}

\subsection{Comparison with the \texorpdfstring{$k$}{k}-self-similar solutions and the vacuum case}
\label{sec:intro-k-self-similar}
We compare the solutions of \cref{main theorem} with the naked singularity spacetimes of Christodoulou \cite{C94} and the examples of Rodnianski and Shlapentokh-Rothman \cite{RSR23,SR22}. Since we construct only exterior-naked singularity regions in \cref{main theorem}, we restrict our comparison to the exterior regions in \cite{C94} and \cite{RSR23} and the behavior on the past light cone of the singularity.

As discussed before, the $k$-self-similar spacetimes constructed in \cite{C94} are continuously self-similar solutions of \eqref{Einstein Scalar Field} and have naked singularities for a certain range of the parameter $k.$ Along the ingoing cone $C_0^-$, the   $k$-self-similar solutions satisfy
\begin{equation*}
    \phi=kt+O(1),\ \mu=\frac{k^2}{1+k^2},
\end{equation*}
where $t\doteq -\log|u|$ denotes the self-similar time. Thus, the singularity is a consequence of the blow-up of the scalar field and the failure of the mass aspect ratio to converge to zero. This is in contrast to the solutions constructed in \cref{main theorem} for which the scalar field and mass aspect ratio are periodic (see \eqref{self similar properties thm}) along $C_0^-$  in terms of the self-similar variable $t$ and the singularity is a consequence of this infinite oscillation.

In the works \cite{RSR23,SR22}, the first examples of the dynamical formation of naked singularities for the Einstein vacuum equations were constructed. The solutions have a twisted self-similar symmetry along $C_0^-$, which represents a novel continuously self-similar mechanism for singularity formation, see also \cite{Shl23}. The construction of the exterior-naked singularity regions in \cite{RSR23} is similar to that in \cref{main theorem} in many aspects. In both spacetimes the singularity is caused by a self-similar profile on $C_0^-,$ with the infinite twisting of the geometry in \cite{RSR23} being replaced by the infinite oscillation in \cref{main theorem}. However, the exterior regions are not assumed to have an exact self-similar symmetry in the future of $C_0^-$, and they turn out to be only asymptotically self-similar. For both constructions the strategy is to first prescribe characteristic initial data compatible with the self-similar symmetry on $C_0^-$ and then prove global existence of the solution. We note that the guideline of \cite{RSR23} is the natural approach to construct exterior-naked singularity regions, in the case where the system of equations does not reduce to ODEs as in \cite{C94}. 

In terms of the regularity of the spacetimes, there is a stark contrast between the solutions of \cref{main theorem} and the examples of \cite{C94} and \cite{RSR23,SR22}. In agreement with the numerical heuristics for discretely self-similar spacetimes, the solutions of \cref{main theorem} are globally smooth, except for $\{u=0\}$ of course. In particular, the solutions are smooth at $C_0^-.$ On the other hand, for the $k$-self-similar and twisted self-similar examples of naked singularities, the construction of the exterior region already restricts the regularity of the spacetime to $C_v^{1,\gamma}$ for some $\gamma >0$ at $C_0^-$, while the solution is smooth away from $C_0^-$.

The solutions of \cref{main theorem} arise from small data by construction. Even though the infinite oscillation of the mass aspect ratio $\mu$ on $C_0^-$ causes the singularity, we obtain that $\mu$ is everywhere small, which is essential in the proof of global existence. On the other hand, the solutions of \cite{C94} do not have any restrictions on the smallness of data, and global existence is obtained by the study of an autonomous ODE system. Finally, the solutions of \cite{RSR23} are in a large data regime on $C_0^-$. Global existence is achieved since the initial data on $C_0^+$ decays rapidly away from $C_0^-$, but this is done at the cost of the aforementioned limited Hölder regularity along $C_0^-$.

\subsection{The instability of naked singularities}
\label{sec:intro-instability}
In \cite{C99}, Christodoulou showed that \emph{all} naked singularity solutions to \eqref{Einstein Scalar Field} are unstable, thereby proving the weak cosmic censorship conjecture for the Einstein-scalar field system in spherical symmetry. The precise formulation of his result required a low regularity class of initial perturbations. The argument crucially relies on rough perturbations that trigger a blue-shift instability along the cone $C_0^-,$ which inevitably leads to the formation of trapped surfaces. Moreover, we note that the continuously self-similar naked singularity solutions in \cite{C94} belong to the low regularity class considered, so the genericity assumption in the formulation of weak cosmic censorship in the low regularity class is indeed necessary. In the smooth class, both the question of the formation of naked singularities and the question of their genericity remain open.
In view of the recent linear analysis in \cite{S24}, the latter appears to be a very subtle and intricate problem.

\subsection{Outline of the proof of \texorpdfstring{\cref{main theorem}}{Theorem 1}}

In the following, we will outline the main ideas used in the proof of \cref{main theorem}. 
We begin by writing down the system of equations \eqref{Einstein Scalar Field} in spherical symmetry expressed in double null coordinates which reduces to the wave equations 
\begin{align}
    \label{partial uv Omega}
\partial_u\partial_{v}\log\Omega^2 & = -2\partial_u\phi\partial_{v}\phi+\frac{\Omega^2}{2r^2}+\frac{2\partial_{v}r\partial_ur}{r^2}\\ 
\label{partial uv r}
    \partial_u(r\partial_{v}r)&=-\frac{1}{4}\Omega^2
\\ \label{wave}
r\partial_u\partial_{v}\phi  &= - \partial_{v}\phi\partial_ur-\partial_u\phi\partial_{v}r 
\end{align}
and the Raychaudhuri equations
\begin{align}\label{partial u}
    \partial_u(\Omega^{-2}\partial_ur)&=-r\Omega^{-2}(\partial_u\phi)^2\\
\label{partial v}
    \partial_{v}(\Omega^{-2}\partial_vr)&=-r\Omega^{-2}(\partial_{v}\phi)^2.
\end{align}
Moreover, the \emph{Hawking mass} \begin{equation*}
    m \doteq \frac{r}{2} (1+4 \Omega^{-2} \partial_v r \partial_u r)
\end{equation*}
satisfies 
\begin{equation}\label{eq:hawking-mass-v}
\partial_v m = 2r^2\Omega^{-2}(- \partial_ur) (\partial_v\phi)^2.
\end{equation}
\paragraph{Discretely self-similar characteristic data (\cref{initial data section}).}
We will set up smooth characteristic initial data on the cones $C_0^-\cup_p C_0^+ = \{ -1\leq u <0 \}\times \{ v=0\} \cup  \{ u=-1\}\times \{ v\geq 0\}$, which intersect at the sphere located at $p= (-1,0)$.
Characteristic data for the system \eqref{Einstein Scalar Field} consist of prescribing $r,\Omega,\phi$ which satisfy \eqref{partial u} on $C_0^-$ and \eqref{partial v} on $C_0^+,$ together with suitable gauge conditions, see \cref{sec:gauge-conditions}.

The starting point in our construction is to consider smooth, discretely self-similar functions $\phi: C_0^- \times \{ 0\} \rightarrow\mathbb{R}$ satisfying
\begin{equation}\label{ansatz for integral phi}
    \kappa\Delta=\int_{-1}^{-e^{-\Delta}}(-u')\big(\partial_u\phi\big)^2du'
\end{equation}for some small constant $0<\kappa\ll1$ as well as the additional pointwise bounds
\begin{equation}\label{ansatz for phi C0-}
    \phi(e^{-\Delta}u,0)=\phi(u,0),\ |\phi(u,0)|\lesssim\sqrt{\kappa},\ |\partial_u\phi(u,0)|\lesssim\sqrt{\kappa}/|u|.
\end{equation}
The parameter $\kappa$ in \eqref{ansatz for integral phi} can be seen as the analog of $k^2$ in the $k$-self-similar spacetimes in \cite{C94}. 

The lapse function $\Omega^2$ is determined by $\phi$ on $C_0^-$ according to \eqref{partial u}, and  satisfies the scaling relation
\begin{equation}
    \Omega^2(e^{- \Delta n}u,0) =e^{-n\Delta\kappa} \Omega^2(u,0).
\end{equation}
We remark that the scaling of $\Omega^2$ implies that the underlying vector field which determines the self-similar scaling \eqref{self similar properties thm} is given by
\begin{equation}\label{eq:S-vector-field}
S= u\partial_u + (1-\kappa) v \partial_v.
\end{equation}

Enforcing the discrete self-similarity \eqref{self similar properties thm} along $C_0^-$ introduces \emph{rigidity} conditions on the outgoing data on $C_0^+$ to first order at $p$. In particular, from the discrete self-similarity property \eqref{self similar properties thm}, equations \eqref{partial uv r} and \eqref{wave}, we obtain that the transverse derivatives $\partial_v r$ and $\partial_v\phi$ at $p$ are completely determined by the data on $C_0^-$.\footnote{We are only concerned with discrete self-similarity to first order at  $C_0^-$, but higher order versions of \eqref{self similar properties thm} would lead to higher order rigidity for the transverse derivatives at $p$.}
An essential aspect of our discretely self-similar construction is that the rigidity condition give us the smallness properties
\begin{equation}
\big|\partial_v\phi(p)\big|\lesssim\sqrt{\kappa}, \quad     \mu(p)\lesssim\kappa.
\end{equation}
This is in contrast with the $k$-self-similar spacetimes in \cite{C94}, where the outgoing data $\partial_v\phi(p)$ is of size $1/k,$ which is never small in the range of the parameter $k$. At higher order on $C_0^+$, we have the freedom to prescribe $\phi$ smoothly\footnote{Our proof also works if $\phi$ is at least $C^{1,\gamma}_{v}\big([0,\infty)\big)$ regular, where $\gamma=2\kappa/(1-\kappa).$ This is the same regularity exponent as in \cite{RSR23}.}, provided that certain smallness properties and asymptotically flat conditions hold.

\paragraph{Construction of the exterior region $\mathcal R_{\mathrm{Ext}}$ (\cref{sec:exterior-region}).} Once we set up the characteristic initial data, we prove that it leads to a global solution in the exterior region $\mathcal R_{\mathrm{Ext}}$. Our argument for global existence follows the strategy of \cite{RSR23} and relies only on the smallness of the initial data and bounds compatible with self-similarity, without requiring the solution to be exactly self-similar. In our proof, we decompose the spacetime into certain regions, and use bootstrap arguments to show suitable pointwise estimates for the solution in each region. In the last step, these bounds are then used to prove that $\mathcal{I}^+$ is incomplete, so the spacetime constructed represents indeed an exterior-naked singularity region.

We consider a small constant $\underline{v}$ such that $0<\kappa\ll\underline{v}\ll 1$, and the regions depicted in \cref{fig:regions}
{\allowdisplaybreaks\begin{align*}\mathcal R_I&=\bigg\{(u,v) \in \mathcal R_{\mathrm{Ext}}: v\in[0,\underline{v}^{1-\kappa}],\ \frac{v}{|u|^{1-\kappa}}\leq\underline{v}^{1-\kappa}\bigg\}\\ 
\mathcal R_{II}&=\bigg\{(u,v)\in \mathcal R_{\mathrm{Ext}}:  v\in[0,\underline{v}^{-1+\kappa}],\ \underline{v}^{1-\kappa}\leq\frac{v}{|u|^{1-\kappa}}\leq\underline{v}^{-1+\kappa}\bigg\}\\
\mathcal R_{III}&=\bigg\{(u,v) \in \mathcal R_{\mathrm{Ext}}:  v\in[0,1],\ \frac{v}{|u|^{1-\kappa}}\geq\underline{v}^{-1+\kappa}\bigg\}\\
\mathcal R_{IV}&=\bigg\{(u,v) \in \mathcal R_{\mathrm{Ext}}:  v\in[1,\infty),\ \frac{v}{|u|^{1-\kappa}}\geq\underline{v}^{-1+\kappa}\bigg\}.
\end{align*}}

\begin{figure}[ht]
\centering{
\def\svgwidth{15pc}
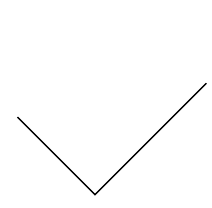}
\caption{The decomposition of $\mathcal R_{\mathrm{Ext}}$ into the regions $\mathcal R_I$--$\mathcal R_{IV}$}
\label{fig:regions}
\end{figure}

In region $\mathcal R_I$, we propagate the pointwise bounds on the solution on the ingoing cone $C_0^-,$ using the smallness of $\underline{v}.$ The estimates are straightforward for the quantities satisfying transport equations in the $\partial_v$ direction, such as $\Omega^{-2}\partial_vr,\ \partial_u\phi,\ \partial_ur,$ and $\partial_u\log\Omega^2.$ On the other hand, in order to bound $\Omega^{-2}\partial_v\phi$ we integrate the $\partial_u$ equation satisfied by $\Omega^{-2}\partial_v\phi(u,v)-\Omega^{-2}\partial_v\phi(u,0).$ 

In regions $\mathcal R_{II}$, $\mathcal R_{III}$, and $\mathcal R_{IV}$, we consider the regular change of coordinates $V=v^{\frac{1}{1-\kappa}}$ motivated by the  scaling vector field $S$ in \eqref{eq:S-vector-field}, and we denote the new lapse function by $\bar{\Omega}.$ We prove estimates in region $\mathcal R_{II}$ by conjugating the equations with $\exp\big(D\cdot V/u\big)$, for some large constant $D.$ These estimates are analogous to applying a local-in-time energy estimate in the ``time'' variable $V/u$, with data on the spacelike hypersurface $\big\{V=-\underline{v}u\big\}.$  

In region $\mathcal R_{III}$ we prove pointwise estimates using the smallness of $\underline{v}$ for the quantities satisfying transport equations in the $\partial_u$ direction, such as $\bar{\Omega}^{-2}\partial_ur,\ \partial_V\phi,\ \partial_Vr,$ and $\partial_V\log\bar{\Omega}^2.$ However, in order to bound $\partial_u\phi,$ we conjugate the equation for $r\partial_u\phi$ by $|u|^p/V^p,$ for some small constant $p>0.$ In particular, we obtain that $\phi \in C_u^{1-p}$, but we cannot conclude that $\partial_u\phi$ and $\bar{\Omega}$ extend continuously to $\{u=0\}.$ Nevertheless, the quantitative bounds obtained will suffice for our purposes. 

We prove estimates in region $\mathcal R_{IV}$ using the a~priori bound on the Hawking mass obtained from the initial data bound and the monotonicity properties. It is convenient to split region $\mathcal R_{IV}$ into two sub-regions: $\mathcal R_{IV}\cap\{v\leq\underline{v}^{-1+\kappa}\}$ with similar pointwise bounds to those of region $\mathcal R_{III}$, and $[-1,0)\times[\underline{v}^{-1+\kappa},\infty)$ with bounds specific to an asymptotically flat region. Finally, we then use the estimates in region $\mathcal R_{IV}$ to show that $\mathcal{I}^+$ is incomplete.
\medskip

\textbf{Acknowledgments.}
The authors would like to express their gratitude to   Igor Rodnianski for many stimulating and valuable discussions. The authors also thank Mihalis Dafermos for useful conversations.  

\section{Discretely self-similar characteristic data}\label{initial data section}

The characteristic initial data are given by $r,\Omega,\phi$ which satisfy Raychaudhuri's equations \eqref{partial u} on $C_0^-$ and \eqref{partial v} on $C_0^+.$ In this section we set up suitable characteristic initial data, which are consistent with ingoing discrete self-similarity along $C_0^-$ and asymptotic flatness along $C_0^+.$

\subsection{Underlying manifold and gauge conditions}
\label{sec:gauge-conditions} 

We will consider the maximal globally hyperbolic development of data imposed on the bifurcate double null cones $C_0^- = \{ -1 \leq u < 0\} \times  \{ v=0\} $ and $C_0^+ = \{  u =-1\} \times  \{ 0 \leq v < \infty \} $. The bifurcation sphere is $p = \{ u=-1\} \times \{ v=0\}.$ We will denote the future domain of dependence of $C_0^- \cup_p C_0^+$ with $\mathcal R_{\mathrm{ext}} = \{ -1 \leq u <0\} \times \{ 0\leq v <\infty \}$.

On the incoming cone $C_0^-$ we impose 
\begin{equation*}
    r(u,0) = -u.
\end{equation*}
This can be interpreted as choosing the gauge condition $\partial_u r = -1$ along $C_0^-$ and $r(p)=1$. We also impose that   $\phi(p)=0$.

On the outgoing cone $C_0^+$ we impose the gauge \begin{equation}\label{eq:gauge-for-omega-outgoing}
\partial_v \Omega^2 =0.
\end{equation}
This fixes the gauge upon specifying $\Omega^2(p)$ which will be done later. We are now in the position of defining characteristic data with ingoing discrete self-similarity.

    \subsection{Characteristic data with ingoing discrete self-similarity}
\begin{definition}[Characteristic data with ingoing discrete self-similarity]\label{def:discrete-self-similar}
    Let $N > 0$ be a regularity index for the tangential derivatives. We say that the seed $\phi(\cdot,0):C_0^- \to\mathbb{R}$, $\phi(0,\cdot)\colon C_0^+ \to \mathbb R$ with $\phi(p)=0$ gives rise to a $C^N$-regular characteristic data set on $C_0^- \cup_p C_0^+$ with \emph{ingoing discrete self-similarity} if there exist constants  $\Delta>0$, $0<\kappa<1$, and $C_{data}>0$ such that upon specifying \begin{equation}\label{eq:Omega-on-vertex}
\Omega^2(p)=\frac{1}{1-\kappa}\end{equation} and imposing the gauge conditions from \cref{sec:gauge-conditions}, the following conditions hold:
\begin{enumerate} 
\item The scalar field satisfies $\phi (\cdot, 0) \in C^{N+1} (C_0^-)$ and $\phi (0,\cdot) \in C^{N+1} (C_0^+)$.
\item The scalar field is discretely self-similar on $C_0^-$, i.e., \begin{equation}\label{eq:self-similar-property-phi}
\phi(e^{-\Delta}u,0) =\phi(u,0).\end{equation}
\item The constant $\kappa$ satisfies
\begin{equation}
    \label{eq:constraint-on-phi}
\kappa  = \frac{F(-e^{-\Delta})}{\Delta},
\end{equation}
where 
\begin{equation}\label{eq:definition-F}
   F(u) \doteq  \int_{-1}^{u}(-u')\big(\partial_u\phi(u',0)\big)^2du'.
\end{equation}
\item The scalar field satisfies the self-similar bounds \begin{equation}\label{eq:bounds-on-incoming-cone}
    \ |\phi(u,0)|\leq C_{data} \sqrt{\kappa},\ |\partial_u\phi(u,0)|\leq  C_{data} \sqrt{ \kappa }|u|^{-1} 
\end{equation}
on $C_0^-$ .
\item The mass aspect ratio $\mu \doteq  \frac{2m}{r} $ is discretely self-similar on $C_0^-$, i.e., 
\begin{equation} \label{eq:self-similarity-for-mu}
    \mu(e^{-\Delta}u,0) =\mu(u,0).
\end{equation}
\item  The first transverse derivatives of the scalar field and of the radiation field are discretely self-similar along $C_0^-$
\begin{equation} \label{eq:self-similarity-outgoing-der}
 \frac{\partial_v (r\phi ) }{\partial_v r  } (e^{-\Delta}u,0) = \frac{\partial_v (r\phi)}{\partial_v r  }  (u,0)
\end{equation}
and \begin{equation} \label{eq:self-similarity-outgoing-der-2}
\frac{\partial_v \phi }{\partial_v r  } (e^{-\Delta}u,0) =e^{\Delta} \frac{\partial_v \phi }{\partial_v r  }  (u,0).
\end{equation}
\item The scalar field satisfies the asymptotically flat conditions on $C_0^+$ \begin{equation}\label{ansatz for phi C0+}
    \big|\partial_v\phi(-1,v)\big|\leq C_{data}\sqrt{\kappa}\cdot\frac{1}{(1+v)^2}, \quad     \big|\partial_v^2 \phi(-1,v)\big|\leq C_{data}\sqrt{\kappa}\cdot\frac{1}{(1+v)^{3}}.
\end{equation}
\end{enumerate}
\end{definition}

\begin{remark}
    The equation \eqref{eq:constraint-on-phi} can also be regarded as a definition of  $\kappa$ which measures the size of $\phi$ along $C_0^-$. In our construction, we will take $\kappa$ to be a small quantity. 
    \end{remark}

\begin{remark}
    Note that $\Omega^2$ and the transverse derivatives $\partial_v r$ and  $\partial_v \phi$ along $C_0^-$ are determined by their values at the bifurcation sphere $p$ and \eqref{partial uv r}--\eqref{partial u}. We will prove in \cref{lem:derivation-of-c1} and \cref{lem:derivation-of-c2} in \cref{subsec:consequences-of-discrete-self-similarity} that for discretely self-similar data as in \cref{def:discrete-self-similar}, the values  $\partial_v r(p)$ and $\partial_v \phi(p)$ are rigid.   
\end{remark}

The statement that there exist non-trivial characteristic data with ingoing discrete self-similarity as defined in \cref{def:discrete-self-similar} is the content of the following proposition, which will be proved in \cref{sec:proof-nontrivial-data}.
\begin{proposition}\label{prop:data-non-trivial}
    For any $\Delta >0$,  there exists a $C_{data}=C_{data}(\Delta)>0$ such that for any $\kappa>0$, and $N\geq 1$,  there exists a non-trivial seed function $\phi \colon C_0^- \cup_p C_0^+ \to \mathbb R$ giving rise to $C^k$-regular characteristic data with ingoing self-similarity as defined in \cref{def:discrete-self-similar}.
\end{proposition}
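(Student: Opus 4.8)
### Proof strategy for Proposition~\ref{prop:data-non-trivial}

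The plan is to construct the seed $\phi(\cdot,0)$ on $C_0^-$ first, then to propagate the gauge and constraint equations to compute $\Omega^2$, $\partial_v r(p)$ and $\partial_v\phi(p)$, and finally to prescribe $\phi(0,\cdot)$ on $C_0^+$ so as to match the rigid first-order data at $p$ while satisfying the asymptotic flatness bounds \eqref{ansatz for phi C0+}. The key point is that conditions (1)--(6) only constrain the data on $C_0^-$ (and its first transverse derivatives at $p$), so the two cones can be handled essentially independently once the rigidity values at $p$ are known.

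\emph{Construction on $C_0^-$.} Work in the self-similar time $t=-\log|u|\in(0,\infty)$, so that discrete self-similarity \eqref{eq:self-similar-property-phi} becomes $\Delta$-periodicity of $t\mapsto\phi(-e^{-t},0)$. First I would pick a smooth $\Delta$-periodic function $\psi\colon\mathbb R\to\mathbb R$ that is not identically constant and has mean value such that the antiderivative appearing below is itself $\Delta$-periodic; concretely, set $\phi(-e^{-t},0) = \lambda\,\psi(t)$ for a small amplitude parameter $\lambda>0$ to be fixed, normalized so that $\psi(0)=0$ is compatible with $\phi(p)=0$ (shift $\psi$ so it vanishes at $t=0$, or equivalently choose the base point of periodicity). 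The gauge $r|_{C_0^-}=-u$ means $\partial_u\phi(u',0) = -\lambda\psi'(t)/u'$ along $C_0^-$, hence
\begin{equation*}
F(-e^{-\Delta}) = \int_{-1}^{-e^{-\Delta}}(-u')(\partial_u\phi)^2\,du' = \lambda^2\int_0^{\Delta}\psi'(t)^2\,dt,
\end{equation*}
so defining $\kappa \doteq \lambda^2\Delta^{-1}\int_0^\Delta (\psi')^2\,dt$ gives \eqref{eq:constraint-on-phi} \emph{by construction}; note this also forces the stated interpretation of \eqref{eq:constraint-on-phi} as a definition of $\kappa$, and $\lambda\sim\sqrt{\kappa}$. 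Choosing $\lambda$ small makes $\kappa$ small, and the bounds \eqref{eq:bounds-on-incoming-cone} follow with $C_{data}$ depending only on $\|\psi\|_{C^1([0,\Delta])}$ and $\int_0^\Delta(\psi')^2$, i.e. only on $\Delta$ (after fixing the profile $\psi$ once and for all as a function of $\Delta$). Periodicity of $\mu$ on $C_0^-$, i.e. \eqref{eq:self-similarity-for-mu}, should then be a consequence: integrating Raychaudhuri \eqref{partial u} with $\partial_u r=-1$ gives $\Omega^{-2}$ as an exponential of $F$, and since $F(e^{-\Delta}u) - F(p\text{-based shift})$ reproduces $F(u)$ up to the additive period $\kappa\Delta$, one gets the scaling $\Omega^2(e^{-\Delta}u,0)=e^{-\Delta\kappa}\Omega^2(u,0)$ quoted in the introduction; feeding this plus $\partial_v r(p)$ (computed next) into \eqref{partial uv r} shows $\partial_v r$ scales like $e^{(1-\kappa)(-\Delta)}\cdot(\dots)$ appropriately, and a short computation with the definition $\mu = 1+4\Omega^{-2}\partial_ur\partial_vr$ yields \eqref{eq:self-similarity-for-mu}, and similarly \eqref{eq:self-similarity-outgoing-der}--\eqref{eq:self-similarity-outgoing-der-2} from \eqref{wave}.

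\emph{Rigidity at $p$ and data on $C_0^+$.} By the remark following \cref{def:discrete-self-similar} (to be proved in \cref{lem:derivation-of-c1}, \cref{lem:derivation-of-c2}), the values $\partial_v r(p)$ and $\partial_v\phi(p)$ are determined by $\Omega^2(p)=\tfrac{1}{1-\kappa}$ and the self-similarity relations; I would simply quote those lemmas, obtaining $\partial_v\phi(p) = c_1$ with $|c_1|\lesssim\sqrt\kappa$ and $\partial_v r(p)=c_2$ with the corresponding bound on $\mu(p)\lesssim\kappa$. It then remains to choose $\phi(0,\cdot)$ on $C_0^+$ — recall that on $C_0^+$ the metric data $r,\Omega$ are \emph{not} free but are recovered from $\phi(0,\cdot)$ via the gauge \eqref{eq:gauge-for-omega-outgoing}, the value $\Omega^2(p)$, and Raychaudhuri \eqref{partial v}. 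So the task is: find a $C^{N+1}$ function $v\mapsto\phi(-1,v)$ with $\phi(-1,0)=0$, $\partial_v\phi(-1,0)=c_1$, and satisfying the decay bounds \eqref{ansatz for phi C0+}. This is elementary: take for instance $\phi(-1,v) = c_1 v/(1+v) + (\text{higher-order }C^\infty\text{ corrections of size }\lesssim\sqrt\kappa\,(1+v)^{-1}\text{ with compact-support-type decay})$, or more simply $\phi(-1,v)=c_1(1-(1+v)^{-1})$, whose first and second $v$-derivatives are $c_1(1+v)^{-2}$ and $-2c_1(1+v)^{-3}$, giving \eqref{ansatz for phi C0+} with $C_{data}$ absorbing $|c_1|/\sqrt\kappa\lesssim1$. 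One checks the resulting $r(0,v)$ stays positive and the data set is regular — automatic since $\phi(0,\cdot)$ is smooth and small and $\Omega^2(p)>0$.

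\emph{Main obstacle.} The only genuinely nontrivial point is \textbf{consistency of the self-similarity conditions (5) and (6)} — i.e. that imposing \eqref{eq:self-similar-property-phi} and the constraint \eqref{eq:constraint-on-phi} on the freely chosen seed automatically forces the periodicity of $\mu$ and of the transverse-derivative ratios, rather than imposing extra conditions on $\psi$. This is precisely where the specific value $\Omega^2(p)=\tfrac{1}{1-\kappa}$ in \eqref{eq:Omega-on-vertex} and the form of the scaling vector field \eqref{eq:S-vector-field} enter: one must verify that the $p$-based rigid values $\partial_v r(p),\partial_v\phi(p)$ propagate along $C_0^-$ via \eqref{partial uv r}, \eqref{wave} in a way that is \emph{compatible} with the assumed scaling, which is a closed ODE computation in the variable $t$ whose periodicity is inherited from that of $\Omega^2$ and $\phi$. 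I expect this to reduce, after the change of variables, to checking that a linear ODE with $\Delta$-periodic coefficients has the solution with the prescribed Floquet exponent ($e^{\Delta}$ in \eqref{eq:self-similarity-outgoing-der-2}, trivial exponent in \eqref{eq:self-similar-property-phi} and \eqref{eq:self-similarity-outgoing-der}), which is forced by the rigid initial condition at $p$. Everything else — smallness, regularity $C^N$, non-triviality, and $C_{data}$ depending only on $\Delta$ — is then bookkeeping.
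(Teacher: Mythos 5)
Your proposal is correct and follows essentially the same route as the paper: fix a profile on a fundamental domain, scale its amplitude so that \eqref{eq:constraint-on-phi} holds (you solve for the amplitude explicitly, the paper invokes the intermediate value theorem), impose the rigid values of $\Omega^{-2}\partial_v r$ and $\Omega^{-2}\partial_v(r\phi)$ at $p$ from \eqref{eq:rigidity-of-partial_vr} and \eqref{eq:value-of-c2}, and verify that these initial values select the periodic solution of the periodic-coefficient transport ODEs along $C_0^-$ --- which is exactly the direct computation the paper carries out to establish conditions (5) and (6), and which you correctly identify (in Floquet language) as the only nontrivial point. Your explicit choice $\phi(-1,v)=c_1\big(1-(1+v)^{-1}\big)$ on $C_0^+$ is a concrete instance of the paper's $\phi=\sqrt{\kappa}\,\tilde\psi$.
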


\subsection{Consequences of imposing discrete self-similarity}
\label{subsec:consequences-of-discrete-self-similarity}

 In \cref{initial data section} we use the following convention. For non-negative quantities $A$ and $B$, we write $A\lesssim B$ if there exists a constant $\tilde C>0$ only depending on $C_{data}$ and $\Delta$ from  \cref{def:discrete-self-similar} such that $A\leq \tilde C B$.  We will also write  ``for sufficiently small $0< \kappa \ll 1 $'' if there exists a constant $\kappa_0>0$ sufficiently small and only depending on $C_{data}$, $\Delta$  from  \cref{def:discrete-self-similar} such that $\kappa$ satisfies $0<\kappa \leq \kappa_0$.

\subsubsection{Estimates and rigidity along the ingoing cone $C_0^-$} 

As a consequence of discrete self-similarity along $C_0^-$, we prove rigidity for the lapse $\Omega^2$ in \cref{lem:properties-of-F} and \cref{lem:rigidity-of-Omega}. We also prove rigidity for $\partial_vr$ and $\partial_v\phi$ on $C_0^-$ in \cref{lem:derivation-of-c1} and \cref{lem:derivation-of-c2}, which in particular determines the outgoing data on $C_0^+$ to first order at $p.$

\begin{lemma}\label{lem:properties-of-F}
Consider discretely self-similar data as in  \cref{def:discrete-self-similar}.
    Then, the function $F$ as defined in \eqref{eq:definition-F} satisfies 
\begin{equation}
    F(-1) =0, \quad F(e^{- n\Delta} u ) = F(u) +  n \kappa\Delta.\label{eq:relation-for-F}
\end{equation}
for $n \in \mathbb N$ and 
\begin{equation}\label{eq:bound-on-F}
  \sup_{u \in[-1,0)}  \Big|\frac{F(u)}{\log|u|}\Big|\lesssim \kappa, \quad\left|\frac{d}{du} F(u)\right|\lesssim \kappa |u|^{-1},\ \left|\log\big(|u|^{-\kappa}e^{-F(u)}\big)\right|\lesssim\kappa.
\end{equation}
\begin{proof}
    We compute \begin{align*}F(u) & = \int_{-1}^{ u } (-u') (\partial_u \phi  )^2 du' = \int_{- e^{-\Delta} }^{ e^{-\Delta} u } (- e^\Delta u') (e^{-\Delta} \partial_u \phi)^2 e^{\Delta} du' \\
    & = \int_{-1}^{e^{-\Delta}u } (-u') (\partial_u\phi)^2 du' - \int_{-1}^{-e^{-\Delta} } (-u') (\partial_u\phi)^2 du' = F(e^{-\Delta}u) -  \kappa\Delta,\end{align*}
    where we used a change of variables,  \eqref{eq:self-similar-property-phi} and \eqref{eq:constraint-on-phi}. Iterating this yields \eqref{eq:relation-for-F}. The bounds \eqref{eq:bound-on-F} are a direct consequence of \eqref{eq:bounds-on-incoming-cone} and \eqref{eq:relation-for-F}.
\end{proof}
\end{lemma}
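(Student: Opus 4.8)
The strategy is to first extract the exact functional equation \eqref{eq:relation-for-F} from the ingoing self-similarity of the seed, and then read off the three estimates in \eqref{eq:bound-on-F} from \eqref{eq:relation-for-F} together with the pointwise bounds \eqref{eq:bounds-on-incoming-cone}.

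For \eqref{eq:relation-for-F}: the identity $F(-1)=0$ is immediate from \eqref{eq:definition-F}, since the domain of integration degenerates. For the scaling relation I would differentiate the self-similarity property \eqref{eq:self-similar-property-phi} in $u$, obtaining $\partial_u\phi(e^{-\Delta}u,0)=e^{\Delta}\partial_u\phi(u,0)$, and substitute $u'\mapsto e^{\Delta}u'$ in the integral defining $F(u)$. The three exponential factors that appear --- one from the Jacobian, one from $-u'\mapsto -e^{\Delta}u'$, and a factor $e^{-2\Delta}$ from $(\partial_u\phi(e^{\Delta}u',0))^2 = e^{-2\Delta}(\partial_u\phi(u',0))^2$ --- cancel exactly, leaving $F(u)=\int_{-e^{-\Delta}}^{e^{-\Delta}u}(-u')(\partial_u\phi(u',0))^2\,du'$. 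Splitting this as $\int_{-1}^{e^{-\Delta}u}-\int_{-1}^{-e^{-\Delta}}$ identifies the first term with $F(e^{-\Delta}u)$ and the second with $F(-e^{-\Delta})=\kappa\Delta$, the latter by the defining constraint \eqref{eq:constraint-on-phi}. Hence $F(e^{-\Delta}u)=F(u)+\kappa\Delta$, and iterating yields \eqref{eq:relation-for-F}.

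For \eqref{eq:bound-on-F}: by \eqref{eq:bounds-on-incoming-cone} the integrand of $F$ is bounded pointwise by $C_{data}^2\kappa|u'|^{-1}$, so $|F(u)|\leq C_{data}^2\kappa\log(1/|u|)$, which gives the first estimate, and $\bigl|\frac{d}{du}F(u)\bigr|=(-u)(\partial_u\phi(u,0))^2\leq C_{data}^2\kappa|u|^{-1}$ gives the second. The third estimate is not a consequence of \eqref{eq:bounds-on-incoming-cone} alone, because $\kappa\log(1/|u|)$ and $F(u)$ each grow without bound as $u\to 0$ and only their difference is controlled; here I would use the periodicity \eqref{eq:relation-for-F} to renormalize. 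Writing $|u|=e^{-n\Delta}|u_0|$ for an integer $n\geq 0$ and $|u_0|\in(e^{-\Delta},1]$, relation \eqref{eq:relation-for-F} gives $F(u)=F(u_0)+n\kappa\Delta$ while $-\kappa\log|u|=n\kappa\Delta-\kappa\log|u_0|$, so that $\log(|u|^{-\kappa}e^{-F(u)})=-\kappa\log|u_0|-F(u_0)$; since $-\log|u_0|\in[0,\Delta)$ and $|F(u_0)|\leq C_{data}^2\kappa(-\log|u_0|)<C_{data}^2\kappa\Delta$ by the first estimate, the right-hand side is $\lesssim\kappa$.

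The only genuine subtlety is this last point: the statement bundles together two bounds that follow directly from the data bounds with a third one that really needs the self-similar functional equation, in order to cancel the logarithmically divergent parts of $\kappa\log|u|$ and $F(u)$ against each other on a fundamental domain. The change-of-variables computation behind \eqref{eq:relation-for-F} is routine, but it must be carried out carefully so that all three exponential factors cancel.
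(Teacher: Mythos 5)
Your proposal is correct and follows essentially the same route as the paper: the identical change of variables $u'\mapsto e^{\Delta}u'$ combined with $\partial_u\phi(e^{-\Delta}u,0)=e^{\Delta}\partial_u\phi(u,0)$ and the normalization \eqref{eq:constraint-on-phi} to get $F(e^{-\Delta}u)=F(u)+\kappa\Delta$, and then the pointwise bounds \eqref{eq:bounds-on-incoming-cone} for the estimates. Your explicit reduction of the third bound to a fundamental domain via \eqref{eq:relation-for-F} correctly fills in the step the paper leaves as a ``direct consequence.''
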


\begin{lemma}\label{lem:rigidity-of-Omega} Consider self-similar data as in \cref{def:discrete-self-similar}. Then,
the lapse $\Omega^2(u,0)$ on $C_0^-$ is uniquely determined as
    \begin{equation}\label{eq:Omega-on-cin}
      \Omega^{2}(u,0) =  \frac{1}{1-\kappa} e^{-F(u)},
    \end{equation}
    where $F$ is defined in \eqref{eq:definition-F}.
   In particular, 
    \begin{equation}
        \Omega^2(e^{- \Delta n}u,0) =e^{-n\Delta\kappa} \Omega^2(u,0).
    \end{equation}
    \begin{proof}
By our gauge condition on $C_0^{-}$ we have $\partial_u r =-1$. Thus, \eqref{partial u} reads 
\begin{equation*}
    \partial_u (\Omega^{-2} ) = -u \Omega^{-2} (\partial_u \phi)^2
\end{equation*}
or
\begin{equation}\label{eq:log-Omega}
    \partial_u \log \Omega^{2} = u (\partial_u \phi)^2 
\end{equation}
which shows \eqref{eq:Omega-on-cin} upon integration and using \eqref{eq:definition-F}. 
    \end{proof}
\end{lemma}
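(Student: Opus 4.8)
The plan is to integrate Raychaudhuri's equation \eqref{partial u} along $C_0^-$, on which the gauge has been essentially fully fixed. First I would use the gauge condition $r(u,0)=-u$ from \cref{sec:gauge-conditions}, which gives $\partial_u r\equiv -1$ on $C_0^-$. Substituting $r=-u$ and $\partial_u r=-1$ into \eqref{partial u}, the left-hand side becomes $\partial_u\big(-\Omega^{-2}\big)$ and the right-hand side becomes $-(-u)\Omega^{-2}(\partial_u\phi)^2$, so that $\partial_u(\Omega^{-2})=-u\,\Omega^{-2}(\partial_u\phi)^2$. Recognizing $\partial_u(\Omega^{-2})=-\Omega^{-2}\partial_u\log\Omega^2$ and dividing by $-\Omega^{-2}$ turns this into the scalar linear transport equation $\partial_u\log\Omega^2=u(\partial_u\phi)^2$ along $C_0^-$, which is \eqref{eq:log-Omega}.

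Next I would integrate \eqref{eq:log-Omega} from the bifurcation sphere $p=(-1,0)$, where $\Omega^2(p)=\tfrac{1}{1-\kappa}$ is prescribed by \eqref{eq:Omega-on-vertex}. This gives $\log\Omega^2(u,0)=\log\tfrac{1}{1-\kappa}+\int_{-1}^{u}u'\big(\partial_u\phi(u',0)\big)^2\,du'=\log\tfrac{1}{1-\kappa}-F(u)$, the last equality being just the definition \eqref{eq:definition-F} of $F$ after extracting the sign; the integral is finite on $[-1,0)$ since it runs over a compact interval avoiding $u'=0$. Exponentiating yields \eqref{eq:Omega-on-cin}, and since $\phi$ and $\Omega^2(p)$ were fixed and no further gauge freedom remains, this also establishes uniqueness of $\Omega^2$ on $C_0^-$. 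For the ``in particular'' scaling statement I would invoke the functional relation \eqref{eq:relation-for-F} already proved in \cref{lem:properties-of-F}, namely $F(e^{-n\Delta}u)=F(u)+n\kappa\Delta$; plugging into \eqref{eq:Omega-on-cin} gives $\Omega^2(e^{-n\Delta}u,0)=\tfrac{1}{1-\kappa}e^{-F(u)}e^{-n\kappa\Delta}=e^{-n\kappa\Delta}\Omega^2(u,0)$.

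I do not expect a genuine obstacle here: the statement is an explicit solution of a first-order linear ODE with prescribed initial value at $p$. The only points needing care are the sign bookkeeping in passing from $\partial_u(\Omega^{-2})$ to $\partial_u\log\Omega^2$ and in matching the orientation of the integral defining $F$, together with the observation that all gauge freedom has indeed been exhausted — $r|_{C_0^-}=-u$ fixes the $u$-reparametrization and the normalization $r(p)=1$, while $\Omega^2(p)=\tfrac{1}{1-\kappa}$ fixes the residual multiplicative constant — so that the formula for $\Omega^2$ is forced rather than chosen. The scaling corollary is then immediate from the already-established functional equation for $F$.
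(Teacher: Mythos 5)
Your argument is correct and coincides with the paper's own proof: substitute the gauge condition $\partial_u r=-1$ into \eqref{partial u} to get $\partial_u\log\Omega^2=u(\partial_u\phi)^2$, integrate from $p$ using \eqref{eq:Omega-on-vertex} and the definition \eqref{eq:definition-F} of $F$, and deduce the scaling from \eqref{eq:relation-for-F}. No issues.
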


\begin{lemma}
\label{lem:derivation-of-c1}
Consider discretely self-similar data as in  \cref{def:discrete-self-similar}. Then, the following holds:
\begin{enumerate}
    \item The quantity $\Omega^{-2}\partial_vr $ is discretely self-similar on $C_0^-$: 
    \begin{equation}\label{dv r DSS}
\Omega^{-2}\partial_vr(u,0)=\Omega^{-2}\partial_vr(e^{-\Delta}u,0).
    \end{equation}
    \item   The transverse derivative $\Omega^{-2} \partial_v r(p)$ at the bifurcation sphere $p$ is determined by the data on $C_0^-$ and satisfies   
 \begin{equation} \label{eq:rigidity-of-partial_vr}
 \Omega^{-2} \partial_v r(p) =\frac{1}{4-4e^{-\Delta(1+\kappa)}}\int_{-1}^{-e^{-\Delta}}   e^{-F(u')} du'.\end{equation}
 \item  The transverse derivative $\Omega^{-2} \partial_v r(u,0)$ along $C_0^-$ is determined by the data on $C_0^-$ and given as
    \begin{equation}\label{dv r formula}
        \Omega^{-2}\partial_vr(u,0)=\frac{e^{F(u)}}{-u}\cdot\bigg(\Omega^{-2} \partial_v r(p)-\frac{1}{4}\int_{-1}^{u}e^{-F(u')}du'\bigg).
    \end{equation}
    \item  For $0<\kappa\ll 1$ sufficiently small, we have the bounds
    \begin{align}\label{dv r bound}
        \sup_{u\in [-1,0)}\big|4\Omega^{-2}\partial_vr-1\big| \lesssim \kappa, \quad   \sup_{u\in [-1,0)}|\mu|\lesssim  \kappa.
    \end{align}
\end{enumerate}
   
\end{lemma}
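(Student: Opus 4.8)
The plan is to exploit that the single evolution equation \eqref{partial uv r} degenerates on $C_0^-$ into a linear transport equation for the transverse derivative $\partial_v r$; combined with the explicit lapse from \cref{lem:rigidity-of-Omega} and the discrete self-similarity of $\mu$ built into \cref{def:discrete-self-similar}, this yields all four assertions by direct computation, in the order (1), (3), (2), (4). For \emph{part (1)}, I would use $\mu = 1 + 4\Omega^{-2}\partial_u r\,\partial_v r$ together with the gauge $\partial_u r \equiv -1$ on $C_0^-$ to get $\Omega^{-2}\partial_v r(u,0) = \tfrac14\big(1 - \mu(u,0)\big)$, so that \eqref{dv r DSS} is immediate from the assumed discrete self-similarity \eqref{eq:self-similarity-for-mu} of $\mu$. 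For \emph{part (3)}, I would integrate \eqref{partial uv r} in $u$ from $p$: using $r(u,0) = -u$, $r(p) = 1$, and $\Omega^2(u,0) = (1-\kappa)^{-1}e^{-F(u)}$ from \cref{lem:rigidity-of-Omega}, one gets $-u\,\partial_v r(u,0) = \partial_v r(p) - \tfrac{1}{4(1-\kappa)}\int_{-1}^u e^{-F(u')}\,du'$, and multiplying through by $\Omega^{-2}(u,0) = (1-\kappa)e^{F(u)}$ (using $\Omega^{-2}(p) = 1-\kappa$, i.e. $\partial_v r(p) = (1-\kappa)^{-1}\Omega^{-2}\partial_v r(p)$) produces exactly \eqref{dv r formula}. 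Note this identity holds for the so-far-undetermined corner value $\partial_v r(p)$.

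For \emph{part (2)}, I would evaluate \eqref{dv r formula} at $u = -e^{-\Delta}$: by \eqref{eq:relation-for-F} one has $F(-e^{-\Delta}) = \kappa\Delta$, so the prefactor there is $e^{\kappa\Delta}/e^{-\Delta} = e^{\Delta(1+\kappa)}$, while part (1) (with $u=-1$) identifies the left-hand side with $\Omega^{-2}\partial_v r(p)$. This turns \eqref{dv r formula} into the scalar equation $\Omega^{-2}\partial_v r(p) = e^{\Delta(1+\kappa)}\big(\Omega^{-2}\partial_v r(p) - \tfrac14\int_{-1}^{-e^{-\Delta}}e^{-F(u')}\,du'\big)$, whose solution is exactly \eqref{eq:rigidity-of-partial_vr}. (Equivalently, and more tellingly, one can impose $\Omega^{-2}\partial_v r(e^{-\Delta}u,0) = \Omega^{-2}\partial_v r(u,0)$ in \eqref{dv r formula} for all $u$; after the substitution $u'\mapsto e^{-\Delta}u'$ in the integral and $F(e^{-\Delta}u') = F(u') + \kappa\Delta$, the $u$-dependence cancels and the same consistency condition remains, so the discrete self-similarity of $\Omega^{-2}\partial_v r$ forces a unique admissible $\partial_v r(p)$.)

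For \emph{part (4)}, on the fundamental domain $u\in[-1,-e^{-\Delta}]$ one has $|F|\lesssim\kappa$ by \eqref{eq:bound-on-F}, hence $e^{\pm F} = 1 + O(\kappa)$, and for $\kappa$ small depending on $\Delta$ the denominator $1 - e^{-\Delta(1+\kappa)}$ stays bounded below by a positive constant. Inserting this into \eqref{eq:rigidity-of-partial_vr} gives $\Omega^{-2}\partial_v r(p) = \tfrac14 + O(\kappa)$; inserting both into \eqref{dv r formula}, and using $\tfrac14\int_{-1}^u e^{-F(u')}\,du' = \tfrac14(u+1) + O(\kappa)$ so that the bracket equals $\tfrac14(-u) + O(\kappa)$, together with $(-u)^{-1}\le e^{\Delta}$ on $[-1,-e^{-\Delta}]$, yields $4\Omega^{-2}\partial_v r(u,0) = 1 + O(\kappa)$ there. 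Since any $u\in[-1,0)$ can be moved into the fundamental domain by finitely many iterations of $u\mapsto e^{\Delta}u$, part (1) propagates $|4\Omega^{-2}\partial_v r - 1|\lesssim\kappa$ to all of $C_0^-$, and then $\mu = 1 - 4\Omega^{-2}\partial_v r$ on $C_0^-$ gives $|\mu|\lesssim\kappa$.

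The computation is essentially mechanical — integrating a transport equation and bookkeeping $O(\kappa)$ errors — so I do not expect a genuine analytic obstacle. The one subtle point is part (2): one must recognize that discrete self-similarity of $\partial_v r$ is not an extra hypothesis one can freely impose, but a consistency constraint which, through the integrated form of \eqref{partial uv r}, rigidly pins down the corner value $\partial_v r(p)$. This rigidity is exactly what produces the smallness $\mu(p)\lesssim\kappa$ emphasized in the introduction, and it is the feature distinguishing these data from the $k$-self-similar case where $\partial_v\phi(p)\sim 1/k$ is never small.
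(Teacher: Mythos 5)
Your proposal is correct and follows essentially the same route as the paper: part (1) from $\mu=1-4\Omega^{-2}\partial_v r$ and the assumed self-similarity of $\mu$, the integrated form of \eqref{partial uv r} combined with the explicit lapse \eqref{eq:Omega-on-cin} for part (3) (your direct integration of $\partial_u(r\partial_v r)=-\Omega^2/4$ is the same computation as the paper's integrating factor, since $(-u)e^{-F}\Omega^{-2}=(1-\kappa)r$), the periodicity consistency condition at $u=-e^{-\Delta}$ pinning down the corner value for part (2), and the fundamental-domain-plus-periodicity argument for part (4). No gaps.
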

\begin{proof}
    The constraint \eqref{dv r DSS} follows from \begin{equation}\label{eq:computation-mu} \mu = \frac{2m}{r} = 1 +4 \Omega^{-2} \partial_u r \partial_v r = 1 - 4 \Omega^{-2} \partial_v r \end{equation} on $C_0^-$ and the imposed discrete self-similarity condition \eqref{eq:self-similarity-for-mu} for $\mu$. 

    We now use \eqref{wave}, \eqref{eq:log-Omega} and $\partial_u r =-1$  to compute 
    \begin{align*}
        \partial_u (\Omega^{-2} \partial_v r)&= \partial_u \Omega^{-2} \partial_v r  + \Omega^{-2} \left( \frac{-\Omega^2}{4r} - \frac{\partial_u r \partial_v r }{r}\right)\\
        & = \Omega^{-2} \partial_v r \left( \partial_u \log \Omega^{-2} - \frac{\partial_ur}{r}\right) - \frac{1}{4r} \\ & = \Omega^{-2} \partial_v r \left( -u (\partial_u \phi)^2 -  \frac{1}{u}\right) + \frac{1}{4u}.
    \end{align*}
  Using the definition of $F$ we rewrite the above as 
    \begin{equation}
   \partial_u     \left( e^{ - F(u) +\int_{-1}^u \frac{1}{u'} d u' } \Omega^{-2} \partial_v r
 \right) = \frac{1}{4u} e^{ - F(u) +\int_{-1}^u  \frac{1}{u'} d u' } = -\frac 14 e^{ - F(u)  }.
    \end{equation}
 We integrate from $u=-1$ to $u$ to obtain     \begin{equation}\label{eq:computation-omega-partialvr}
        (-u)  e^{-F(u)} \Omega^{-2}\partial_v r (u) - \Omega^{-2} \partial_v r(p)= \int_{-1}^{u} -\frac 14 e^{-F(u')} d u'.
    \end{equation}
    Solving for $\Omega^{-2} \partial_v r $ yields \eqref{dv r formula}. In order to obtain the formula for $\Omega^{-2} \partial_v r(p)$ we use the periodicity of $\Omega^{-2} \partial_v r $ and obtain from \eqref{eq:computation-omega-partialvr} that $\Omega^{-2} \partial_v r(p)$ has to satisfy
    \begin{equation}
      \Omega^{-2} \partial_v r(p) ( e^{ \Delta (-\kappa -1)    }  -1) = \int_{-1}^{-e^{-\Delta}} - \frac 14 e^{-F(u')} du'.
    \end{equation}
    Solving for $\Omega^{-2} \partial_v r(p)$ gives \eqref{eq:rigidity-of-partial_vr}.

    To prove the bounds in \eqref{dv r bound}, we first observe the bounds \begin{equation*}\sup_{u\in [-1,-e^{-\Delta}]} |F|\lesssim \kappa, \quad  \sup_{u\in [-1,-e^{-\Delta}]} |e^{-F}-1|\lesssim \kappa,\quad \text{ and } \quad|e^{-\Delta \kappa} -1 |\lesssim \kappa.\end{equation*}
    Thus, using \eqref{eq:rigidity-of-partial_vr} we obtain 
    \begin{equation*}
        \left|4 \Omega^{-2} \partial_v r (p) - 1\right| = \left| \frac{1}{1-e^{-\Delta -\Delta \kappa}} \int_{-1}^{-e^{-\Delta}} e^{F(u')} du' - 1 \right| \lesssim \left|\frac{1}{1-e^{-\Delta}} \int_{-1}^{-e^{-\Delta}}  du'  - 1\right| +  \kappa = \kappa. 
    \end{equation*}
    Using the above estimates, \eqref{dv r formula} and \eqref{dv r DSS} we obtain that 
    \begin{align*}
 \sup_{u\in [-1,0)}|     4\Omega^{-2} \partial_v r(u,0) -1 |  & =  \sup_{u\in [-1,-e^{-\Delta}]} |     4\Omega^{-2} \partial_v r(u,0) -1 | \\
 & = \sup_{u\in [-1,-e^{-\Delta}]}  \frac{e^{F(u)}}{-u} \left| 4 \Omega^{-2} \partial_v r(p) - \int_{-1}^u e^{-F(u')} du' +u e^{-F(u)} \right| \lesssim \kappa.
    \end{align*}
The bound on $\mu$ follows from the above and   \eqref{eq:computation-mu} which concludes the proof.  
\end{proof}

Analogously to the rigidity for $\Omega^{-2} \partial_v r (p)$ from \cref{lem:derivation-of-c1} will now show that the discrete self-similarity for $\frac{\partial_v (r\phi)}{\partial_v r  }$ along $C_0^-$ constrains $\Omega^{-2}\partial_v \phi$ at the bifurcation sphere $p$. 

 \begin{lemma}
 \label{lem:derivation-of-c2}
Consider discretely self-similar data as in  \cref{def:discrete-self-similar}. Then, the following holds:
\begin{enumerate}
\item      The quantity $\Omega^{-2} \partial_v(r\phi) $ is discretely self-similar on $C_0^-$
    \begin{equation}\label{dv-phi-DSS}
\Omega^{-2} \partial_v(r\phi)(u,0)=\Omega^{-2} \partial_v(r\phi)(e^{-\Delta}u,0).
    \end{equation}
       \item   The transverse derivative $ \Omega^{-2} \partial_v(r\phi)(p) $ at the bifurcation sphere $p$ is determined by the data on $C_0^-$ and satisfies   
\begin{equation} \label{eq:value-of-c2}
   \Omega^{-2} \partial_v(r\phi)(p) = - \frac{ e^{\kappa\Delta}}{ 1- e^{  \kappa\Delta }} \int_{-1}^{-e^{-\Delta}}    \frac{ e^{-F(u')}}{-4 u'} \mu(u') \phi (u')du '.
\end{equation}
 \item  The transverse derivative $ \Omega^{-2}\partial_v (r\phi) (u,0)$  along $C_0^-$ is determined by the data on $C_0^-$ and given as
\begin{equation}\label{eq:definition-of-omega-dvrphi}
    \Omega^{-2} (u,0) \partial_v (r\phi) (u,0) =  e^{F(u)} \left(\Omega^{-2} \partial_v(r\phi)(p)  -   \int_{-1}^u \frac{e^{-F(u')}}{-4u'} \mu(u') \phi(u') du'\right).
\end{equation}
    \item  
For  $0<\kappa \ll 1$ sufficiently small, we have the bound
    \begin{equation}\label{dv phi bound}
        \big|\Omega^{-2}\partial_v(r\phi)\big|\lesssim \sqrt \kappa.
    \end{equation}
\end{enumerate}

\end{lemma}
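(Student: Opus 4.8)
The plan is to follow the template of \cref{lem:derivation-of-c1}: derive a $\partial_u$-transport equation for $\Omega^{-2}\partial_v(r\phi)$ along $C_0^-$, integrate it from the bifurcation sphere $p$, and then close a fixed-point relation for the value at $p$ by exploiting the imposed periodicity.

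\emph{Part (1).} I would simply write $\Omega^{-2}\partial_v(r\phi) = \big(\tfrac{\partial_v(r\phi)}{\partial_v r}\big)\cdot\big(\Omega^{-2}\partial_v r\big)$ on $C_0^-$. The first factor is discretely self-similar by the imposed rigidity condition \eqref{eq:self-similarity-outgoing-der}, and the second is discretely self-similar by \cref{lem:derivation-of-c1} (see \eqref{dv r DSS}); hence the product satisfies \eqref{dv-phi-DSS}.

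\emph{Parts (3) and (2).} Using the gauge $\partial_u r = -1$ and $r = -u$ on $C_0^-$, the wave equation \eqref{wave} applied to $\partial_v(r\phi) = r\partial_v\phi + \phi\partial_v r$ collapses, after cancelling the $\partial_u\phi\,\partial_v r$ and $\partial_v\phi\,\partial_u r$ contributions, to $\partial_u\big(\partial_v(r\phi)\big) = \phi\,\partial_u\partial_v r$. From \eqref{partial uv r} with $\partial_u r=-1$ together with the identity $\mu = 1 - 4\Omega^{-2}\partial_v r$ on $C_0^-$ one finds $\partial_u\partial_v r = \tfrac{\Omega^2\mu}{4u}$. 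Combining with $\partial_u\Omega^{-2} = \Omega^{-2}F'$ (a consequence of \eqref{eq:log-Omega} and the definition \eqref{eq:definition-F} of $F$), the quantity $w\doteq\Omega^{-2}\partial_v(r\phi)$ obeys $\partial_u(e^{-F}w) = \tfrac{e^{-F}\phi\mu}{4u}$. Integrating from $-1$ to $u$ and using $F(-1)=0$, $w(-1)=w(p)$ yields formula \eqref{eq:definition-of-omega-dvrphi}. Evaluating \eqref{eq:definition-of-omega-dvrphi} at $u=-e^{-\Delta}$, replacing $w(-e^{-\Delta},0)$ by $w(p)$ via Part (1), and using \eqref{eq:relation-for-F} to compute $F(-e^{-\Delta}) = \kappa\Delta$, gives the linear relation $w(p)\,(1 - e^{\kappa\Delta}) = -e^{\kappa\Delta}\int_{-1}^{-e^{-\Delta}}\tfrac{e^{-F(u')}}{-4u'}\mu(u')\phi(u')\,du'$, which I solve for $w(p)$ to obtain \eqref{eq:value-of-c2}.

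\emph{Part (4) and the main obstacle.} On $[-1,-e^{-\Delta}]$—which by Part (1) controls all of $C_0^-$—I estimate \eqref{eq:definition-of-omega-dvrphi} using $|F|\lesssim\kappa$ (from \eqref{eq:bound-on-F}), $|u'|^{-1}\lesssim 1$, $|\phi|\lesssim\sqrt\kappa$ (from \eqref{eq:bounds-on-incoming-cone}), and $|\mu|\lesssim\kappa$ (from \cref{lem:derivation-of-c1}, i.e. \eqref{dv r bound}); the integral term is then $\lesssim\kappa^{3/2}$. For the boundary term I use $|1 - e^{\kappa\Delta}|\gtrsim\kappa$ for $0<\kappa\ll 1$, so \eqref{eq:value-of-c2} gives $|w(p)|\lesssim\kappa^{-1}\cdot\kappa^{3/2} = \sqrt\kappa$, and hence $|w|\lesssim\sqrt\kappa$, which is \eqref{dv phi bound}. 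The genuinely delicate point is precisely this last estimate: the rigidity prefactor $(1-e^{\kappa\Delta})^{-1}\sim(\kappa\Delta)^{-1}$ costs a full power of $\kappa$, and this loss is recovered only because the integrand carries the factor $\mu\phi$ of size $\kappa\cdot\sqrt\kappa$. Thus the smallness of the mass aspect ratio proved in \cref{lem:derivation-of-c1} is exactly what renders $\Omega^{-2}\partial_v(r\phi)(p)$, and therefore $\partial_v\phi(p)$, small—the ``rigidity yields smallness'' mechanism stressed in the introduction, and the main structural departure from the $k$-self-similar solutions of \cite{C94}.
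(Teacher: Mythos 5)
Your proposal is correct and follows essentially the same route as the paper: part (1) as a product of the two imposed discretely self-similar quantities, parts (2)--(3) by reducing \eqref{wave} to $\partial_u\partial_v(r\phi)=\phi\,\partial_u\partial_v r=\tfrac{\Omega^2\mu\phi}{4u}$ on $C_0^-$, integrating the conjugated transport equation from $p$, and closing the value at $p$ via periodicity, and part (4) by balancing the $(1-e^{\kappa\Delta})^{-1}\sim\kappa^{-1}$ prefactor against the $|\mu\phi|\lesssim\kappa^{3/2}$ integrand. The computations and the key smallness mechanism you identify match the paper's proof.
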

\begin{proof}
The first statement \eqref{dv-phi-DSS} is a direct consequence of \eqref{eq:self-similarity-outgoing-der} and \eqref{dv r DSS}.

We now write \eqref{wave} as 
\begin{equation}
    \partial_u   \partial_v (r \phi)   = -\frac{m \Omega^2 }{2 r^2} \phi  
\end{equation}
which, after integration and multiplication by $\Omega^{-2}$, gives
\begin{equation*}
 \Omega^{-2} (u,0) \partial_v (r\phi) (u,0) =    \Omega^{-2} (u,0)  \partial_v (r\phi)(p) -  \Omega^{-2} (u,0)  \int_{-1}^u \frac{\Omega^2(u',0)}{-4u} \mu \phi du'.
\end{equation*}
Evaluating the above at $u=-e^{-\Delta}$ and using \eqref{dv-phi-DSS} gives
\begin{equation}
   \Omega^{-2}  (p) \partial_v (r\phi)(p)  = \frac{\Omega^{-2} (-e^{-\Delta}, 0) }{\Omega^{-2} (p) }\Omega^{-2}  (p)  \partial_v (r\phi)(p)  - \Omega^{-2} (-e^{-\Delta}, 0) \int_{-1}^{-e^{-\Delta}}  \frac{\Omega^2(u',0)}{-4 u'} \mu \phi du '.
\end{equation}
Solving for $ \Omega^{-2}   \partial_v (r\phi)(p) $ and using \eqref{eq:Omega-on-cin} yields \eqref{eq:value-of-c2}.
Similarly, we obtain 
\begin{equation}
    \Omega^{-2}   \partial_v (r\phi) (u,0) =  e^{F(u)} \left( \Omega^{-2}    \partial_v (r\phi)(p)   -   \int_{-1}^u \frac{e^{-F(u')}}{-4u'} \mu \phi du' \right) .
\end{equation}
To prove \eqref{dv phi bound} we first note that 
\begin{equation}
|\Omega^{-2}    \partial_v (r\phi)(p) |\lesssim \frac{1}{\kappa \Delta} \sup_{u \in [-1,-e^{-\Delta}]} |\mu \phi| \lesssim \sqrt \kappa,
\end{equation}
where we used \eqref{eq:bounds-on-incoming-cone} and \eqref{dv r bound}. Using the periodicity of $ \Omega^{-2}  \partial_v (r\phi) (u,0) $ and the bounds on $\phi$ and $\mu$ as we did to control $\Omega^{-2}  \partial_v (r\phi) (p)$, we obtain \eqref{dv phi bound} which concludes the proof.  
\end{proof}

\begin{remark}
In \cref{def:discrete-self-similar} we require the first transverse derivative of the geometry and scalar field to be discretely self-similar. Analogously, one can define discretely self-similar profiles of higher order $n\geq 1$. This would impose $n-1$  additional constraints on the higher-order transverse derivatives of the scalar field $\partial_v^{i} \phi|_p$ and on the radius $\partial_v^{i} r|_p$ for $  i \leq n$, analogously to \cref{lem:derivation-of-c1} and \cref{lem:derivation-of-c2}.
\end{remark}

\subsubsection{Estimates along the outgoing cone}
We note that along the outgoing cone, we have that 
\begin{equation}
    \label{eq:gauge-condition-for-omega}
\Omega^2(-1,v)=\frac{1}{1-\kappa}
\end{equation}
which is a consequence of \eqref{eq:Omega-on-vertex} and the gauge condition \eqref{eq:gauge-for-omega-outgoing}.
The self-similarity imposed on the scalar field means  that \[ \partial_v\phi(p)=   \frac{ -e^{\kappa\Delta}}{ (1-\kappa)(1- e^{  \kappa\Delta })} \int_{-1}^{-e^{-\Delta}}    \frac{ e^{-F(u')}}{-4 u'} \mu(u') \phi (u')du '\]
where we recall that $\phi(p) =0$.

\begin{remark}
    Our arguments also apply if we only require the regularity for $\phi$ to be $C^{1,\gamma}_{v}\big([0,\infty)\big),$ where $\gamma=2\kappa/(1-\kappa).$ We notice that this is the same regularity exponent as in \cite{RSR23}. However, while in the examples of \cite{C94} and \cite{RSR23} restricting the regularity of the spacetime was essential in the construction of the exterior solution, in our case we can already achieve this for a smooth scalar field.
\end{remark}

The above choices determine $r$ on $C_0^+$ by the Raychaudhuri equation  \eqref{partial v}  and the constraints at $p$:
\begin{equation*}\begin{dcases}
    \partial_v^2r=-r(\partial_v\phi)^2 \\ r(p)=1,\ \partial_vr(p)= \frac{1}{(1-\kappa)(4-4e^{-\Delta(1+\kappa)})}\int_{-1}^{-e^{-\Delta}}   e^{-F(u')} du'.
\end{dcases}\end{equation*}

We have completed the setup for the characteristic initial data on $C_0^+.$ Next, we prove estimates on $C_0^+$ for all the relevant quantities that will be of use later in the proof. 
\begin{lemma}
  Consider discretely self-similar data as in \cref{def:discrete-self-similar}.    Then, for $0<\kappa\ll 1$ sufficiently small, we have
    \begin{equation}\label{estimates for r initial data}
        \bigg|r(-1,v)-1-\partial_v r (p) v\bigg|  \lesssim \kappa v
    \end{equation}
    and 
    \begin{equation}\label{estimates for dv r initial data}
        \bigg|\partial_vr(-1,v)-\partial_v r (p)\bigg| \lesssim \kappa .
    \end{equation}
\end{lemma}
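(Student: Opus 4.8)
The plan is to integrate the Raychaudhuri equation \eqref{partial v} along the outgoing cone. Since $\Omega^2(-1,v)=\frac{1}{1-\kappa}$ is constant by \eqref{eq:gauge-condition-for-omega}, on $C_0^+$ equation \eqref{partial v} reduces to $\partial_v^2 r = -r(\partial_v\phi)^2$ with $r(p)=1$ and $\partial_v r(p)$ as recorded above. From the bound \eqref{dv r bound} of \cref{lem:derivation-of-c1} together with $\Omega^{-2}(p)=1-\kappa$ one gets $\partial_v r(p)=\tfrac14+O(\kappa)$, so in particular $\tfrac18 \le \partial_v r(p)\le \tfrac12$ once $0<\kappa\ll 1$ is small; this is the only input from the rigidity of the outgoing data.

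First I would confine $r$ by a short continuity argument. Set
\[
v_* \doteq \sup\bigl\{\, v_0\ge 0 : r(-1,v)>0 \text{ for all } v\in[0,v_0]\,\bigr\}\in(0,\infty],
\]
which is positive since $r(p)=1$. On $[0,v_*)$ we have $r>0$, hence $\partial_v^2 r=-r(\partial_v\phi)^2\le 0$, so $\partial_v r(-1,\cdot)$ is non-increasing and $\partial_v r(-1,v)\le \partial_v r(p)\le\tfrac12$, which forces $r(-1,v)\le 1+v$ on $[0,v_*)$. Feeding this back into the equation and using the asymptotically flat bound \eqref{ansatz for phi C0+},
\[
0\le \partial_v r(p)-\partial_v r(-1,v)=\int_0^v r(-1,v')\bigl(\partial_v\phi(-1,v')\bigr)^2\,dv' \le C_{data}^2\kappa\int_0^\infty\frac{dv'}{(1+v')^3}\lesssim\kappa .
\]
Hence $\partial_v r(-1,v)\ge \partial_v r(p)-C\kappa\ge \tfrac18-C\kappa>0$ for small $\kappa$, so $r(-1,v)\ge 1+(\tfrac18-C\kappa)v\ge 1$ on $[0,v_*)$. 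If $v_*<\infty$, continuity would give $r(-1,v_*)\ge 1>0$, contradicting maximality; thus $v_*=\infty$ and the bounds above hold on all of $C_0^+$. In particular $|\partial_v r(-1,v)-\partial_v r(p)|\lesssim\kappa$, which is \eqref{estimates for dv r initial data}.

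Integrating this last inequality in $v$ and using that $\partial_v r(p)$ is a constant yields
\[
\bigl| r(-1,v)-1-\partial_v r(p)\,v\bigr|=\left|\int_0^v\bigl(\partial_v r(-1,v')-\partial_v r(p)\bigr)\,dv'\right|\le \int_0^v C\kappa\,dv'\lesssim\kappa v,
\]
which is \eqref{estimates for r initial data}. I do not expect a genuine obstacle: the only subtlety is the bootstrap on the positivity of $r$, needed so that $\partial_v^2 r\le0$ is available and $r$ grows at most linearly — exactly what makes $\int r(\partial_v\phi)^2$ converge with an $O(\kappa)$ bound. The decay $(1+v)^{-2}$ of $\partial_v\phi$ in \eqref{ansatz for phi C0+} is precisely the borderline rate needed for this, and the pointwise bound on $\partial_v^2\phi$ there plays no role in this particular lemma.
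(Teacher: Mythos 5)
Your argument is correct and is essentially the paper's own proof: both integrate $\partial_v^2 r=-r(\partial_v\phi)^2$ on $C_0^+$ with a continuity argument, use concavity to get $\partial_v r\le\partial_v r(p)$ and hence linear growth of $r$, and then combine this with the decay \eqref{ansatz for phi C0+} to close. The only (immaterial) difference is that you bootstrap the positivity of $r$ whereas the paper bootstraps the positivity of $\partial_v r$.
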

\begin{proof}
    We denote $v_*=\sup\big\{v\colon \partial_vr>0\text{ on }[0,v]\big\},$ and we prove that $v_*=\infty.$ Since $\partial_v^2r(-1,v)<0,$ we have the bound $0\leq\partial_vr(-1,v)\leq\partial_v r (p) \leq 2$ on $[0,v_*),$ which implies that $1\leq r(-1,v)\leq1+ 2 v$. From \eqref{partial v} get the bound on $[0,v_*)$:
    \begin{equation*}\big|\partial_v^2r(-1,v)\big| \lesssim \kappa\frac{1 + 2v }{(1+v)^{4}}.
    \end{equation*}
    We integrate this bound from 0 to obtain that \eqref{estimates for dv r initial data} holds for all $v\in[0,v_*).$ Taking $\kappa$ sufficiently small, we   obtain that $v_*=\infty$ as desired. Finally, we use \eqref{estimates for dv r initial data} and the fundamental theorem of calculus to get \eqref{estimates for r initial data}.
\end{proof}

Next, we prove the following estimate for $\partial_ur$:
\begin{lemma}
  Consider discretely self-similar data as in \cref{def:discrete-self-similar}. Then, for $0<\kappa\ll 1$ sufficiently small, we have
\begin{equation*}\big|\partial_ur(-1,v)+1\big| \lesssim \kappa \frac{v}{v+1}\end{equation*}
\end{lemma}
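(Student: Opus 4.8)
The plan is to derive an exact formula for $r\,\partial_u r$ along $C_0^+$ and then compare it with $-r$. Since mixed partials commute, equation \eqref{partial uv r} is symmetric under $u\leftrightarrow v$: expanding $\partial_u(r\partial_v r) = -\tfrac14\Omega^2$ as $\partial_u r\,\partial_v r + r\,\partial_u\partial_v r = -\tfrac14\Omega^2$, one also obtains $\partial_v(r\,\partial_u r) = \partial_v r\,\partial_u r + r\,\partial_v\partial_u r = -\tfrac14\Omega^2$. Restricting to $C_0^+ = \{u=-1\}$, where $\Omega^2(-1,v) = \tfrac{1}{1-\kappa}$ by \eqref{eq:gauge-condition-for-omega}, and integrating in $v$ from $p$, using the gauge values $r(p)=1$ and $\partial_u r(p) = -1$ that follow from $r(u,0) = -u$ on $C_0^-$, I get
\[ r(-1,v)\,\partial_u r(-1,v) = -1 - \frac{v}{4(1-\kappa)}. \]

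Next I would write $\partial_u r(-1,v) + 1 = \dfrac{r(-1,v) - 1 - \tfrac{v}{4(1-\kappa)}}{r(-1,v)}$ and estimate numerator and denominator separately. For the numerator, \eqref{estimates for r initial data} gives $r(-1,v) - 1 - \partial_v r(p)\,v = O(\kappa v)$, so it suffices to show $\partial_v r(p) - \tfrac{1}{4(1-\kappa)} = O(\kappa)$; this follows because \eqref{dv r bound} of \cref{lem:derivation-of-c1} gives $4\Omega^{-2}\partial_v r(p) = 1 + O(\kappa)$ and $\Omega^2(p) = \tfrac{1}{1-\kappa}$ by \eqref{eq:Omega-on-vertex}, whence $\partial_v r(p) = \tfrac{1}{4(1-\kappa)} + O(\kappa)$ and the numerator is $O(\kappa v)$. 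For the denominator, the same inputs give $\partial_v r(p) \geq \tfrac18$ once $0<\kappa\ll 1$ is small enough, so \eqref{estimates for r initial data} yields $r(-1,v) \geq 1 + \tfrac18 v - C\kappa v \gtrsim 1+v$. Combining the two bounds gives $|\partial_u r(-1,v) + 1| \lesssim \dfrac{\kappa v}{1+v}$, as claimed.

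I do not anticipate a genuine obstacle here. The only point requiring care is that the error $O(\kappa v)$ in \eqref{estimates for r initial data} is not small compared with $1 + \partial_v r(p)\,v$ when $v$ is large, so one must retain the lower bound $r(-1,v)\gtrsim 1+v$ rather than the naive $r(-1,v)\approx 1 + \tfrac{v}{4}$; this is exactly what produces the gain $\tfrac{v}{1+v}$ (bounded for large $v$, and $O(v)$ near $p$) instead of a bare $v$. One should also confirm that the implicit constants in \cref{lem:derivation-of-c1} depend only on $C_{data}$ and $\Delta$, so that the smallness of $\kappa$ needed to close the lower bound on $r(-1,v)$ is of the admissible type.
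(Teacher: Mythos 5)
Your proposal is correct and follows essentially the same route as the paper: integrate the symmetric form of \eqref{partial uv r} along $\{u=-1\}$ using the gauge \eqref{eq:gauge-condition-for-omega} to get $r\,\partial_u r=-1-\tfrac{v}{4(1-\kappa)}$, then control $r(\partial_u r+1)=r-1-\tfrac{v}{4(1-\kappa)}$ via \eqref{estimates for r initial data} and \eqref{dv r bound}, and divide by $r$. Your explicit lower bound $r(-1,v)\gtrsim 1+v$ is exactly what the paper uses implicitly in the final division to produce the factor $v/(v+1)$.
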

\begin{proof}
    We rewrite \eqref{partial uv r} as  \begin{equation*}
        \partial_v\partial_ur^2=-\frac{1}{2}\Omega^2
    \end{equation*}  and integrating along on  $\{u=-1\}$ gives
    \[r\partial_ur+1=-  \frac{v}{4(1-\kappa)},\]
    where we have also used the gauge condition  \eqref{eq:gauge-condition-for-omega}.
    We can rewrite this equation on $\{u=-1\}$ as
    \[r(\partial_ur+1)=r-1-  \frac{v}{4(1-\kappa)}=\bigg( \Omega^{-2}\partial_v r (p)-\frac{1}{4}\bigg)\frac{v}{1-\kappa}+O(\kappa v)=O(\kappa v),\]
where we also used \eqref{dv r bound} and \eqref{estimates for r initial data}.
We then divide by $r$ which concludes the proof.
\end{proof}

The bounds that we proved so far on the initial data imply the smallness of the Hawking mass:
\begin{lemma}\label{lem:hawking-mass-bounded-on-C+}
  Consider discretely self-similar data as in \cref{def:discrete-self-similar}. Then,  \begin{equation*}m(-1,v) \lesssim  \kappa\end{equation*} for all $v\in[0,\infty).$
\end{lemma}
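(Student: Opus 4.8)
The statement follows directly from the monotonicity formula \eqref{eq:hawking-mass-v} for the Hawking mass along the outgoing cone, combined with the bounds on $r$, $\partial_u r$, $\Omega^2$, and $\partial_v\phi$ already established on $C_0^+$. The only genuine input needed beyond integrating an explicit inequality is that $m(p)\lesssim\kappa$, which is immediate from the rigidity bound on $\mu$.

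\textbf{Key steps.} First I would note that at the bifurcation sphere, since $r(p)=1$ and $\partial_u r(p)=-1$, the definition of the Hawking mass gives $m(p)=\tfrac12\mu(p)$, and by \eqref{dv r bound} in \cref{lem:derivation-of-c1} we have $\mu(p)\lesssim\kappa$, hence $m(p)\lesssim\kappa$. Next I would restrict \eqref{eq:hawking-mass-v} to $\{u=-1\}$ and insert the bounds on $C_0^+$: the gauge condition \eqref{eq:gauge-condition-for-omega} gives $\Omega^{-2}(-1,v)=1-\kappa\lesssim 1$; the preceding lemmas give $0\le -\partial_u r(-1,v)\lesssim 1$ and $r(-1,v)\lesssim 1+v$ (from $1\le r(-1,v)\le 1+2v$ established in the proof of \eqref{estimates for r initial data}); and the asymptotically flat condition \eqref{ansatz for phi C0+} gives $(\partial_v\phi(-1,v))^2\lesssim \kappa\,(1+v)^{-4}$. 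Combining these,
\[
\partial_v m(-1,v)=2r^2\Omega^{-2}(-\partial_u r)(\partial_v\phi)^2\lesssim (1+v)^2\cdot\kappa\,(1+v)^{-4}=\frac{\kappa}{(1+v)^2}.
\]
Finally I would integrate from $v=0$, using $m(p)\lesssim\kappa$ and $\int_0^\infty (1+v)^{-2}\,dv=1$, to conclude $m(-1,v)\lesssim\kappa$ for all $v\in[0,\infty)$.

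\textbf{Main obstacle.} There is essentially no obstacle: the content is that the quadratic decay of $\partial_v\phi$ dictated by asymptotic flatness beats the at-most-linear growth of $r$, so the mass increment along $C_0^+$ is integrable with total size $O(\kappa)$. The only point requiring care is bookkeeping the $v$-dependence of each factor in \eqref{eq:hawking-mass-v} so that the product indeed has the integrable form $\kappa(1+v)^{-2}$, together with making sure each constituent estimate is applied on the full range $[0,\infty)$ (which the previous lemmas guarantee, as their $v_*=\infty$ arguments already extend the bounds globally).
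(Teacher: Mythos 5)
Your proof is correct and follows essentially the same route as the paper: obtain $m(p)\lesssim\kappa$ from the rigidity bound on $\mu$, bound $\partial_v m$ along $C_0^+$ by $\kappa(1+v)^{-2}$ using \eqref{eq:hawking-mass-v} together with the previously established estimates on $r$, $\partial_u r$, $\Omega^2$ and the asymptotic flatness condition \eqref{ansatz for phi C0+}, and integrate. The only difference is that you spell out the bookkeeping of the individual factors, which the paper leaves implicit.
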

\begin{proof}
    From  $m = \frac 12 \mu r $ and the bound \eqref{dv r bound}  we directly obtain $m(p)\lesssim \kappa$.  
    Plugging  the bounds that we proved above for the initial data, we obtain that
    \[0 \leq \partial_vm \lesssim \frac{\kappa}{(1+v)^{2}}\]
    which shows \cref{lem:hawking-mass-bounded-on-C+} upon integration.
\end{proof}

We also prove the following estimate for $\partial_u\phi:$
\begin{lemma}
    Consider discretely self-similar data as in \cref{def:discrete-self-similar}.   Then, for $0<\kappa\ll 1$ sufficiently small and for  $v\in [0,\infty)$, we have that
    \begin{equation*}\big|\partial_u\phi(-1,v)-\partial_u\phi(p)\big|\lesssim \sqrt{\kappa}\frac{v}{v+1}.\end{equation*}
\end{lemma}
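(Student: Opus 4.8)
The plan is to estimate $\partial_u\phi(-1,v)-\partial_u\phi(p)$ by deriving a transport equation for $\partial_u\phi$ in the $\partial_v$ direction along the outgoing cone $\{u=-1\}$ and integrating it using the bounds already established in this section. The natural starting point is the wave equation \eqref{wave}, which on $\{u=-1\}$ reads
\begin{equation*}
\partial_v(\partial_u\phi) = -\frac{\partial_u\phi\,\partial_v r}{r} - \frac{\partial_v\phi\,\partial_u r}{r}.
\end{equation*}
Equivalently, writing the equation in the form $r\,\partial_v\partial_u\phi = -\partial_v\phi\,\partial_u r - \partial_u\phi\,\partial_v r$, one sees that it is cleanest to track $\partial_u(r\phi) = r\,\partial_u\phi + \phi\,\partial_u r$ or to use the integrating factor $r$: indeed $\partial_v(r\,\partial_u\phi) = -\partial_v\phi\,\partial_u r$ is false in general, so I would instead note that $\partial_v\big(\partial_u(r\phi)\big) = -\tfrac{m\Omega^2}{2r^2}\phi$ as was used in the proof of \cref{lem:derivation-of-c2}, and convert the desired bound on $\partial_u\phi$ into a bound on $\partial_u(r\phi) = r\,\partial_u\phi + \phi\,\partial_u r$, using $|\phi\,\partial_u r| \lesssim \sqrt\kappa$ (from $\phi(p)=0$, the bound $|\partial_v\phi(-1,v)|\lesssim\sqrt\kappa(1+v)^{-2}$, so $|\phi(-1,v)|\lesssim\sqrt\kappa$, together with $|\partial_u r+1|\lesssim\kappa v/(v+1)\lesssim 1$).

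Concretely, I would first record the pointwise ingredients on $\{u=-1\}$ already available: $1\le r(-1,v)\le 1+2v$, $|\partial_v r(-1,v)-\partial_v r(p)|\lesssim\kappa$ with $\partial_v r(p)=\tfrac14+O(\kappa)$, $|\partial_u r(-1,v)+1|\lesssim \kappa v/(v+1)$, $|\phi(-1,v)|\lesssim\sqrt\kappa$, $\Omega^2(-1,v)=\tfrac1{1-\kappa}$, and $m(-1,v)\lesssim\kappa$ from \cref{lem:hawking-mass-bounded-on-C+}. Then from $\partial_v\big(\partial_u(r\phi)\big) = -\tfrac{m\Omega^2}{2r^2}\phi$ I integrate from $v=0$ to $v$, using $|m\Omega^2 r^{-2}\phi|\lesssim \kappa\cdot\sqrt\kappa\cdot r^{-2}$ and $\int_0^\infty r^{-2}\,dv \lesssim \int_0^\infty (1+v)^{-2}\,dv \lesssim 1$ (since $r\gtrsim 1$ and $r\gtrsim v$), to get $\big|\partial_u(r\phi)(-1,v) - \partial_u(r\phi)(p)\big|\lesssim \kappa^{3/2}\cdot\tfrac{v}{v+1}$, where I should be slightly careful to get the $v/(v+1)$ weight rather than just a uniform $\kappa^{3/2}$ — this follows because $\int_0^v r^{-2}\,dv' \lesssim \int_0^v (1+v')^{-2}\,dv' = \tfrac{v}{1+v}$. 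Finally I unpack $\partial_u(r\phi) = r\,\partial_u\phi + \phi\,\partial_u r$: writing $\partial_u\phi(-1,v) = \tfrac1r\big(\partial_u(r\phi)(-1,v) - \phi(-1,v)\partial_u r(-1,v)\big)$ and similarly at $p$ (where $r(p)=1$, $\phi(p)=0$, so $\partial_u\phi(p) = \partial_u(r\phi)(p)$), subtract, and bound each term using $|r^{-1}-1|\lesssim v$ only on a bounded $v$-range and $r^{-1}\lesssim 1$ globally, together with $|\phi\,\partial_u r|(-1,v) \lesssim \sqrt\kappa\cdot\kappa\cdot\tfrac{v}{v+1}$ and the difference bound on $\partial_u(r\phi)$; the cross terms involving $(r^{-1}-1)\partial_u(r\phi)(p)$ need $|\partial_u(r\phi)(p)| = |\partial_u\phi(p)|\lesssim\sqrt\kappa$, which in turn comes from the rigidity formula of \cref{lem:derivation-of-c2} via $\partial_u\phi(p)$ being controlled by $\partial_v\phi(p)$ and the first-order structure — in any case $|\partial_u\phi(p)|\lesssim\sqrt\kappa$ is forced by the data bounds.

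The main obstacle I anticipate is bookkeeping the weights correctly so that the final bound carries the factor $\tfrac{v}{v+1}$ rather than a bare $\sqrt\kappa$: near $v=0$ one needs the difference to vanish linearly, which is automatic from integrating a bounded integrand from $0$, while for large $v$ one needs the integrand $r^{-2}$ to be integrable, which holds since $r\gtrsim v$. A secondary subtlety is that $\partial_u\phi(p)$ itself is not part of the freely prescribed seed data — it is determined through the constraints — so I must be sure that the bound $|\partial_u\phi(p)|\lesssim\sqrt\kappa$ is genuinely available; this follows since on $C_0^-$ we have $|\partial_u\phi(u,0)|\lesssim\sqrt\kappa/|u|$, hence at $p=(-1,0)$ directly $|\partial_u\phi(p)|\lesssim\sqrt\kappa$, and $\partial_u\phi$ on $\{u=-1\}$ agrees with this value at $v=0$. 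With these weights in hand the estimate closes for $\kappa$ sufficiently small.
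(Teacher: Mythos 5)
Your argument is correct and closes, but it takes a small detour relative to the paper, and the reason you give for the detour is itself mistaken. You assert that $\partial_v(r\,\partial_u\phi)=-\partial_v\phi\,\partial_u r$ ``is false in general''; in fact it is exactly the wave equation \eqref{wave} rewritten, since
\begin{equation*}
\partial_v(r\,\partial_u\phi)=\partial_v r\,\partial_u\phi+r\,\partial_u\partial_v\phi=\partial_v r\,\partial_u\phi-\partial_v\phi\,\partial_u r-\partial_u\phi\,\partial_v r=-\partial_v\phi\,\partial_u r,
\end{equation*}
and this identity is precisely what the paper integrates along $\{u=-1\}$: using $|\partial_v\phi(-1,v)|\lesssim\sqrt\kappa(1+v)^{-2}$ and $|\partial_u r|\lesssim 1$ one gets $r\partial_u\phi(-1,v)-\partial_u\phi(p)=O(\sqrt\kappa\,v/(v+1))$, and then one peels off $(r-1)\partial_u\phi(p)$ and divides by $r\gtrsim 1+v$. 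Your substitute, integrating $\partial_v\big(\partial_u(r\phi)\big)=-\tfrac{m\Omega^2}{2r^2}\phi$ and then unpacking $\partial_u(r\phi)=r\partial_u\phi+\phi\,\partial_u r$, is equally valid: it costs you the extra inputs $m(-1,v)\lesssim\kappa$ (\cref{lem:hawking-mass-bounded-on-C+}) and $|\phi(-1,v)|\lesssim\sqrt\kappa\,v/(v+1)$, plus the extra term $\phi\,\partial_u r$ in the unpacking, but all of these are available and the weights work out exactly as you describe (the dominant contributions being $(r^{-1}-1)\partial_u\phi(p)$ and $r^{-1}\phi\,\partial_u r$, each of size $\sqrt\kappa\,v/(v+1)$, with the integrated source only of size $\kappa^{3/2}v/(v+1)$). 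One genuine slip: you write $|\phi\,\partial_u r|(-1,v)\lesssim\sqrt\kappa\cdot\kappa\cdot\tfrac{v}{v+1}$, but $\partial_u r(-1,v)=-1+O(\kappa v/(v+1))$ is of unit size, not $O(\kappa)$, so this term is only $O(\sqrt\kappa\,v/(v+1))$ --- which is still within the target bound, so nothing breaks. Your final remark that $|\partial_u\phi(p)|\lesssim\sqrt\kappa$ follows directly from \eqref{eq:bounds-on-incoming-cone} at $u=-1$ is the right justification; no appeal to the rigidity of \cref{lem:derivation-of-c2} is needed, since $\partial_u\phi$ is tangential data on $C_0^-$.
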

\begin{proof}
    We can rewrite the wave equation \eqref{wave} as $\partial_v(r\partial_u\phi)=-\partial_v\phi\partial_ur.$ Thus, on $u=-1$ we have:
    \begin{equation*}r\partial_u\phi(-1,v)-r\partial_u\phi(p)=O\bigg(\sqrt{\kappa}\frac{v}{v+1}\bigg).\end{equation*}
    We can rewrite this as
    \begin{equation*}r(-1,v)\cdot\big(\partial_u\phi(-1,v)-\partial_u\phi(p)\big)=-\big(r(-1,v)-1\big)\cdot\partial_u\phi(p)+O\big(\sqrt{\kappa}v\big)=O\big(\sqrt{\kappa}v\big),\end{equation*}
    where we also used \eqref{eq:bounds-on-incoming-cone} and \eqref{estimates for r initial data}.
\end{proof}
\subsection{Proof of \texorpdfstring{\cref{prop:data-non-trivial}}{Proposition}}
\label{sec:proof-nontrivial-data}

\begin{proof}[Proof of \cref{prop:data-non-trivial}]
Let $\Delta>0$ and $\kappa>0$ be arbitrary. For the seed data along $C_0^-$ we make the ansatz $\phi = \sqrt{\kappa} \alpha \psi $ for a \textit{fixed} $\psi \in C_c^\infty (-1,e^{-\Delta}), \psi \not \equiv 0$ which we continue self-similarly according to \eqref{eq:self-similar-property-phi}. This ensures \eqref{eq:bounds-on-incoming-cone}. By modulating $\alpha$ as a function of $\Delta$ and $\psi$, we can verify \eqref{eq:definition-F} by the intermediate value theorem. Thus there exists a constant $C_{data}>0$ only depending on $\Delta$ (and the fixed profile $\psi$) satisfying \eqref{eq:bounds-on-incoming-cone}.

 In order to ensure that $\mu$ is self-similar, it suffices to show that      $\Omega^{-2} \partial_v r $ is self-similar. We specify $\Omega^{-2} \partial_v r $ at $p$ according to \eqref{eq:rigidity-of-partial_vr}. Indeed, using \eqref{eq:relation-for-F}, we compute 
\begin{align*}
    \Omega^{-2} \partial_v r & ( e^{-\Delta} u,0)     = \frac{e^{F( u)}}{-u} e^{\Delta +
\kappa \Delta} \left( \Omega^{-2} \partial_v r(p) - \frac{1}{4} \int_{-1}^{-e^{-\Delta} } e^{-F(u')}  du' -  \frac{1}{4} \int_{-e^{-\Delta }}^{ e^{-\Delta}u} e^{-F(u')}  du' \right)\\
    &= \frac{e^{F(u)}}{-u} e^{\Delta +\kappa \Delta} \left( \Omega^{-2} \partial_v r(p) + \Omega^{-2} \partial_v r(p) ( e^{ \Delta (-\kappa -1)    }  -1) -  \frac{e^{-(1+\kappa )\Delta}}{4} \int_{-1}^{u } e^{-F(u')}  du'\right) \\  &=  \Omega^{-2}\partial_v r (u,0)
\end{align*}
which shows  \eqref{eq:self-similarity-for-mu}.

To show \eqref{eq:self-similarity-outgoing-der}, we specify $\Omega^{-2} \partial_v( r \phi) $ at $p$ according to \eqref{eq:value-of-c2}. Then, using \eqref{eq:definition-of-omega-dvrphi} we compute \begin{align*}
    \Omega^{-2}  \partial_v (r\phi) & (e^{-\Delta} u,0)  =  e^{F(u)} e^{\kappa \Delta} \left( \Omega^{-2} \partial_v( r \phi)(p)  -   \int_{-1}^{-e^{-\Delta}} \frac{e^{-F(u')}}{-4u'} \mu \phi du'-   \int_{-e^{-\Delta}}^{e^{-\Delta}u} \frac{e^{-F(u')}}{-4u'} \mu \phi du'\right) \\
    & =  e^{F(u)} e^{\kappa \Delta} \left(  \Omega^{-2} \partial_v( r \phi)(p)  + \frac{ 1- e^{  \kappa\Delta }}{ e^{ \kappa\Delta}}  \Omega^{-2} \partial_v( r \phi)(p)  -   e^{-\kappa\Delta}\int_{-1}^{ u}  \frac{e^{-F(u')}}{-4u'} \mu \phi du'\right) \\ & =  \Omega^{-2}  \partial_v (r\phi) (u,0)
\end{align*}
which shows \eqref{eq:self-similarity-outgoing-der}. Since $|\Omega^{-2} \partial_v( r \phi) (p)|\lesssim\sqrt{\kappa}$, by potentially making $C_{data}$ larger depending only on $\Delta$ and the choice of $\psi$, we readily find data  $\phi = \sqrt \kappa \tilde \psi$ on $C_0^+$ satisfying \eqref{eq:self-similarity-outgoing-der}. The condition \eqref{eq:self-similarity-outgoing-der-2} follows now from \eqref{eq:self-similarity-outgoing-der} and the discrete self-similarity of $\Omega^{-2} \partial_v r$ shown above. Note that the  gauge $\Omega^2 = \frac{1}{1-\kappa}$ in $C_0^+$ uniquely determines $r$ on $C_0^+$. 
\end{proof}

\section{Construction of the exterior region \texorpdfstring{$\mathcal R_{\mathrm{Ext}}$}{Rext}}
\label{sec:exterior-region}

In this section we will prove our main result \cref{main theorem}. Throughout \cref{sec:exterior-region} we fix the parameter $\Delta >0$. We also introduce an additional parameter
\begin{equation}
    0 < \delta \ll 1 
\end{equation}
which is small and independent of all other parameters. For instance, setting $\delta = 10^{-5}$ would work for our proof. 
We consider discretely self-similar characteristic data with period $\Delta$ for ($r,\Omega^2, \phi$) on $C_0^- \cup_p C_0^+$ as defined in \cref{def:discrete-self-similar}. 
According to  \cref{prop:data-non-trivial}, there exists a $C_{data}(\Delta)>0$ such that for all $\kappa>0$, and $N\geq 1$, there exists such nontrivial discretely self-similar characteristic data with period $\Delta$.  
For the rest of \cref{sec:exterior-region} we \textbf{fix} the parameters
\begin{equation} \label{eq:fixed-paramters}
\delta, \Delta, C_{data}(\Delta),  \text{ and } N
\end{equation}
and we will use the following convention.  For non-negative quantities $A$ and $B$, we write $A\lesssim B$ (or also $A=O(B)$) if there exists a constant $\tilde C>0$ only depending on $\delta$, $\Delta$, and $C_{data}(\Delta)$   such that $A\leq \tilde C B$. We will also write  ``for sufficiently small $ A >0   $'' if there exists a constant $A_0 >0$ sufficiently small and only depending on $\delta$, $\Delta$,  and $C_{data}(\Delta)$   such that $A$ satisfies $0< A \leq A_0$. Similarly, we write  ``for sufficiently small $ A >0   $ depending on $B$'' if there exists a constant $A_0>0$ sufficiently small and only depending on $\delta$, $\Delta$, $C_{data}(\Delta)$ and $B$ such that $A$ satisfies $0< A \leq A_0$.

\subsubsection*{Hierarchy of scales} 
In the proofs in \cref{sec:exterior-region} we will introduce further parameters of different scales. Our hierarchy of scales is 
\begin{equation}
   \kappa(\delta,\underline v, D)\ll  \frac{1}{D(\delta, \underline v)}\ll \underline v(\delta) \ll \delta \ll 1,
\end{equation}
where we suppressed the additional dependence on the parameters $\Delta, C_{data}, k$ occurring in \eqref{eq:fixed-paramters}. We will also make use of  $\epsilon \doteq  40 \delta$ and $\eta \doteq 10 \epsilon = 400 \delta$.

By standard local existence results (see e.g.\ \cite[Proposition~1.1]{D05-higgs}),  data as in \cref{def:discrete-self-similar} give rise to a unique solution \begin{equation}\label{eq:solution-by-data}
 (r,\Omega^2,\phi) \in C^{N+2}(\mathcal R_0) \times C^{N+1}(\mathcal R_0) \times C^{N+1}(\mathcal R_0) 
 \end{equation} 
 in the domain
 \[ \mathcal R_0 = \big\{(u,v) \in \mathcal{R}_{\mathrm{Ext}}: v\in[0,v_*(u)]\big\}\]
for some function $v_*(u)>0$.  A priori, however, we have no control on the size of this domain as $u\rightarrow 0$.

 \subsubsection*{The decomposition of $\mathcal R_{\mathrm{Ext}}$ into the regions $\mathcal R_I$--$\mathcal R_{IV}$} 

In order to show \cref{main theorem} we have to show that $(r,\Omega^2,\phi)$ extends to the whole domain $\mathcal R_{\mathrm{Ext}}$. Of course, the difficulty is to show existence as uniqueness follows readily.  To show existence in $\mathcal R_{\mathrm{Ext}}$ we will split up the region $\mathcal R_{\mathrm{Ext}}$ into four regions $\mathcal R_I$--$\mathcal R_{IV}$ and treat each region independently, see \cref{fig:regions}. This decomposition was first established in \cite{RSR23} and also used in \cite{S22}. 
We begin with region $\mathcal R_I$.

\subsection{Region \texorpdfstring{$\mathcal R_I$}{RII}}\label{region I section}
We define 
\[ \mathcal R_I\doteq \bigg\{(u,v)\in \Rext: v\in[0,\underline{v}^{1-\kappa}],\ \frac{v}{|u|^{1-\kappa}}\leq\underline{v}^{1-\kappa}\bigg\},\]
which depends on a small parameter $\underline v>0$.

\begin{proposition}\label{region I main result}
    For  $\underline{v}>0$ sufficiently small and $\kappa>0$ sufficiently small depending on $\underline v$, the solution $(r,\Omega^2,\phi)$ of \eqref{eq:solution-by-data} exists in region $\mathcal R_I$. 
Moreover, $(r,\Omega^2,\phi)$ satisfies the estimates \eqref{A1 improved}--\eqref{A5 improved} in region $\mathcal R_I$.
\end{proposition}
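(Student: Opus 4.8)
The plan is to run a continuity/bootstrap argument in the $v$-direction, starting from the characteristic data on $C_0^- \cup_p C_0^+$ and propagating the self-similar bounds forward in $v$ throughout the narrow region $\mathcal{R}_I$, which has width $v \lesssim \underline{v}^{1-\kappa}|u|^{1-\kappa}$. Since local existence from \eqref{eq:solution-by-data} already gives a solution on some $\{v \le v_*(u)\}$, it suffices to show that the bootstrap estimates \eqref{A1 improved}--\eqref{A5 improved} (in their un-improved, slightly weaker bootstrap form) are \emph{improved} strictly on any slab where they hold; a standard open-closed argument in the region then forces $v_*(u) \ge \underline{v}^{1-\kappa}|u|^{1-\kappa}$ and yields existence on all of $\mathcal{R}_I$ together with the stated estimates. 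First I would fix the bootstrap assumptions: constants-times-(data size) bounds for $\Omega^{-2}\partial_v r$, $\partial_u r$, $\partial_u \log\Omega^2$, $\partial_u\phi$, $r$ (comparable to $-u$), and $\Omega^{-2}\partial_v\phi$, all with the self-similar weights suggested by the vector field $S = u\partial_u + (1-\kappa)v\partial_v$ and the data bounds \eqref{eq:bounds-on-incoming-cone}, \eqref{ansatz for phi C0+}, \eqref{dv r bound}, \eqref{dv phi bound}.

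The core estimates split according to which transport direction is ``good.'' For the quantities satisfying $\partial_v$-transport equations—namely $\Omega^{-2}\partial_v r$ (from \eqref{partial v} rewritten), $\partial_u\phi$ and $\partial_u r$ and $\partial_u\log\Omega^2$ (from \eqref{wave}, \eqref{partial uv r}, \eqref{partial uv Omega} after using $\partial_u r$)—I would integrate from $C_0^-$ in $v$ and absorb the error terms using that the $v$-extent of $\mathcal{R}_I$ is $\lesssim \underline{v}^{1-\kappa}|u|^{1-\kappa} \ll |u|$, so that each integration in $v$ against the appropriate weight contributes a factor $\lesssim \underline{v}^{1-\kappa}$ (or a small power thereof), which beats any $O(1)$ constant once $\underline v$ is small; this is exactly where the smallness of $\underline v$ enters and recovers the constants. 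The one quantity without a good $\partial_v$-equation is $\Omega^{-2}\partial_v\phi$; following the outline, I would instead control the \emph{difference} $w(u,v) \doteq \Omega^{-2}\partial_v\phi(u,v) - \Omega^{-2}\partial_v\phi(u,0)$, which vanishes on $C_0^-$, derive its $\partial_u$-equation from \eqref{wave} and \eqref{partial u} (the forcing involves $\partial_u\phi$, $\partial_v r$, curvature terms, all already bootstrapped), and integrate in $u$ from $u=-1$, using the initial bound on $C_0^+$ from \eqref{ansatz for phi C0+} and the a priori bound \eqref{dv phi bound} on the $C_0^-$ piece; Grönwall in $u$ closes this since the coefficient is integrable ($\sim 1/|u|$ times small quantities) over $\mathcal{R}_I$ where $v/|u|^{1-\kappa}$ is bounded.

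The main obstacle I anticipate is bookkeeping the precise self-similar weights so that every error term genuinely closes with a gain, rather than merely breaking even: the system \eqref{partial uv Omega}--\eqref{partial v} has several borderline terms (e.g. $\Omega^2/r^2$ in \eqref{partial uv Omega}, $\Omega^2/r$ in \eqref{partial uv r} after dividing by $r$, and $\partial_u\phi\,\partial_v r/r$ in \eqref{wave}) that are scale-invariant under $S$, so the improvement has to come entirely from the smallness of $\underline v$ (the $v$-width of the region) and of $\kappa$ (the data size), not from any decay. Getting the hierarchy $\kappa \ll \underline v \ll 1$ to do this cleanly—choosing $\underline v$ small first to beat the geometric $O(1)$ constants in the $\partial_v$-integrations, then $\kappa$ small depending on $\underline v$ to close the $\Omega^{-2}\partial_v\phi$ Grönwall and the nonlinear scalar-field terms—is the delicate point. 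Everything else is a routine, if lengthy, propagation of estimates along characteristics inside a region that is thin in $v$, so no genuinely new idea beyond the $\mathcal{R}_I$-specific use of $\underline v$-smallness and the auxiliary quantity $w$ is needed.
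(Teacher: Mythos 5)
Your proposal matches the paper's proof of this proposition essentially step for step: the same bootstrap on the differences $\psi(u,v)-\psi(u,0)$ for $\Omega^{-2}\partial_v r$, $\partial_u\phi$, $\partial_u r$, $\log\Omega^2$, integrated in the $v$-direction with the improvement coming from the thinness of $\mathcal R_I$ in $v$ (a gain of a small power of $\underline v$), and the same auxiliary difference $\widetilde{\Omega^{-2}\partial_v\phi}$ propagated in the $u$-direction from the data on $C_0^+$. The only slight imprecision is your claim that the Gr\"onwall coefficient for that last quantity is integrable: the linear coefficient in its $\partial_u$-equation is $\sim 1/u$, and the paper absorbs the resulting $|u|^{-1}$ amplification into the weight of the corresponding bootstrap norm (which is why \eqref{A1} carries an extra factor of $|u|^{-1}$ relative to \eqref{A3}) rather than via an integrable-coefficient Gr\"onwall, but this does not change the argument.
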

We will prove \cref{region I main result} as a consequence of \cref{region I bootstrap result} below at the end of \cref{region I section}.

\subsubsection*{Bootstrap Assumptions}
Our proof will be based on a bootstrap argument and we will now formulate the bootstrap assumptions. 

We consider the following bootstrap assumptions
\begin{align}\label{A1}
    \big|\Omega^{-2}\partial_v\phi(u,v)-\Omega^{-2}\partial_v\phi(u,0)\big| & \leq 100C\cdot\sqrt{\kappa}\cdot|u|^{-1}\frac{v^{\frac{1-\kappa-2\delta}{1-\kappa}}}{|u|^{1-\kappa-2\delta}} 
\\\label{A2}
    \big|\Omega^{-2}\partial_vr(u,v)-\Omega^{-2}\partial_vr(u,0)\big| &\leq 100C\cdot\kappa\cdot\frac{v^{\frac{1-\kappa-\delta}{1-\kappa}}}{|u|^{1-\kappa-\delta}}
\\\label{A3}
    \big|\partial_u\phi(u,v)-\partial_u\phi(u,0)\big| &\leq 100C\cdot\sqrt{\kappa}\cdot|u|^{-1}\frac{v^{\frac{1-\kappa-\delta}{1-\kappa}}}{|u|^{1-\kappa-\delta}}
\\\label{A4}
    \big|\partial_ur(u,v)-\partial_ur(u,0)\big| &\leq 100C\cdot\kappa\cdot\frac{v^{\frac{1-\kappa-\delta}{1-\kappa}}}{|u|^{1-\kappa-\delta}}
\\\label{A5}
    \bigg|\log\frac{1-\kappa}{|u|^{\kappa}}\Omega^2-\log\Big(|u|^{-\kappa}e^{-F(u)}\Big)\bigg| &\leq 100C\cdot\kappa\cdot\frac{v^{\frac{1-\kappa-\delta}{1-\kappa}}}{|u|^{1-\kappa-\delta}}
\end{align}
for some constant $C >0 $ depending only on the fixed parameters in \eqref{eq:fixed-paramters}. In particular, in view of our convention, we have $C \lesssim 1$.

\begin{proposition}\label{region I bootstrap result}
    For any $(u_1,v_1)\in \mathcal R_I,$ we assume that  the bootstrap assumptions \eqref{A1}--\eqref{A5} hold for all $(u,v)\in\mathcal{R}^I_{u_1,v_1}=[-1,u_1]\times[0,v_1].$ Then, the following improved estimates hold in $\mathcal{R}^I_{u_1,v_1}$:
    \begin{align}\label{A1 improved}
    \big|\Omega^{-2}\partial_v\phi(u,v)-\Omega^{-2}\partial_v\phi(u,0)\big| & \leq 10C\cdot\sqrt{\kappa}\cdot|u|^{-1}\frac{v^{\frac{1-\kappa-2\delta}{1-\kappa}}}{|u|^{1-\kappa-2\delta}}
\\ \label{A2 improved}
    \big|\Omega^{-2}\partial_vr(u,v)-\Omega^{-2}\partial_vr(u,0)\big|& \leq 10C\cdot\kappa\cdot\frac{v^{\frac{1-\kappa-\delta}{1-\kappa}}}{|u|^{1-\kappa-\delta}}
\\\label{A3 improved}
    \big|\partial_u\phi(u,v)-\partial_u\phi(u,0)\big|& \leq 10C\cdot\sqrt{\kappa}\cdot|u|^{-1}\frac{v^{\frac{1-\kappa-\delta}{1-\kappa}}}{|u|^{1-\kappa-\delta}}
\\ \label{A4 improved}
    \big|\partial_ur(u,v)-\partial_ur(u,0)\big| & \leq 10C\cdot\kappa\cdot\frac{v^{\frac{1-\kappa-\delta}{1-\kappa}}}{|u|^{1-\kappa-\delta}}
\\\label{A5 improved}
    \bigg|\log\frac{1-\kappa}{|u|^{\kappa}}\Omega^2-\log\Big(|u|^{-\kappa}e^{-F(u)}\Big)\bigg|& \leq 10C\cdot\kappa\cdot\frac{v^{\frac{1-\kappa-\delta}{1-\kappa}}}{|u|^{1-\kappa-\delta}}
\end{align}
\end{proposition}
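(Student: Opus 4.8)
The plan is a standard bootstrap argument: I would assume \eqref{A1}--\eqref{A5} with the constant $100C$ on an arbitrary $\mathcal R^I_{u_1,v_1}=[-1,u_1]\times[0,v_1]\subseteq\mathcal R_I$ and recover the same bounds with $10C$. (Nonemptiness and openness of the set of admissible $(u_1,v_1)$, needed for \cref{region I main result}, come from local existence \eqref{eq:solution-by-data} and the data bounds of \cref{initial data section}, and the defining inequality of $\mathcal R_I$ — $v\le\underline{v}^{1-\kappa}|u|^{1-\kappa}$, equivalently $v^{1/(1-\kappa)}/|u|\le\underline{v}$ — keeps us inside the region.) I abbreviate $V\doteq v^{1/(1-\kappa)}$, so the weights are $(V/|u|)^{1-\kappa-\delta}$ in \eqref{A2}--\eqref{A5} and the weaker $(V/|u|)^{1-\kappa-2\delta}$ in \eqref{A1}.

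\emph{Step 1 (crude pointwise bounds).} First I would combine \eqref{A1}--\eqref{A5} with the rigidity and self-similarity estimates of \cref{lem:rigidity-of-Omega}--\cref{lem:derivation-of-c2} and the smallness \eqref{dv r bound}, \eqref{dv phi bound} to obtain, on $\mathcal R^I_{u_1,v_1}$, the expected bounds $r\sim|u|$, $\Omega^2\sim|u|^\kappa$, $|\partial_u r+1|\lesssim\kappa$, $|4\Omega^{-2}\partial_v r-1|\lesssim\kappa$, $|\partial_v r|\lesssim|u|^\kappa$, $|\partial_u\phi|\lesssim\sqrt\kappa|u|^{-1}$, $|\partial_v\phi|\lesssim\sqrt\kappa|u|^{\kappa-1}$, $|\Omega^{-2}\partial_v\phi|\lesssim\sqrt\kappa|u|^{-1}$ and $|\mu|\lesssim\kappa$, with constants \emph{independent of $C$}: the bootstrap corrections carry the small factor $C\kappa(V/|u|)^{1-\kappa-\delta}\le C\kappa\underline{v}^{1-\kappa-\delta}$. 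The decisive structural input is that $\mu$ has size $\kappa$, so the two a priori borderline terms on the right of \eqref{partial uv Omega} recombine with a gain:
\[
\frac{\Omega^2}{2r^2}+\frac{2\partial_v r\,\partial_u r}{r^2}=\frac{\Omega^2\mu}{2r^2},\qquad \partial_u\partial_v r=-\frac{\Omega^2\mu}{4r}.
\]

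\emph{Step 2 (improving \eqref{A2}--\eqref{A5}).} For each of these I would integrate the relevant transport equation in the $\partial_v$ direction from $C_0^-$, where the data are \emph{exactly} self-similar, so the $v=0$ value of the difference vanishes and only the source survives: $\partial_v(\Omega^{-2}\partial_v r)=-r\Omega^2(\Omega^{-2}\partial_v\phi)^2$ for \eqref{A2}; $\partial_v\partial_u r=\partial_u\partial_v r=-\tfrac{\Omega^2\mu}{4r}$ for \eqref{A4}; $\partial_v(r\partial_u\phi)=-\partial_u r\,\partial_v\phi$, divided by $r$, for \eqref{A3}; and for \eqref{A5}, whose left side equals $\log\Omega^2(u,v)-\log\Omega^2(u,0)$, first $\partial_v\partial_u\log\Omega^2=-2\partial_u\phi\partial_v\phi+\tfrac{\Omega^2\mu}{2r^2}$ integrated in $u$ from $C_0^+$ (using the gauge $\partial_v\log\Omega^2=0$ there, \eqref{eq:gauge-for-omega-outgoing}) to get $|\partial_v\log\Omega^2|\lesssim\kappa|u|^{\kappa-1}$, then in $v$. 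In every case the source is of order $\kappa^a|u|^{\kappa-1}$ — the factor $\kappa$ coming from the smallness of $\mu$ or of $(\partial_v\phi)^2$, and lost if one naively uses $\partial_v(r\partial_u r)=-\tfrac14\Omega^2$ or splits the borderline terms — so its $v$-integral is $\lesssim\kappa^a|u|^{\kappa-1}v$, which the elementary inequality $|u|^{\kappa-1}v\le\underline{v}^{\kappa+\delta}\,v^{\frac{1-\kappa-\delta}{1-\kappa}}|u|^{-(1-\kappa-\delta)}$ (valid on $\mathcal R_I$) converts into the desired right-hand side with an extra $\underline{v}^{\kappa+\delta}$. Since the constants here are independent of $C$, taking $C$ large (in terms of the fixed parameters \eqref{eq:fixed-paramters}) and then $\underline{v},\kappa$ small closes \eqref{A2}--\eqref{A5}.

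\emph{Main obstacle: \eqref{A1}.} Here the transport equation for $\Omega^{-2}\partial_v\phi$ runs in the $\partial_u$ direction, $\partial_u(r\partial_v\phi)=-\partial_u\phi\,\partial_v r$, so one cannot integrate from $C_0^-$; instead I would integrate from $C_0^+$:
\[
(r\partial_v\phi)(u,v)-(r\partial_v\phi)(u,0)=\bigl[(r\partial_v\phi)(-1,v)-(r\partial_v\phi)(-1,0)\bigr]-\int_{-1}^{u}\bigl[(\partial_u\phi\,\partial_v r)(u',v)-(\partial_u\phi\,\partial_v r)(u',0)\bigr]\,du'.
\]
The boundary term is bounded by the asymptotically-flat data estimates \eqref{ansatz for phi C0+} (it is $\lesssim\sqrt\kappa\min(v,1)=\sqrt\kappa\,v$ on $\mathcal R_I$); the integrand of the $u'$-integral is, schematically, a sum of differences of $\partial_u\phi$, $\Omega^2$, $\Omega^{-2}\partial_v r$ controlled by \eqref{A3}, \eqref{A5}, \eqref{A2} times bounded remaining factors; and $\Omega^{-2}\partial_v\phi=\tfrac{r\partial_v\phi}{\Omega^2 r}$ and its difference from the $v=0$ value are then recovered by dividing by $\Omega^2 r$ and using \eqref{A5} and the bound on $r$. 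No Gr\"onwall is needed, since this $\partial_u$-equation has no term proportional to $r\partial_v\phi$; the real difficulty is that integrating in $u'$ accumulates a negative power of $|u'|$ near $u'=0$, so the decay $v^{\frac{1-\kappa-\delta}{1-\kappa}}$ fed in from \eqref{A2}--\eqref{A5} comes out as the weaker $v^{\frac{1-\kappa-2\delta}{1-\kappa}}$ multiplied by a factor $(V/|u|)^\delta$ — which is exactly why \eqref{A1} is stated with $2\delta$ rather than $\delta$. On $\mathcal R_I$, $(V/|u|)^\delta\le\underline{v}^\delta$, so every contribution to $\Omega^{-2}\partial_v\phi(u,v)-\Omega^{-2}\partial_v\phi(u,0)$ ultimately carries an explicit small factor $\underline{v}^\delta$, $\underline{v}^{2\delta}$ or $\kappa$ (and the $C$-dependent pieces carry one of them too), so the bound again closes once $C$ is large and $\underline{v},\kappa$ are small. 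Getting this weight bookkeeping right — the interplay of the four different exponents and the two integration directions — is the bulk of the work, which is why \eqref{A1} is the delicate case.
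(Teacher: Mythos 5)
Your proposal is correct and follows essentially the same route as the paper: crude pointwise bounds from the bootstrap plus the data rigidity of \cref{lem:derivation-of-c1}--\cref{lem:derivation-of-c2}; the structural cancellation $\tfrac{\Omega^2}{2r^2}+\tfrac{2\partial_vr\partial_ur}{r^2}=\tfrac{\Omega^2\mu}{2r^2}$ with $\mu=O(\kappa)$; improvement of \eqref{A2}--\eqref{A4} by $v$-integration from $C_0^-$ and of \eqref{A5} by first integrating $\partial_u\partial_v\log\Omega^2$ in $u$ from the gauge \eqref{eq:gauge-for-omega-outgoing}; and the conversion $\tfrac{v}{|u|^{1-\kappa}}\le \underline v^{\delta}\cdot\tfrac{v^{(1-\kappa-\delta)/(1-\kappa)}}{|u|^{1-\kappa-\delta}}$ to close with the smallness of $\underline v$. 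Your identification of \eqref{A1} as the case requiring $u$-integration from $C_0^+$ with a $\delta$-loss in the weight (hence the $2\delta$ in \eqref{A1}) is exactly the paper's mechanism. The one implementation difference is that for \eqref{A1} the paper works with $\widetilde{\Omega^{-2}\partial_v\phi}$ directly, which produces a zeroth-order term $\widetilde{\Omega^{-2}\partial_v\phi}\cdot(\partial_u\log r(u,0)-F'(u))$ handled by the explicit integrating factor $|u|e^{-F(u)}$, whereas you integrate the conservation-form equation $\partial_u(r\partial_v\phi)=-\partial_u\phi\,\partial_vr$ and convert back by dividing by $\Omega^2 r$; the price is the extra commutator terms $(r\partial_v\phi)(u,0)\cdot\widetilde{(\Omega^2r)^{-1}}$, which are indeed controlled by \cref{lem:radius-bounds} and \eqref{A5}, so both variants work and yield the same weight bookkeeping. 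One small slip: your ``elementary inequality'' claims a gain $\underline v^{\kappa+\delta}$, but the correct gain from $V/|u|\le\underline v$ on $\mathcal R_I$ is only $(V/|u|)^{\delta}\le\underline v^{\delta}$; this is harmless since any positive power of $\underline v$ suffices.
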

\begin{proof}[Proof of \cref{region I bootstrap result}]
The bootstrap assumptions hold initially by continuity: We first notice that the left hand sides vanish on the incoming cone $v=0.$ Similarly, by choosing $C$ sufficiently large, the bootstrap assumptions hold on $u=-1$ with $C$ instead of $100C,$ because of the choice of initial data and the previous estimates on the outgoing cone $\{u=-1\}.$
For the rest of the proof of   \cref{region I bootstrap result},  we will assume that the bootstrap assumptions hold in $\mathcal{R}^I_{u_1,v_1}$ and we prove the desired improved estimates.

\subsubsection*{Preliminary Estimates}
As a consequence of the bootstrap assumptions and the estimates on the initial data and by choosing $\underline v>0$ sufficiently small, we get the bounds:

\begin{equation}\label{B1}
    \big|\Omega^{-2}\partial_v\phi(u,v)\big|\leq 2C\cdot\sqrt{\kappa}\cdot|u|^{-1}
\end{equation}
\begin{equation}\label{B2}
    0 <  \Omega^{-2}\partial_vr(u,v) \leq1,\ \big|\Omega^{-2}\partial_vr(u,v)-\Omega^{-2}\partial_vr(u,0)\big|\leq \kappa
\end{equation}
\begin{equation}\label{B3}
    \big|\partial_u\phi(u,v)\big|\leq 2C\cdot\sqrt{\kappa}\cdot|u|^{-1}
\end{equation}
\begin{equation}\label{B4}
    \big|\partial_ur(u,v)\big|\leq2,\ \big|\partial_ur(u,v)-\partial_ur(u,0)\big|\leq\kappa
\end{equation}
\begin{equation}\label{B5}
    1-2C\kappa\leq\frac{1-\kappa}{|u|^{\kappa}}\Omega^2\leq 1+2C\kappa
\end{equation}

We also have the estimates for $r$:

\begin{lemma}\label{lem:radius-bounds}
     The area radius $r$ satisfies the following bounds:
     \[0\leq r(u,v)+u\leq4Cv|u|^{\kappa}\leq|u|\]
     \[|r^{-1}(u,v)+u^{-1}|\leq4C\cdot\frac{1}{|u|}\cdot\frac{v}{|u|^{1-\kappa}}\]
     \[r(u,v)=-u\cdot\bigg(1+\frac{1}{4}\cdot\frac{1}{1-\kappa}\cdot\frac{v}{|u|^{1-\kappa}}+O\bigg(\kappa\cdot\frac{v}{|u|^{1-\kappa}}\bigg)\bigg)\]
\end{lemma}
\begin{proof}
     Since $\partial_v r>0$ and $r=-u$ on $C_0^-$ we have that $-u\leq r(u,v)$ in $\mathcal R_{I}$ Moreover, we have
\[r(u,v)+u\leq\int_0^v\Omega^2\cdot\Omega^{-2}\partial_vr(u,v')dv'\leq4Cv|u|^{\kappa}\leq|u|\]
for $\underline v$ sufficiently small. Moreover,
\[|r^{-1}(u,v)+u^{-1}|\leq \frac{|r(u,v)+u|}{u^2} \leq4C\cdot\frac{1}{|u|}\cdot\frac{v}{|u|^{1-\kappa}}\]
as well as 
\[r(u,v)=-u+\int_0^v\Omega^2\cdot\Omega^{-2}\partial_vr(u,v')dv'=-u+\int_0^v\frac{|u|^{\kappa}}{1-\kappa}\cdot\big(1+O(\kappa)\big)\cdot\bigg(\frac{1}{4}+O(\kappa)\bigg)dv'\]
which concludes the proof of \cref{lem:radius-bounds}.
\end{proof}

\subsubsection*{Main Estimates}
We will now improve the bootstrap assumptions, completing the proof of   \cref{region I bootstrap result}. We introduce the following notation: for any function $\psi$  we define:
\[\widetilde{\psi}(u,v):=\psi(u,v)-\psi(u,0)\]

We begin by improving assumption \eqref{A5}. Notice that we have the equation:
\[\partial_u\partial_v\log\frac{1-\kappa}{|u|^{\kappa}}\Omega^2=-2\partial_u\phi\partial_v\phi+\frac{\Omega^2}{2r^2}+\frac{2\partial_vr\partial_ur}{r^2}=-2\partial_u\phi\partial_v\phi+\frac{\Omega^2}{2r^2}\bigg(1+4\Omega^{-2}\partial_vr\partial_ur\bigg)=\]\[=-2\partial_u\phi\partial_v\phi+\frac{\Omega^2}{2r^2}\bigg(4\widetilde{\Omega^{-2}\partial_vr}\cdot\partial_ur+4\Omega^{-2}\partial_vr(u,0)\cdot\widetilde{\partial_ur}+1-4\Omega^{-2}\partial_vr(u,0)\bigg)\]
Using the bootstrap assumptions and the preliminary estimates, we have
\begin{equation}\partial_u\partial_v\log\frac{1-\kappa}{|u|^{\kappa}}\Omega^2=O\big(\kappa\cdot|u|^{-2+\kappa}\big)
\label{eq:estimate-on-Omega-wave}
\end{equation}
We integrate from $-1$ to $u,$ and use the fact that $(1-\kappa)\Omega^2(-1,v)=1$, in order to obtain that:
\[\bigg|\partial_v\log\frac{1-\kappa}{|u|^{\kappa}}\Omega^2\bigg|=O\big(\kappa\cdot|u|^{-1+\kappa}\big)\]
We integrate from $0$ to $v$ and obtain.
\[\bigg|\log\frac{1-\kappa}{|u|^{\kappa}}\Omega^2-\log\Big(|u|^{-\kappa}e^{-F(u)}\Big)\bigg|\lesssim\kappa\cdot\frac{v}{|u|^{1-\kappa}}\lesssim\kappa\cdot\frac{v^{\frac{1-\kappa-\delta}{1-\kappa}}}{|u|^{1-\kappa-\delta}}\cdot\frac{v^{\frac{\delta}{1-\kappa}}}{|u|^{\delta}} \leq \kappa\cdot\frac{v^{\frac{1-\kappa-\delta}{1-\kappa}}}{|u|^{1-\kappa-\delta}}\cdot\underline v^{\delta}. \]
By choosing $\underline v$ sufficiently small we improve the constant in  \eqref{A5}. Thus, we proved \eqref{A5 improved}.

We improve assumption \eqref{A2} next. We have by equation \eqref{partial v}, together with the bootstrap assumptions and the preliminary estimates:
\[\big|\widetilde{\Omega^{-2}\partial_vr}\big|\leq\int_0^vr\cdot\Omega^2\cdot\big(\Omega^{-2}\partial_v\phi\big)^2(u,v')dv'\lesssim\int_0^v\kappa\cdot|u|^{\kappa-1}dv'\lesssim\kappa\cdot\frac{v}{|u|^{1-\kappa}},\]
which improves assumption \eqref{A2} as before.
Thus, we proved \eqref{A2 improved}.

We improve assumption \eqref{A3}. From the wave equation \eqref{wave}, the bootstrap assumptions, and the preliminary estimates we have that
\[\big|\widetilde{r\partial_u\phi}\big|(u,v)\leq\int_0^v\big|\partial_ur\big|\cdot\Omega^2\cdot\big|\Omega^{-2}\partial_v\phi \big|dv'\lesssim\sqrt{\kappa}\cdot\frac{v}{|u|^{1-\kappa}}.\]
Moreover, we also have the estimate:
\[r(u,v)\cdot\big|\widetilde{\partial_u\phi}\big|\leq\big|\widetilde{r\partial_u\phi}\big|(u,v)+\big|\widetilde{r}\big|(u,v)\cdot\big|\partial_u\phi\big|(u,0)\lesssim\sqrt{\kappa}\cdot\frac{v}{|u|^{1-\kappa}},\]
where we used the previous bound for $\widetilde{r\partial_u\phi},$ the bound on initial data for $\partial_u\phi(u,0)$, and the preliminary bound for $\widetilde{r}(u,v).$ This improves assumption \eqref{A3} as before, so we obtain \eqref{A3 improved}.

We improve assumption \eqref{A4}. Note that we can write equation \eqref{partial uv r} as:
\[\partial_v\big(r\cdot\widetilde{\partial_ur}\big)=-\frac{\Omega^2}{4}+\partial_vr=\Omega^2\cdot\bigg(\widetilde{\Omega^{-2}\partial_vr}(u,v)+\Omega^{-2}\partial_vr(u,0)-\frac{1}{4}\bigg)=O\big(\kappa\cdot|u|^{\kappa}\big)\]
We integrate this and divide by $r(u,v)$ in order to get:
\[\widetilde{\partial_ur}=O\bigg(\kappa\cdot\frac{v}{|u|^{1-\kappa}}\bigg)\]

Finally, we improve assumption \eqref{A1}. We use \eqref{wave} in order to obtain a suitable equation for $\widetilde{\Omega^{-2}\partial_v\phi}.$ We write \eqref{wave} as
\begin{equation*}\partial_u\big(\Omega^{-2}\partial_v\phi\big)=-\Omega^{-2}\partial_v\phi\cdot\bigg(\frac{\partial_ur}{r}-\partial_u\log\Omega^{-2}\bigg)-\frac{1}{r}\partial_u\phi\cdot\Omega^{-2}\partial_v r\end{equation*}
from which we derive the following equation for  $\widetilde{\Omega^{-2}\partial_v\phi}$:
\begin{multline*}
\partial_u \left( \widetilde{\Omega^{-2}\partial_v\phi}\right)+ \widetilde{\Omega^{-2}\partial_v\phi}\cdot\big(\partial_u\log r(u,0)-F'(u)\big) \\ 
=-\Omega^{-2}\partial_v\phi\cdot\bigg(\frac{1}{r}\cdot\widetilde{\partial_ur}-\widetilde{r^{-1}}-\widetilde{\partial_u\log\Omega^{-2}}\bigg)-\frac{1}{r}\cdot\Omega^{-2}\partial_vr\cdot\widetilde{\partial_u\phi} \\ -\partial_u\phi(u,0)\cdot\frac{1}{r}\cdot\widetilde{\Omega^{-2}\partial_vr}-\partial_u\phi(u,0)\cdot\Omega^{-2}\partial_vr(u,0)\cdot\widetilde{r^{-1}}   
\end{multline*}
Using \eqref{eq:estimate-on-Omega-wave} we estimate 
\begin{equation}
|    \widetilde{\partial_u\log\Omega^{-2}}|\lesssim \kappa |u|^{-2+\kappa} v 
\end{equation} which together with the bootstrap assumptions and the preliminary estimates, gives us
\[\partial_u \left( \widetilde{\Omega^{-2}\partial_v\phi}\right)+ \widetilde{\Omega^{-2}\partial_v\phi}\cdot\big(\partial_u\log r(u,0)-F'(u)\big)=O\bigg(\sqrt{\kappa}\cdot|u|^{-2}\frac{v^{\frac{1-\kappa-\delta}{1-\kappa}}}{|u|^{1-\kappa-\delta}}\bigg)\]
We integrate this equation to obtain:
\begin{align*}|\widetilde{\Omega^{-2}\partial_v\phi}|(u,v) & \lesssim\frac{e^{F(u)}}{|u|}\cdot|\widetilde{\Omega^{-2}\partial_v\phi}|(-1,v)+\frac{e^{F(u)}}{|u|}\cdot\int_{-1}^u\frac{|u'|}{e^{F(u')}}\cdot\sqrt{\kappa}\cdot|u'|^{-2}\cdot\frac{v^{\frac{1-\kappa-\delta}{1-\kappa}}}{|u'|^{1-\kappa-\delta}}du' \\ & \lesssim\sqrt{\kappa}\cdot\frac{v}{|u|^{1+\kappa}}+\frac{\sqrt{\kappa}}{|u|^{1+\kappa}}\cdot\int_{-1}^u|u'|^{-1+\kappa}\cdot\frac{v^{\frac{1-\kappa-\delta}{1-\kappa}}}{|u'|^{1-\kappa-\delta}}du'\\ & \lesssim\sqrt{\kappa}\cdot\frac{v}{|u|^{1+\kappa}}+\sqrt{\kappa}\cdot|u|^{-1}\cdot\frac{v^{\frac{1-\kappa-\delta}{1-\kappa}}}{|u|^{1-\kappa-\delta}}
\end{align*}
where we used the initial data estimates \eqref{ansatz for phi C0+} and \eqref{eq:bound-on-F}. Since $1\leq|u|^{-1}$ and $\underline{v}^{\delta}\ll1,$ we proved \eqref{A1 improved}. This completes the proof of  \cref{region I bootstrap result}.
\end{proof}

\begin{proof}[Proof of \cref{region I main result}]
     \cref{region I bootstrap result} implies  \cref{region I main result} by a standard local existence argument.
\end{proof}
In particular, the solution $(r,\Omega^2,\phi)$ exists on the spacelike hypersurface $\big\{v=\underline{v}^{1-\kappa}|u|^{1-\kappa}\big\},$ and  satisfies the following smallness result:
\begin{corollary}\label{region I corollary}
The solution $(r,\Omega^2,\phi)$ satisfies the following estimates on the spacelike hypersurface $\big\{v=\underline{v}^{1-\kappa}|u|^{1-\kappa}\big\}$:
{\allowdisplaybreaks  \begin{align*}\big|\Omega^{-2}\partial_v\phi\big| &\leq 2C\cdot\sqrt{\kappa}\cdot|u|^{-1} \\
  \big|\partial_u\phi\big|&\leq 2C\cdot\sqrt{\kappa}\cdot|u|^{-1}\\
\big|4\Omega^{-2}\partial_vr-1\big|&\leq 2C\cdot\kappa,\\ \big|\partial_ur+1\big|&\leq 2C\cdot\kappa,\\ \big||u|^{-\kappa}\Omega^2-1\big|&\leq 2C\cdot\kappa\\ 
r&=-u\cdot\left(1+\frac{1}{4}\cdot\underline{v}+O(\kappa^{1-\delta})\right)
\end{align*}}
for some constant $C>0$ with $C\lesssim 1$. 
\end{corollary}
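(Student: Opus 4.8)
The plan is to read the stated bounds directly off \cref{region I main result} and \cref{lem:radius-bounds}, restricted to the hypersurface $\Sigma\doteq\{v=\underline v^{1-\kappa}|u|^{1-\kappa}\}$. Since $|u|\le 1$ on $\Sigma$ we have $v\le\underline v^{1-\kappa}$ there, so $\Sigma$ is exactly the spacelike boundary component of $\mathcal R_I$ on which $v/|u|^{1-\kappa}=\underline v^{1-\kappa}$; in particular $\Sigma\subset\mathcal R_I$, and \cref{region I main result} already gives existence of $(r,\Omega^2,\phi)$ on $\Sigma$ together with the improved estimates \eqref{A1 improved}--\eqref{A5 improved} there.

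The first observation is that on $\Sigma$ every self-similar weight in \eqref{A1 improved}--\eqref{A5 improved} collapses to a power of $\underline v$: for any $\sigma>0$, $\dfrac{v^{\frac{1-\kappa-\sigma}{1-\kappa}}}{|u|^{1-\kappa-\sigma}}=\Big(\dfrac{v}{|u|^{1-\kappa}}\Big)^{\frac{1-\kappa-\sigma}{1-\kappa}}=\underline v^{1-\kappa-\sigma}<1$ for $\underline v$ small. Inserting this into \eqref{A1 improved}--\eqref{A5 improved} and combining, via the triangle inequality, with the data bounds along $C_0^-$ — namely $\partial_u r(u,0)=-1$, $|\partial_u\phi(u,0)|\le C_{data}\sqrt\kappa\,|u|^{-1}$ from \eqref{eq:bounds-on-incoming-cone}, $|4\Omega^{-2}\partial_v r(u,0)-1|\lesssim\kappa$ and $|\Omega^{-2}\partial_v r(u,0)|\lesssim1$ from \eqref{dv r bound}, $\big|\log(|u|^{-\kappa}e^{-F(u)})\big|\lesssim\kappa$ from \eqref{eq:bound-on-F}, and $|\Omega^{-2}\partial_v\phi(u,0)|\lesssim\sqrt\kappa\,|u|^{-1}$ (which follows from $r\partial_v\phi=\partial_v(r\phi)-\phi\,\partial_v r$, $r|_{C_0^-}=-u$, and \eqref{dv phi bound}, \eqref{eq:bounds-on-incoming-cone}, \eqref{dv r bound}) — yields each of the first five displayed estimates with a constant of the form $\tilde C+10C\underline v^{1-\kappa-2\delta}$, which is bounded by a constant $\lesssim1$ independent of $\underline v$ and $\kappa$ and hence can be written as $2C$ after renaming. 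These are precisely the preliminary bounds \eqref{B1}--\eqref{B5} evaluated on $\Sigma$; in the $\Omega^2$ line one uses $\Omega^2(u,0)=\frac{1}{1-\kappa}e^{-F(u)}$ from \eqref{eq:Omega-on-cin} to pass between $\frac{1-\kappa}{|u|^\kappa}\Omega^2$ and $|u|^{-\kappa}\Omega^2$.

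For the last line I would use the third bound of \cref{lem:radius-bounds}, which on $\Sigma$ reads $r=-u\big(1+\tfrac14\tfrac{1}{1-\kappa}\underline v^{1-\kappa}+O(\kappa\underline v^{1-\kappa})\big)$. Since $\kappa$ is taken much smaller than $\underline v$ in the hierarchy of scales, $\underline v^{-\kappa}=e^{\kappa\log(1/\underline v)}=1+O(\kappa\log(1/\underline v))$, so using $\underline v\log(1/\underline v)\lesssim1$ and $\kappa\le\kappa^{1-\delta}$ one gets $\tfrac14\tfrac{1}{1-\kappa}\underline v^{1-\kappa}=\tfrac14\underline v+O(\kappa^{1-\delta})$, while $O(\kappa\underline v^{1-\kappa})=O(\kappa)$, which together give $r=-u\big(1+\tfrac14\underline v+O(\kappa^{1-\delta})\big)$. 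There is no genuine obstacle here since every ingredient is already established; the only points needing mild care are the bookkeeping of constants (ensuring the final constant is $\lesssim1$, uniformly in $\underline v$ and $\kappa$) and the elementary expansion $\underline v^{1-\kappa}=\underline v+O(\kappa^{1-\delta})$ used to extract the explicit $\tfrac14\underline v$ in the radius asymptotics.
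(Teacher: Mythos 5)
Your proposal is correct and follows essentially the route the paper intends: the corollary is just the preliminary bounds \eqref{B1}--\eqref{B5} (i.e., the improved estimates \eqref{A1 improved}--\eqref{A5 improved} combined via the triangle inequality with the data bounds on $C_0^-$) together with the third estimate of \cref{lem:radius-bounds}, all evaluated on $\{v=\underline v^{1-\kappa}|u|^{1-\kappa}\}$, where the self-similar weights collapse to powers of $\underline v$. Your bookkeeping of the constants and the expansion $\underline v^{1-\kappa}=\underline v+O(\kappa)$ for the radius asymptotics are both sound (and in fact give a slightly stronger error than the stated $O(\kappa^{1-\delta})$).
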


\subsection{Region \texorpdfstring{$\mathcal R_{II}$}{RII}}\label{region II section}
In this section we prove the existence of the solution in region $\mathcal R_{II}$. As explained in the introduction, we consider a new system of double null coordinates $(u,V)$, with respect to which the solution is close to Minkowski space in standard null coordinates. We define on $\mathcal R_{\mathrm{Ext}} \setminus \mathcal R_{I}$  the new coordinates
\[V(u,v)=v^{\frac{1}{1-\kappa}}, \quad u(u,v) = u, \]
in mild abuse of notation. 
We remark that the associated transition map is a smooth diffeomorphism on $\mathcal R_{\mathrm{Ext}} \setminus \mathcal R_{I}$ and the metric is given by
\[g=-\bar{\Omega}^2(u,V) dudV+r^2d\sigma_{S^2},\]
where the new lapse function $\bar \Omega^2$ is given as 
\[\bar{\Omega}^2(u,V) =(1-\kappa)\cdot V^{-\kappa} \cdot\Omega^2(u,v(V)). \]
We define 
\[ \mathcal R_{II} =\bigg\{(u,V):\ u\in[-1,0),\ V\in[0,\underline{v}^{-1}],\ \underline{v}\leq\frac{V}{|u|}\leq\underline{v}^{-1}\bigg\}\]
and the main result of this section is
\begin{proposition}\label{region II main result}
 For  $\underline{v}>0$ sufficiently small and $\kappa>0$ sufficiently small depending on $\underline v$, the solution $(r,\Omega^2,\phi)$ exists in region $\mathcal R_{II}$. 
Moreover, the solution satisfies the estimates \eqref{A1 II improved}--\eqref{A5 II improved} in region $\mathcal R_{II}$.
\end{proposition}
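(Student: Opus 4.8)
The plan is to run a bootstrap/continuity argument in the spirit of \cref{region I section}, now in the coordinates $(u,V)$ with $V=v^{1/(1-\kappa)}$ dictated by the self-similar vector field $S$ of \eqref{eq:S-vector-field}, in which the metric reads $g=-\bar\Omega^2\,dudV+r^2d\sigma_{S^2}$. The point of entry is that, by \cref{region I corollary}, on the spacelike hypersurface $\{V=\underline v|u|\}=\{v=\underline v^{1-\kappa}|u|^{1-\kappa}\}$ the solution is an $O(\sqrt\kappa)$--perturbation (with constants permitted to depend on $\underline v$) of a Minkowski-type profile: $\bar\Omega^2\approx1$, $4\bar\Omega^{-2}\partial_V r\approx1$, $\partial_u r\approx-1$, $|\phi|,|\mu|\lesssim\sqrt\kappa$; and on the null segment $\{u=-1,\ V\in[\underline v,\underline v^{-1}]\}$ the data are controlled by the estimates of \cref{initial data section}. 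Accordingly, I would posit bootstrap assumptions on $\bar\Omega^{-2}\partial_V\phi$, $\partial_u\phi$, $\bar\Omega^{-2}\partial_V r$, $\partial_u r$ and $\log\bar\Omega^2$ of the same structure as \eqref{A1}--\eqref{A5}, but with the self-similar weight $v^{(\cdot)/(1-\kappa)}|u|^{-(\cdot)}$ replaced by an appropriate power of $V/|u|$ and, crucially, with the quantities additionally conjugated by $e^{DV/u}$ for a large constant $D=D(\delta,\underline v)$ to be fixed; the improved estimates \eqref{A1 II improved}--\eqref{A5 II improved} would be the output, and \cref{region II main result} would follow from them by the standard local existence/continuation argument. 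Since $V/|u|$ ranges over the \emph{bounded} interval $[\underline v,\underline v^{-1}]$ on $\mathcal R_{II}$, all Gr\"onwall factors produced in the argument are bounded by a single constant $e^{C(\delta,\underline v,D)}$, and taking $\kappa$ small depending on $\underline v$ and $D$ (consistent with the hierarchy $\kappa\ll D^{-1}\ll\underline v$) absorbs these losses against the $\sqrt\kappa$ smallness carried by every error term.

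First I would convert the bootstrap assumptions, together with the data on the two pieces of the past boundary, into clean pointwise bounds (the analogue of \eqref{B1}--\eqref{B5} and \cref{lem:radius-bounds}): $\bar\Omega^2=1+O(\sqrt\kappa)$ — using $(V/|u|)^{-\kappa}=1+O(\kappa)$, which holds on $\mathcal R_{II}$ precisely because $\kappa$ is small depending on $\underline v$ — together with $\partial_V r>0$ and $-\partial_u r>0$, so that $r$ is monotone in both null directions and comparable to the Minkowski area radius $\tfrac12(V+|u|)$ up to $\underline v$--dependent constants, and $|\phi|,|\mu|\lesssim\sqrt\kappa$. These come from integrating the Raychaudhuri equations \eqref{partial u}--\eqref{partial v}, equation \eqref{partial uv r}, and the wave equation \eqref{wave} out of $\{V=\underline v|u|\}$ and $\{u=-1\}$.

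The heart of the matter is improving the bootstrap assumptions. For the quantities carried by the good $\partial_V$-transport equations — $\bar\Omega^{-2}\partial_V r$, and via \eqref{partial uv r}, \eqref{partial uv Omega} also $\partial_u r$ and $\log\bar\Omega^2$ — the argument is structurally the same as in \cref{region I section}: one integrates in $V$ from the spacelike hypersurface $\{V=\underline v|u|\}$ and uses the $O(\sqrt\kappa)$ size of the right-hand sides. The new difficulty is that in $\mathcal R_{II}$ we no longer have the smallness of $V/|u|$ to exploit, while several error terms carry coefficients of size $|u|^{-1}$ (coming from factors such as $\partial_u\log\Omega^2$, $r^{-1}$, $\partial_u r/r$); integrating such a term in $u$ at fixed $V$ from $\{u=-1\}$ would cause uncontrolled growth as $|u|\to0$. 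This is exactly where the conjugation by $e^{DV/u}$ enters: using $\partial_u(V/u)=-Vu^{-2}$ one has
\[
\partial_u\big(e^{DV/u}f\big)=e^{DV/u}\Big(\partial_u f-\tfrac{DV}{u^2}\,f\Big),
\]
so the conjugated equation acquires a damping term $-\tfrac{DV}{u^2}f$ which, for $D$ chosen large depending on $\delta$ and $\underline v$, dominates the borderline error terms of the form $c\,\tfrac{V}{u^2}f$ and turns the $u$-integration into a stable Gr\"onwall inequality — equivalently, a localized energy estimate in the ``time'' $V/u$, with data on $\{V=\underline v|u|\}$. As in \cref{region I section}, the genuinely delicate quantity is $\bar\Omega^{-2}\partial_V\phi$ (the analogue of \eqref{A1}): it has no favorable sign, so one writes the $\partial_u$-equation for the renormalized difference $\bar\Omega^{-2}\partial_V\phi-\bar\Omega^{-2}\partial_V\phi|_{u=-1}$, conjugates it by $e^{DV/u}$, and integrates, feeding in the already-improved bounds for $\partial_u\phi$, $\bar\Omega^{-2}\partial_V r$ and $\log\bar\Omega^2$, and the decay estimate \eqref{ansatz for phi C0+} on $\{u=-1\}$. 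Since every error carries a factor $\sqrt\kappa$ or $\kappa$, for $\kappa$ small enough (depending on $\underline v$ and $D$, hence on all the $e^{D\underline v^{-1}}$-type constants) the improved constants beat the bootstrap constants.

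The step I expect to be the main obstacle is the bookkeeping in this conjugated estimate for $\bar\Omega^{-2}\partial_V\phi$: one must choose the exponents in the bootstrap ansatz and the constant $D$ so that, in the regime $V\sim|u|$ where no power of $V/|u|$ is small, every borderline contribution on the right-hand side is either rendered sign-definite by the $-\tfrac{DV}{u^2}f$ term or carries a spare power of $\sqrt\kappa$, while this estimate is simultaneously coupled to all the other quantities and must be propagated across the full — bounded, but of length $\sim\underline v^{-1}$ — range of the time variable $V/u$.
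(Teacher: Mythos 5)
Your overall strategy is the paper's: pass to $(u,V)$ with $V=v^{1/(1-\kappa)}$, conjugate by $e^{DV/u}$ so that $\partial_u(e^{DV/u}f)=e^{DV/u}(\partial_u f-\tfrac{DV}{u^2}f)$ produces a damping term, run energy/Gr\"onwall estimates in the ``time'' $V/u$ from data on $\{V=-\underline v u\}\cup\{u=-1\}$, and close a bootstrap with $\kappa\ll D^{-1}\ll\underline v$. That part is sound and matches \cref{region II bootstrap result}.

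There is, however, one genuine gap: your treatment of $\log\bar\Omega^2$. You propose to improve it ``structurally as in \cref{region I section}'' by integrating \eqref{partial uv Omega}, whose source is $-2\partial_u\phi\partial_V\phi+\frac{\bar\Omega^2}{2r^2}\bigl(1+4\bar\Omega^{-2}\partial_Vr\,\partial_ur\bigr)$, and to bound the second term using the bootstraps on $\partial_Vr$, $\partial_ur$ and $\bar\Omega^2$. This creates a circularity in the powers of $\kappa$ that does not close. Concretely: to improve the $\partial_Vr$ bootstrap one writes $\bigl|\partial_Vr-\tfrac14\bigr|\le|\bar\Omega^2|\,\bigl|\bar\Omega^{-2}\partial_Vr-\tfrac14\bigr|+\tfrac14\bigl|\bar\Omega^2-1\bigr|$, so the lapse must be controlled to a \emph{strictly better} power of $\kappa$ than $\partial_Vr$ (this is why the paper takes $\kappa^{1-8\delta}$ in \eqref{A5 II} versus $\kappa^{1-10\delta}$ in \eqref{A3 II}--\eqref{A4 II}). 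But if you estimate $1+4\bar\Omega^{-2}\partial_Vr\,\partial_ur$ directly from those same bootstraps you only get the \emph{worse} power $\kappa^{1-10\delta}$, and after the double null integration (which, unlike in $\mathcal R_I$, produces no small factor, only losses of order $\underline v^{-1}$ and $e^{D/\underline v}$) the resulting bound on $\log\bar\Omega^2$ is no better than $\kappa^{1-10\delta}$ — too weak to close \eqref{A5 II}, and hence too weak to close \eqref{A3 II}--\eqref{A4 II}. The paper breaks this loop by recognizing that $1+4\bar\Omega^{-2}\partial_Vr\,\partial_ur=\mu=2m/r$ and estimating the Hawking mass \emph{independently} of the geometric bootstraps: from \eqref{eq:hawking-mass-v} (in the form \eqref{dv m}), $\partial_Vm$ is controlled by $(\partial_V\phi)^2\lesssim\kappa^{1-2\delta}$, which together with $m/r=O(\kappa)$ on $\{V=-\underline vu\}$ gives $m/r\lesssim\kappa^{1-4\delta}$, strictly better than what \eqref{A3 II}--\eqref{A4 II} could ever yield. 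This monotonicity/mass input is the missing idea in your outline; without it (or an equivalent device) the hierarchy of exponents among the geometric bootstrap quantities cannot be made consistent. A secondary, more minor remark: for $\bar\Omega^{-2}\partial_V\phi$ the paper does not renormalize by the boundary value as in $\mathcal R_I$, but instead runs paired energy estimates for $e^{DV/u}r\partial_V\phi$ (in $u$) and $e^{DV/u}r\partial_u\phi$ (in $V$), absorbing each cross term into the other's damping term with a $D^{-1}$ loss; your renormalized-difference variant would need to reproduce this pairing to handle the coupling in the regime $V\sim|u|$.
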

The proof follows the ideas of \cite{RSR23}, by proving estimates which correspond to applying the Gronwall lemma starting from data at $\big\{V=\underline{v}|u|\big\}\cup\big\{u=-1,\ V\in[\underline{v},\underline{v}^{-1}]\big\}.$

\subsubsection*{Bootstrap Assumptions}

We introduce a large constant $D\gg1$ which depends on $\underline{v}$ (and on the parameters in \eqref{eq:fixed-paramters}). We will choose  $\kappa$ sufficiently small depending on $D, \underline{v}$ (and on the parameters in \eqref{eq:fixed-paramters})  such that
\begin{equation}\label{bound for D}
    \kappa^{\delta}\cdot\exp\bigg(\bigg(\frac{D}{\underline{v}}\bigg)^{100}\bigg)\ll1 . 
\end{equation}
We make the following bootstrap assumptions in region $\mathcal R_{II}$:
\begin{align}\label{A1 II}
    \bigg|\frac{V}{4}-u\bigg|\cdot|\partial_V\phi|\cdot\exp\bigg(D\cdot\frac{V}{u}\bigg)&\leq 100\kappa^{\frac{1}{2}-\delta}
\\ \label{A2 II}
    \bigg|\frac{V}{4}-u\bigg|\cdot|\partial_u\phi|\cdot\exp\bigg(D\cdot\frac{V}{u}\bigg)&\leq 100\kappa^{\frac{1}{2}-\delta}
\\ 
\label{A3 II}
    \bigg|\partial_Vr-\frac{1}{4}\bigg|\cdot\exp\bigg(D\cdot\frac{V}{u}\bigg)&\leq 100\kappa^{1-10\delta}
\\
\label{A4 II}
    \big|\partial_ur+1\big|\cdot\exp\bigg(D\cdot\frac{V}{u}\bigg)&\leq 100\kappa^{1-10\delta}
\\
\label{A5 II}
    \big|\bar{\Omega}^2-1\big|\cdot\exp\bigg(D\cdot\frac{V}{u}\bigg)&\leq 100\kappa^{1-8\delta}
\end{align}

The local existence result implies that  \cref{region II main result} follows once we prove 
\begin{proposition}\label{region II bootstrap result}
For any $(u_1,V_1)\in \mathcal R_{II},$ we assume that the solution satisfies the bootstrap assumptions \eqref{A1 II}--\eqref{A5 II} for all $(u,V)\in\mathcal{R}^{II}_{u_1,V_1}=\mathcal R_{II}\cap\big([-1,u_1]\times[0,V_1]\big).$ Then the solution satisfies the improved estimates in $\mathcal{R}^{II}_{u_1,V_1}:$
    \begin{align}\label{A1 II improved}
    \bigg|\frac{V}{4}-u\bigg|\cdot|\partial_V\phi|\cdot\exp\bigg(D\cdot\frac{V}{u}\bigg)&\leq 10\kappa^{\frac{1}{2}-\delta}\\
    \label{A2 II improved}
    \bigg|\frac{V}{4}-u\bigg|\cdot|\partial_u\phi|\cdot\exp\bigg(D\cdot\frac{V}{u}\bigg)&\leq 10 \kappa^{\frac 12 - \delta} \\
    \label{A3 II improved}
    \bigg|\partial_Vr-\frac{1}{4}\bigg|\cdot\exp\bigg(D\cdot\frac{V}{u}\bigg)&\leq 10\kappa^{1-10\delta}
\\ \label{A4 II improved}
\big|\partial_ur+1\big|\cdot\exp\bigg(D\cdot\frac{V}{u}\bigg)&\leq 10\kappa^{1-10\delta}
\\ \label{A5 II improved}
    \big|\bar{\Omega}^2-1\big|\cdot\exp\bigg(D\cdot\frac{V}{u}\bigg)&\leq 10\kappa^{1-8\delta}
\end{align}
\end{proposition}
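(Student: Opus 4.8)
\emph{Overall strategy.} Because the bootstrap assumptions \eqref{A1 II}--\eqref{A5 II} only concern the solution on $\mathcal{R}^{II}_{u_1,V_1}$ and local existence is standard, \cref{region II main result} follows from \cref{region II bootstrap result} by the usual open--closed--nonempty continuity argument, so it suffices to improve the constant $100$ to $10$. The relevant part of the past boundary of $\mathcal{R}^{II}_{u_1,V_1}$ lies in $\{u=-1\}\cup\{V=\underline v|u|\}$, the latter being the surface $\{v=\underline v^{1-\kappa}|u|^{1-\kappa}\}$ on which \cref{region I corollary} applies; on $\{u=-1\}$ I use the $C_0^+$ bounds from \cref{initial data section}. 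Transcribing these into the $(u,V)$ variables via $\partial_V=(1-\kappa)V^{-\kappa}\partial_v$, $\bar\Omega^2=(1-\kappa)V^{-\kappa}\Omega^2$, $\partial_V\phi=\bar\Omega^2\Omega^{-2}\partial_v\phi$, $\partial_V r=\bar\Omega^2\Omega^{-2}\partial_v r$, and using $V^{\pm\kappa}=1+O(\kappa|\log V|)$ with $|\log V|\lesssim\log(1/\underline v)$ on $\mathcal R_{II}$ together with $\exp(DV/u)\le1$ there, one sees that on the past boundary the $\phi$-derivative quantities are $\lesssim\sqrt\kappa$ and the $(\bar\Omega^2-1,\partial_u r+1,\partial_V r-\tfrac14)$ quantities are $\lesssim\kappa\log(1/\underline v)$; for $\kappa$ small depending on $\underline v$ these are far below $\kappa^{1/2-\delta}$ and $\kappa^{1-10\delta}$, so the bootstrap holds near the past boundary.

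\emph{Preliminary estimates.} On $\mathcal{R}^{II}_{u_1,V_1}$ we have $|V/u|\le\underline v^{-1}$, hence $\exp(-DV/u)\le e^{D/\underline v}$, and by \eqref{bound for D} every quantity of the form $\kappa^{a}\exp(-cDV/u)$ ($a,c>0$) arising below is tiny. Equations \eqref{A5 II}, \eqref{A3 II}, \eqref{A4 II} directly give $\bar\Omega^2=1+O(\kappa^{1-8\delta}e^{-DV/u})$, $\partial_V r=\tfrac14+O(\kappa^{1-10\delta}e^{-DV/u})$, $\partial_u r=-1+O(\kappa^{1-10\delta}e^{-DV/u})$; in particular $\partial_V r>0>\partial_u r$, and integrating from the past boundary (where $r\sim|u|+V/4$) while controlling the error with the sharp estimate below yields $r=(V/4-u)(1+o(1))$, hence $r\sim V/4+|u|$ and $r^{-1}\lesssim(V/4+|u|)^{-1}$. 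Finally $\mu=1+4\bar\Omega^{-2}\partial_u r\,\partial_V r=O(\kappa^{1-10\delta}e^{-DV/u})$.

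\emph{The weighted transport estimates.} Every improved bound is obtained by integrating a transport equation against the integrating factor $\exp(DV/u)$. For $\partial_V\phi$ one uses $\partial_u(r\partial_V\phi)=-(\partial_V r)(\partial_u\phi)$: integrating in $u$ along $\{V=V_1\}$ from the past boundary and estimating the source by \eqref{A2 II} and the preliminary bounds,
\[
|r\partial_V\phi|(u_1,V_1)\lesssim\sqrt\kappa+\kappa^{\frac12-\delta}\int_{u_{\mathrm{data}}}^{u_1}\frac{e^{-DV_1/u'}}{V_1/4+|u'|}\,du'.
\]
The key elementary fact is that with $\xi=-DV_1/u'$, so that the limits are $a=DV_1/|u_{\mathrm{data}}|\ge D\underline v\gg1$ and $b=DV_1/|u_1|$, one has $\int_a^b\xi^{-1}e^\xi\,d\xi\le2b^{-1}e^b$ (since $\tfrac{d}{d\xi}(\xi^{-1}e^\xi)\ge\tfrac12\xi^{-1}e^\xi$ for $\xi\ge2$), so the integral above, after multiplication by $e^{DV_1/u_1}=e^{-b}$, is at most $2/b=2|u_1|/(DV_1)\le2/(D\underline v)$ by $V_1/|u_1|\ge\underline v$. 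Thus $|V_1/4-u_1|\,|\partial_V\phi|\,e^{DV_1/u_1}\lesssim\sqrt\kappa+\kappa^{1/2-\delta}/(D\underline v)\le10\kappa^{1/2-\delta}$ for $\kappa$ small and $D$ large, using $|V_1/4-u_1|\sim r$. The $\partial_u\phi$ bound is the mirror image via $\partial_V(r\partial_u\phi)=-(\partial_u r)(\partial_V\phi)$ integrated in $V$, and there is no circularity since each estimate uses only the $100$-bootstrap bound on the other. For $\partial_V r-\tfrac14$ and $\partial_u r+1$ one uses $\partial_u(r\partial_V r)=\partial_V(r\partial_u r)=-\tfrac14\bar\Omega^2$, giving transport equations for $r(\partial_V r-\tfrac14)$ and $r(\partial_u r+1)$ with sources $-\tfrac14[(\bar\Omega^2-1)+(\partial_u r+1)]$ and $-\tfrac14(\bar\Omega^2-1)+(\partial_V r-\tfrac14)$, both $O(\kappa^{1-10\delta}e^{-DV/u})$; the same computation, now with $\int_a^b\xi^{-2}e^\xi\,d\xi\le2b^{-2}e^b$, gives a gain $\kappa^{1-10\delta}|u_1|^2/(DV_1)\lesssim\kappa^{1-10\delta}|u_1|/(D\underline v)$ which closes those two after dividing by $r\gtrsim|u_1|$. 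Finally, for $\bar\Omega^2$ one integrates the mixed equation $\partial_u\partial_V\log\bar\Omega^2=-2\partial_u\phi\,\partial_V\phi+\tfrac{\bar\Omega^2}{2r^2}\mu$ first in $u$ (to bound $\partial_V\log\bar\Omega^2$) and then in $V$, with source of size $O(r^{-2}(\kappa^{1-10\delta}e^{-DV/u}+\kappa^{1-2\delta}e^{-2DV/u}))$ by the preliminary $\mu$-bound and the $100$-bootstrap $\phi$-bounds; each of the two integrations produces a negative power of $D$, the leftover $e^{-2DV/u}\le e^{2D/\underline v}$ is absorbed by \eqref{bound for D}, and splitting $\log\bar\Omega^2=\log((1-\kappa)V^{-\kappa})+\log\Omega^2$ accounts for the boundary value $\bar\Omega^2(-1,V)=V^{-\kappa}$. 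This yields \eqref{A5 II improved}.

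\emph{Main obstacle.} The substantive difficulty is the tension between $\exp(-DV/u)$, which on the right of \eqref{A1 II}--\eqref{A5 II} is as large as $e^{D/\underline v}$ on $\mathcal R_{II}$, and the need to recover a smaller constant after integration. The mechanism is structural: each transport equation must be integrated in the direction along which this weight is largest at the evaluation point, and then the sharp bounds $\int_a^b\xi^{-j}e^\xi\,d\xi\lesssim b^{-j}e^b$ --- valid precisely because the past boundary of $\mathcal R_{II}$ sits at $a=DV/|u_{\mathrm{data}}|\gtrsim D\underline v\gg1$ --- convert the endpoint value $e^b$ into a net gain $b^{-j}\lesssim(D\underline v)^{-j}$, which is exactly what the hierarchy $\kappa\ll1/D\ll\underline v\ll\delta\ll1$ is built to absorb. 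Carrying this out uniformly for all five quantities, and in particular dealing with the quadratic weight $e^{-2DV/u}$ in the $\bar\Omega^2$ equation and the coordinate-change factors $V^{\pm\kappa}$ without degrading the gain, is essentially the whole content of the argument.
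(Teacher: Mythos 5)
Your treatment of \eqref{A1 II}--\eqref{A4 II} is a legitimate variant of the paper's argument: where the paper squares the weighted quantities and absorbs the cross term into the coercive bulk $D\frac{V}{u^2}|\cdot|^2$ (an ``energy estimate''), you integrate the transport equations directly and use the explicit asymptotics $\int_a^b\xi^{-j}e^{\xi}\,d\xi\lesssim b^{-j}e^b$ for $a\geq 2$. Both mechanisms extract the same gain of a negative power of $D\underline v$, and your bookkeeping of the boundary terms and of the ordering $\kappa^{1-2\delta}\leq\kappa^{1-8\delta}\leq\kappa^{1-10\delta}$ is consistent, so these four estimates close.

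There is, however, a genuine gap in your proof of \eqref{A5 II improved}. You feed the ``preliminary $\mu$-bound'' $\mu=1+4\bar\Omega^{-2}\partial_ur\,\partial_Vr=O\big(\kappa^{1-10\delta}e^{-DV/u}\big)$, obtained from the bootstrap assumptions \eqref{A3 II}, \eqref{A4 II}, \eqref{A5 II}, into the source $\frac{\bar\Omega^2}{2r^2}\mu$ of \eqref{du dv Omega using m}. The double integration then produces, at best,
\begin{equation*}
\big|\bar\Omega^2-1\big|\cdot e^{D\frac{V}{u}}\;\lesssim\;\kappa^{1-10\delta}\cdot G(D,\underline v)+\cdots,
\end{equation*}
where the gain $G$ is a function of $D$ and $\underline v$ only (a negative power of $D$, or, if one brutally bounds $e^{-DV/u}\leq e^{D/\underline v}$ and invokes \eqref{bound for D}, a loss $e^{D/\underline v}\leq\kappa^{-\delta}$). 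In either case the power of $\kappa$ is at most $1-10\delta$ (or $1-11\delta$), whereas the target is $10\kappa^{1-8\delta}$. Since $\kappa^{1-10\delta}G\leq 10\kappa^{1-8\delta}$ is equivalent to $\kappa^{2\delta}\geq G/10$, and $\kappa$ is the \emph{last} parameter chosen (arbitrarily small relative to $D$ and $\underline v$), this inequality fails: no $\kappa$-independent gain can bridge the deficit $\kappa^{-2\delta}$. This is exactly the obstruction the paper flags when it says that the key point for \eqref{A5 II} is to \emph{avoid} using \eqref{A3 II} and \eqref{A4 II}, whose powers of $\kappa$ are not small enough. The fix is to estimate $\mu=2m/r$ independently through the Hawking mass: the transport equation \eqref{dv m} has source $\sim(2\partial_Vr)^{-1}(1-\mu)r^2(\partial_V\phi)^2=O\big(\kappa^{1-2\delta}e^{-2DV/u}\big)$, controlled by \eqref{A1 II} with the much better power $\kappa^{1-2\delta}$; integrating and absorbing $e^{2D/\underline v}$ via \eqref{bound for D} gives the \emph{unweighted} bound $m/r\lesssim\kappa^{1-4\delta}$, which then makes the $\mu$-contribution to \eqref{du dv Omega using m} harmless. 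You should replace your preliminary $\mu$-bound by this Hawking-mass argument; the rest of your scheme for \eqref{A5 II} (integrate in one null direction, then the other, using region-$\mathcal R_I$ data for the first-integrated derivative on $\{V=\underline v|u|\}$ and the gauge $\bar\Omega^2(-1,V)=V^{-\kappa}$ on $\{u=-1\}$) then goes through.
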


We notice that the bootstrap assumptions hold initially on $\big\{V=\underline{v}|u|\big\}$ since we can rewrite the estimates in  \cref{region I corollary} in $(u,V)$ coordinates as:
\[\big|\partial_V\phi\big|(u,-\underline{v}u)\leq 4C\cdot\sqrt{\kappa}\cdot|u|^{-1},\ \big|\partial_u\phi\big|(u,-\underline{v}u)\leq 2C\cdot\sqrt{\kappa}\cdot|u|^{-1}\]
\[\big|4\partial_Vr(u,-\underline{v}u)-1\big|\leq\kappa^{1-2\delta},\ \big|\partial_ur(u,-\underline{v}u)+1\big|\leq 2C\cdot\kappa\]
\[1-C\kappa^{1-\delta}\leq\bar{\Omega}^2(u,-\underline{v}u)\leq1+C\kappa^{1-\delta}\]
Similarly, the bootstrap assumptions hold initially on $\big\{u=-1,\ V\in[\underline{v},\underline{v}^{-1}]\big\},$ since we can rewrite the estimates on the outgoing initial data in $(u,V)$ coordinates as:
\[\big|\partial_V\phi\big|(-1,V)\leq\kappa^{\frac{1}{2}-\frac{\delta}{2}}\cdot V^{-1},\ \big|\partial_u\phi\big|(-1,V)\leq \kappa^{\frac{1}{2}-\frac{\delta}{2}}\cdot V^{-1}\]
\[\big|4\partial_Vr(-1,V)-1\big|\leq\kappa^{1-2\delta},\ \big|\partial_ur(-1,V)+1\big|\leq \kappa^{1-2\delta}\]
\[1-\kappa^{1-\delta}\leq\bar{\Omega}^2(-1,V)\leq1+\kappa^{1-\delta}\]

For the rest of the section we prove   \cref{region II bootstrap result}, so we assume that the bootstrap assumptions hold in $\mathcal{R}^{II}_{u_1,V_1}$ and we prove the desired improved estimates.

\subsubsection*{Preliminary Estimates}
As a consequence of the bootstrap assumptions, we have the following simple bounds
\begin{align*} |\partial_u\phi| & \leq\kappa^{\frac{1}{2}-2\delta}|u|^{-1}, \quad  |\partial_V\phi|\leq\kappa^{\frac{1}{2}-2\delta}|u|^{-1}  \\ \frac{1}{8} & \leq|\partial_Vr|\leq\frac{1}{2}\leq|\partial_ur|, \quad |\bar{\Omega}^2| \leq2. 
\end{align*}
We also have the following estimate for $r:$
\begin{lemma}
    The area radius function $r$ satisfies:
    \[r(u,V)=\frac{V}{4}-u+O(|u|\kappa^{1-11\delta})\]
\end{lemma}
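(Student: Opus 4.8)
The plan is to upgrade the crude bound $\partial_V r \approx \frac{1}{4}$ coming from bootstrap assumption \eqref{A3 II} by integrating in the $V$-direction from the hypersurface $\{V = \underline v|u|\}$, which in the original coordinates is exactly the spacelike hypersurface $\{v=\underline v^{1-\kappa}|u|^{1-\kappa}\}$ of \cref{region I corollary}. On that slice \cref{region I corollary} gives $r = -u\big(1 + \frac{\underline v}{4} + O(\kappa^{1-\delta})\big)$; since $-u = |u|$ this rearranges to $r(u,\underline v|u|) = \frac{\underline v|u|}{4} - u + O(|u|\kappa^{1-\delta})$.

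First I would observe that for any $(u,V)\in \mathcal{R}^{II}_{u_1,V_1}$ the segment $\{(u,V'):\underline v|u|\le V'\le V\}$ stays inside $\mathcal R_{II}$, because decreasing $V'$ at fixed $u$ only decreases $V'/|u|$ and the lower constraint $V'/|u|=\underline v$ cuts out precisely $\{V=\underline v|u|\}$; thus I may integrate $\partial_V r$ along it. Writing $\partial_V r = \frac14 + (\partial_V r - \frac14)$ and inserting the initial value above, the $\frac{\underline v|u|}{4}$ terms cancel and
\[
r(u,V) = \frac{V}{4} - u + O(|u|\kappa^{1-\delta}) + \int_{\underline v|u|}^{V}\Big(\partial_V r - \frac{1}{4}\Big)\,dV'.
\]
It remains to bound the error integral. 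By \eqref{A3 II}, $\big|\partial_V r - \frac{1}{4}\big|\le 100\kappa^{1-10\delta}\exp(-DV'/u) = 100\kappa^{1-10\delta}\exp(DV'/|u|)$, and since this is increasing in $V'$ while $V'/|u|\le \underline v^{-1}$ throughout $\mathcal R_{II}$,
\[
\int_{\underline v|u|}^{V}\Big|\partial_V r - \frac{1}{4}\Big|\,dV' \le \frac{100}{D}\,\kappa^{1-10\delta}\,|u|\,e^{D/\underline v}\lesssim |u|\,\kappa^{1-11\delta},
\]
where the last step uses the hierarchy of scales \eqref{bound for D}, which in particular forces $\kappa^{\delta}e^{D/\underline v}\ll 1$. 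This yields $r(u,V) = \frac{V}{4} - u + O(|u|\kappa^{1-11\delta})$.

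I do not expect a real obstacle: the one point requiring care is that the weight $\exp(DV/u)$ attains size up to $e^{D/\underline v}$ in $\mathcal R_{II}$, so one must check that this exponential loss is absorbed by the super-exponential smallness of $\kappa$ encoded in \eqref{bound for D} — which it is. (Integrating $\partial_u r$ in $u$ from $\{u=-1\}$ via \eqref{A4 II} would give the same conclusion.)
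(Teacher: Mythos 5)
Your proof is correct and follows essentially the same route as the paper: integrate $\partial_V r$ in the $V$-direction from the hypersurface $\{V=\underline v|u|\}$, use \cref{region I corollary} for the value of $r$ there, bound the error via the bootstrap assumption \eqref{A3 II}, and absorb the exponential factor $e^{D/\underline v}$ using the hierarchy \eqref{bound for D}. The only difference is bookkeeping in how the error integral is estimated, which does not affect the conclusion.
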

\begin{proof}
    Using the bootstrap assumptions and  \cref{region I corollary}, we have
    \[r(u,V)=r(u,-\underline{v}u)+\int_{-\underline{v}u}^V\partial_Vr(u,V')dV'=\frac{V}{4}-u+O(|u|\kappa^{1-\delta})+O\bigg(\underline{v}^{-1}\cdot\exp\bigg(\frac{D}{\underline{v}}\bigg)\cdot|u|\kappa^{1-10\delta}\bigg)\]
    which leads to the desired conclusion by \eqref{bound for D}.
\end{proof}

\subsubsection*{Main Estimates}
We improve the bootstrap assumptions, completing the proof of  \cref{region II bootstrap result}. We begin by improving assumption \eqref{A1 II}. We can rewrite the wave equation \eqref{wave} as:
\[\partial_u\bigg(\exp\bigg(D\cdot\frac{V}{u}\bigg)r\partial_V\phi\bigg)=-D\cdot\frac{V}{u^2}\cdot\exp\bigg(D\cdot\frac{V}{u}\bigg)r\partial_V\phi-\exp\bigg(D\cdot\frac{V}{u}\bigg)\partial_Vr\partial_u\phi\]
We multiply the equation in order to obtain:
\[\partial_u\bigg|\exp\bigg(D\cdot\frac{V}{u}\bigg)r\partial_V\phi\bigg|^2+2D\cdot\frac{V}{u^2}\cdot\bigg|\exp\bigg(D\cdot\frac{V}{u}\bigg)r\partial_V\phi\bigg|^2=-2\exp\bigg(2D\cdot\frac{V}{u}\bigg)\partial_Vr\partial_u\phi\cdot r\partial_V\phi\]
We denote $b(V)=\max\big\{-1,-V/\underline{v}\big\}.$ We integrate the above equation and absorb an error term to obtain the energy estimate
\begin{align*} \bigg|\exp\bigg(D & \cdot\frac{V}{u}\bigg)r\partial_V\phi\bigg|^2(u,V)  +D\int_{b(V)}^u\frac{V}{u'^2}\cdot\bigg|\exp\bigg(D\cdot\frac{V}{u'}\bigg)r\partial_V\phi\bigg|^2(u',V)du' \\   & \leq\big|\exp\big(-D\underline{v}\big)r\partial_V\phi\big|^2(b(V),V)+D^{-1}\int_{b(V)}^u\frac{u'^2}{V}\cdot\bigg|\exp\bigg(D\cdot\frac{V}{u'}\bigg)\partial_Vr\partial_u\phi\bigg|^2(u',V)du'. 
\end{align*} 
Using the bootstrap assumptions, the preliminary estimates, and the bounds of the solution on $\big\{V=-\underline{v}u\big\}$ and $\big\{u=-1,\ V\in[\underline{v},\underline{v}^{-1}]\big\}$, we get 
\[\bigg|\exp\bigg(D\cdot\frac{V}{u}\bigg)r\partial_V\phi\bigg|^2(u,V)\leq \kappa^{1-2\delta}+D^{-1}\cdot\kappa^{1-2\delta}\cdot O(\underline{v}^{-1}).\]
Using the preliminary estimates for $r$, we obtain that
\[\bigg|\frac{V}{4}-u\bigg|\cdot|\partial_V\phi|\cdot\exp\bigg(D\cdot\frac{V}{u}\bigg)(u,V)\leq\kappa^{\frac{1}{2}-\delta}\cdot\bigg(1+D^{-\frac{1}{2}}\cdot O(\underline{v}^{-\frac{1}{2}})+O(\kappa^{1-12\delta})\bigg)\]
which improves the bootstrap assumption \eqref{A1 II} for $D$ large enough. Thus, we proved \eqref{A1 II improved}.

Similarly, we improve assumption \eqref{A2 II}. We can rewrite the wave equation \eqref{wave} as
\[\partial_V\bigg(\exp\bigg(D\cdot\frac{V}{u}\bigg)r\partial_u\phi\bigg)=D\cdot\frac{1}{u}\cdot\exp\bigg(D\cdot\frac{V}{u}\bigg)r\partial_u\phi-\exp\bigg(D\cdot\frac{V}{u}\bigg)\partial_ur\partial_V\phi\]
We multiply the equation in order to obtain:
\[\partial_V\bigg|\exp\bigg(D\cdot\frac{V}{u}\bigg)r\partial_u\phi\bigg|^2+2D\cdot\frac{1}{|u|}\cdot\bigg|\exp\bigg(D\cdot\frac{V}{u}\bigg)r\partial_u\phi\bigg|^2=-2\exp\bigg(2D\cdot\frac{V}{u}\bigg)\partial_ur\partial_V\phi\cdot r\partial_u\phi\]
We integrate this equation and absorb an error term to obtain the energy estimate:
\[\bigg|\exp\bigg(D\cdot\frac{V}{u}\bigg)r\partial_u\phi\bigg|^2(u,V)+D\int_{-\underline{v}u}^V\frac{1}{|u|}\cdot\bigg|\exp\bigg(D\cdot\frac{V'}{u}\bigg)r\partial_u\phi\bigg|^2(u,V')dV'\leq\]\[\leq\big|\exp\big(-D\underline{v}\big)r\partial_u\phi\big|^2(u,-\underline{v}u)+D^{-1}\int_{-\underline{v}u}^V|u|\cdot\bigg|\exp\bigg(D\cdot\frac{V'}{u}\bigg)\partial_ur\partial_V\phi\bigg|^2(u,V')dV'\]
Using the bootstrap assumptions, the preliminary estimates, and the bounds of the solution on $\{V=-\underline{v}u\}$, we get that for some constant $C'>0$ independent of $\kappa,D,\underline{v},$ and $\delta$:
\[\bigg|\exp\bigg(D\cdot\frac{V}{u}\bigg)r\partial_u\phi\bigg|^2(u,V)\leq C'\kappa+D^{-1}\cdot\kappa^{1-2\delta}\cdot O(\underline{v}^{-1})\]
Using the preliminary estimates for $r$, we obtain that:
\[\bigg|\frac{V}{4}-u\bigg|\cdot|\partial_u\phi|\cdot\exp\bigg(D\cdot\frac{V}{u}\bigg)(u,V)\leq\kappa^{\frac{1}{2}-\delta}\cdot\bigg(1+D^{-\frac{1}{2}}\cdot O(\underline{v}^{-\frac{1}{2}})+O(\kappa^{1-12\delta})\bigg),\]
which improves the bootstrap assumption \eqref{A2 II} and proves \eqref{A2 II improved}.

Next, we improve assumption \eqref{A3 II}.  We can rewrite the Raychaudhuri equation \eqref{partial v} as:
\[\partial_V\bigg(\bar{\Omega}^{-2}\partial_Vr-\frac{1}{4}\bigg)=-r\bar{\Omega}^{-2}\big(\partial_V\phi\big)^2\]
We integrate this equation to obtain:
\[\bigg|\bar{\Omega}^{-2}\partial_Vr-\frac{1}{4}\bigg|(u,V)\leq\bigg|\bar{\Omega}^{-2}\partial_Vr-\frac{1}{4}\bigg|(u,-\underline{v}u)+\int_{-\underline{v}u}^Vr\bar{\Omega}^{-2}\big(\partial_V\phi\big)^2(u,V')dV'\]
Using the bootstrap assumptions, the preliminary estimates, and the bounds of the solution on $\{V=-\underline{v}u\}$, we get:
\[\bigg|\exp\bigg(D\cdot\frac{V}{u}\bigg)\cdot\bigg(\bar{\Omega}^{-2}\partial_Vr-\frac{1}{4}\bigg)\bigg|(u,V)\leq\kappa^{1-3\delta}+\kappa^{1-2\delta}\cdot\exp\bigg(3\frac{D}{\underline{v}}\bigg)\cdot O(\underline{v}^{-1})\leq\kappa^{1-4\delta}\]
As a result, we obtain that:
\[\bigg|\exp\bigg(D\cdot\frac{V}{u}\bigg)\cdot\bigg(\partial_Vr-\frac{1}{4}\bigg)\bigg|\leq\bigg|\bar{\Omega}^{2}\cdot\exp\bigg(D\cdot\frac{V}{u}\bigg)\cdot\bigg(\bar{\Omega}^{-2}\partial_Vr-\frac{1}{4}\bigg)\bigg|+\frac{1}{4}\bigg|\exp\bigg(D\cdot\frac{V}{u}\bigg)\cdot\big(\bar{\Omega}^{2}-1\big)\bigg|\leq 2\kappa^{1-10\delta}\]
We point out that since we use the bootstrap assumption \eqref{A5 II} in the above inequality, the assumption \eqref{A3 II} must have a lower power of $\kappa$ than \eqref{A5 II}. Also, we remark that a completely analogous proof improves assumption \eqref{A4 II}. Thus, we proved \eqref{A3 II improved} and \eqref{A4 II improved}.

Finally, we improve assumption \eqref{A5 II}. The key point is that we avoid using the estimates \eqref{A3 II} and \eqref{A4 II} (which do not have small enough powers of $\kappa$ for our purposes) by using the Hawking mass instead. We use that $\bar \Omega^{-2} (-\partial_u r) = \frac{1-\mu}{4 \partial_V r}$ and substitute it in \eqref{eq:hawking-mass-v} to obtain 
\begin{equation}\label{dv m} \partial_Vm=\big(2\partial_Vr\big)^{-1}\cdot\bigg(1-\frac{2m}{r}\bigg)\cdot r^2(\partial_V\phi)^2.
\end{equation}
Using the bootstrap assumptions and the preliminary estimates, we get that $|\partial_Vm|\leq\kappa^{1-3\delta}.$ Also, by the bounds on the hypersurface $\{V=-\underline{v}u\},$ we have $\frac{m}{r}(u,-\underline{v}u)=O(\kappa).$ Therefore, we have that in region $\mathcal{R}_{II}$:
\[\frac{m}{r}\leq\kappa^{1-4\delta}\]

The relation between \eqref{A5 II} and the Hawking mass follows from the equation
\begin{equation}\label{du dv Omega using m}
\partial_u\partial_V\log\bar{\Omega}^2=-2\partial_u\phi\partial_V\phi+ \frac{\bar{\Omega}^2}{r^2}\cdot\frac{m}{r}.
\end{equation}
Using the bootstrap assumptions and the bound for the Hawking mass, we obtain that: \[\big|\partial_u\partial_V\log\bar{\Omega}^2\big|\leq|u|^{-2}\kappa^{1-5\delta}\]
We claim that the analysis in region $\mathcal R_{I}$ implies that $\partial_u\log\Omega^2\big(u,|\underline{v}u|^{1-\kappa}\big)=O(\kappa|u|^{-1}).$ Indeed, we already proved that $\widetilde{\partial_u\log\Omega^2}=O(\kappa|u|^{-1})$ and $\partial_u\log\Omega^2(u,0)=O(\kappa|u|^{-1}).$ Moreover, we notice that we also have $\partial_u\log\Omega^2\big(u,|\underline{v}u|^{1-\kappa}\big)=\partial_u\log\bar{\Omega}^2(u,-\underline{v}u).$ Thus, we integrate the above in the $V$ direction in order to get:
\begin{equation}\big|\partial_u\log\bar{\Omega}^2\big|\leq V^{-1}\kappa^{1-6\delta}.\label{eq:estimate-partialulogomega-region-II}\end{equation}
Finally, we integrate this in the $u$ direction and use the bounds on $\log\bar{\Omega}^2$ on the hypersurfaces $\big\{V=-\underline{v}u\big\}$ and $\big\{u=-1,\ V\in[\underline{v},\underline{v}^{-1}]\big\}$ to get:
\begin{equation}\big|\log\bar{\Omega}^2\big|\leq\kappa^{1-7\delta} 
\label{eq:estimate-on-log-omega-region-II}
\end{equation}
This implies that
\[\big|\bar{\Omega}^2-1\big|\cdot\exp\bigg(D\cdot\frac{V}{u}\bigg)\leq\kappa^{1-8\delta}\]
improving assumption \eqref{A5 II} and proving \eqref{A5 II improved}. This
completes the proof of  \cref{region II bootstrap result}.\qed

As before,  \cref{region II bootstrap result} implies  \cref{region II main result}, giving existence of the solution in region $\mathcal R_{II}$. Moreover, in the $(u,V)$ coordinates, the solution is close to Minkowski space in standard null coordinates. In particular, on the spacelike hypersurface $\big\{u=-\underline{v}V\big\}$ we have the following smallness result:
\begin{corollary}\label{region II corollary}
    We denote $\epsilon=40\delta.$ The solution constructed in   \cref{region II main result} satisfies the following estimates on the spacelike hypersurface $\big\{u=-\underline{v}V\big\}$:
\[\big|\partial_V\phi\big|\leq \kappa^{\frac{1}{2}-\frac{\epsilon}{2}}\cdot V^{-1},\ \big|\partial_u\phi\big|\leq \kappa^{\frac{1}{2}-\frac{\epsilon}{2}}\cdot V^{-1}\]
\[\big|4\partial_Vr-1\big|\leq \kappa^{1-\frac{\epsilon}{2}},\ \big|\partial_ur+1\big|\leq \kappa^{1-\frac{\epsilon}{2}},\ \big|\bar{\Omega}^2-1\big|\leq\kappa^{1-\frac{\epsilon}{2}}\]
\[r=\frac{V}{4}-u+O(V\cdot\kappa^{1-\frac{\epsilon}{2}})\]
\end{corollary}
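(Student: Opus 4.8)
The plan is to obtain \cref{region II corollary} directly from the improved estimates \eqref{A1 II improved}--\eqref{A5 II improved} of \cref{region II main result}, simply by restricting them to the hypersurface $\{u=-\underline vV\}$ and converting the exponential weight into a small power of $\kappa$. Note that this corollary will then serve as the characteristic data for the subsequent analysis in regions $\mathcal R_{III}$ and $\mathcal R_{IV}$.

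First I would observe that $\{u=-\underline vV\}$ is precisely the outgoing boundary $\{V/|u|=\underline v^{-1}\}$ of $\mathcal R_{II}$, so that along it one has $V/u=-\underline v^{-1}$. Consequently the weight $\exp(D\cdot V/u)$ equals the fixed positive constant $\exp(-D/\underline v)$, and $|V/4-u|=V(\tfrac14+\underline v)$, which gives $|V/4-u|^{-1}\lesssim V^{-1}$. Substituting these facts into \eqref{A1 II improved}--\eqref{A2 II improved} yields $|\partial_V\phi|,\,|\partial_u\phi|\lesssim\kappa^{1/2-\delta}\exp(D/\underline v)\cdot V^{-1}$, and substituting into \eqref{A3 II improved}--\eqref{A5 II improved} yields $|4\partial_Vr-1|,\,|\partial_ur+1|,\,|\bar\Omega^2-1|\lesssim\kappa^{1-10\delta}\exp(D/\underline v)$.

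Second, I would absorb the factor $\exp(D/\underline v)$ together with the implicit $O(1)$ constants into the gain in the $\kappa$-exponent. Since $\epsilon=40\delta$, the targets are $\kappa^{1/2-\epsilon/2}=\kappa^{1/2-20\delta}$ and $\kappa^{1-\epsilon/2}=\kappa^{1-20\delta}$, so it suffices to verify $\kappa^{10\delta}\exp(D/\underline v)\ll1$ (the $\phi$-derivatives actually needing only the weaker bound $\kappa^{19\delta}\exp(D/\underline v)\ll1$). Both follow from the standing smallness assumption \eqref{bound for D}: since $D/\underline v\geq1$ one has $\exp(D/\underline v)\leq\exp((D/\underline v)^{100})$, and since $\kappa<1$ one has $\kappa^{10\delta}\leq\kappa^{\delta}$, whence $\kappa^{10\delta}\exp(D/\underline v)\leq\kappa^{\delta}\exp((D/\underline v)^{100})\ll1$. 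This establishes the first five displayed bounds. For the expansion of $r$, I would invoke the preliminary estimate $r(u,V)=V/4-u+O(|u|\kappa^{1-11\delta})$ already proved in \cref{region II section}, together with $|u|=\underline vV\leq V$ and $\kappa^{1-11\delta}\leq\kappa^{1-\epsilon/2}$ (valid since $11\delta<20\delta$), to conclude $r=V/4-u+O(V\kappa^{1-\epsilon/2})$ on $\{u=-\underline vV\}$.

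I do not anticipate any genuine obstacle; \cref{region II corollary} is a pure bookkeeping consequence of \cref{region II main result}. The only point requiring care is to confirm that the slack in the $\kappa$-exponent (passing from $\delta$, $10\delta$, or $11\delta$ up to $\epsilon/2=20\delta$) is wide enough to swallow $\exp(D/\underline v)$ and the $O(1)$ constants, which is exactly the role of the parameter $\epsilon=40\delta$ and of the hierarchy of scales encoded in \eqref{bound for D}.
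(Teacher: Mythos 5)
Your proposal is correct and matches the paper's intent: the paper states \cref{region II corollary} without proof as an immediate consequence of \cref{region II main result}, and your bookkeeping—evaluating $\exp(D\cdot V/u)=\exp(-D/\underline v)$ and $|V/4-u|\sim V$ on $\{u=-\underline vV\}$, then absorbing $\exp(D/\underline v)$ and the $O(1)$ constants into the slack between the exponents $\delta,8\delta,10\delta,11\delta$ and $\epsilon/2=20\delta$ via \eqref{bound for D}—is exactly the verification being left to the reader.
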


\subsection{Region \texorpdfstring{$\mathcal R_{III}$}{RIII}}
\label{sec:regRIII}

In this section we prove the existence of the solution in region $\mathcal R_{III}$. Once again, it is convenient to work in the double null coordinates $(u,V)$. The main result of this section is
\begin{proposition}\label{region III main result}
    Let $(M,g,\phi)$ be the solution of \eqref{Einstein Scalar Field} constructed in  \cref{region II main result}. For $\kappa,\underline{v}>0$ sufficiently small, the solution exists in region $\mathcal R_{III}$:
\[ \mathcal R_{III} =\bigg\{(u,V):\ u\in[-1,0),\ V\in(0,1],\ \frac{V}{|u|}\geq\underline{v}^{-1}\bigg\}. \]
Moreover, the solution satisfies the estimates \eqref{A1 III improved}--\eqref{A5 III improved} in region $\mathcal R_{III}$.
\end{proposition}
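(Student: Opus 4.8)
The plan is to run a continuity/bootstrap argument in the rescaled double‑null coordinates $(u,V)$ of \cref{region II section}, taking as initial data the bounds of \cref{region II corollary} on the spacelike hypersurface $\{u=-\underline v V\}$ (the common boundary of $\mathcal R_{II}$ and $\mathcal R_{III}$) and propagating them through $\mathcal R_{III}$, which is the thin triangular slab $\{-\underline v V\le u<0,\ 0<V\le 1\}$ whose only future boundaries are the outgoing segment $\{V=1\}$ (across which one later passes to $\mathcal R_{IV}$) and the singular null segment $\{u=0\}$; in particular $\mathcal R_{III}$ never meets $\{u=-1\}$. As in the previous sections, for fixed $(u_1,V_1)\in\mathcal R_{III}$ one assumes a family of bootstrap bounds on $\mathcal R_{III}\cap\big([-1,u_1]\times[0,V_1]\big)$ and improves them; the fact that $\mathcal R_{III}$ has $u$‑width at most $\underline v V$ is what generates the required smallness in $\underline v$. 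For the quantities obeying transport equations in the $\partial_u$‑direction --- $\bar\Omega^{-2}\partial_u r$, $\partial_V r$, $\partial_V\phi$ and $\partial_V\log\bar\Omega^2$ --- one writes the Raychaudhuri equation \eqref{partial u}, the equation \eqref{partial uv r}, the wave equation \eqref{wave} and the equation for $\partial_u\partial_V\log\bar\Omega^2$ in the form $\partial_u(\cdots)=\text{source}$ and integrates in $u$ from $\{u=-\underline v V\}$ to $u$; each source integrates against the short $u$‑interval to an error that is small in $\underline v$ (and in $\kappa$), so these quantities stay $O(\underline v)+O(\kappa^{1-O(\delta)})$‑close to their Minkowski values and, in particular, $r=\tfrac V4-u+O(\ldots)$. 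Exactly as in \cref{region II section}, the estimate for $\partial_V\log\bar\Omega^2$ --- hence, after a further integration, for $\bar\Omega^2$ itself --- is closed not through the Raychaudhuri equations but through $\partial_u\partial_V\log\bar\Omega^2=-2\partial_u\phi\,\partial_V\phi+\tfrac{\bar\Omega^2}{r^2}\tfrac mr$ together with the Hawking‑mass bound $\tfrac mr\lesssim\kappa^{1-O(\delta)}$, itself a consequence of the monotonicity \eqref{eq:hawking-mass-v} and the smallness of $m$ on $\{u=-\underline v V\}$.

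The crux of the section is that $\partial_u\phi$ --- and with it $\bar\Omega$ --- cannot be controlled uniformly up to $\{u=0\}$: integrating the $\partial_V$‑equation $\partial_V(r\partial_u\phi)=-\partial_u r\,\partial_V\phi$ from the past boundary, which at a fixed $u$ sits at $V=|u|/\underline v$, produces a logarithmically divergent factor $\log(\underline v V/|u|)$ as $u\to 0$. To absorb this we conjugate the equation for $r\partial_u\phi$ by $|u|^p/V^p$ for a fixed small exponent $p>0$ (comparable to $\delta$), so that the bootstrap quantity is $|r\partial_u\phi|\cdot|u|^p/V^p$; this weighted datum at $V=|u|/\underline v$ is $\lesssim\kappa^{1/2-O(\delta)}$ by \cref{region II corollary}, and the source $-\partial_u r\,\partial_V\phi$ contributes only a small perturbation, closing a bound of the form $|\partial_u\phi|\lesssim\kappa^{1/2-O(\delta)}|u|^{-p}V^{p-1}$ and correspondingly $|\log\bar\Omega^2|\lesssim\kappa^{1-O(\delta)}|u|^{-p}$. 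It is convenient here to use the null cancellation $\partial_u\partial_V(r\phi)=-\tfrac{m\bar\Omega^2}{2r^2}\phi$, which trades the ill‑behaved factor $\partial_u r$ for the manifestly small Hawking mass. These bounds degenerate as $u\to 0$, so neither $\partial_u\phi$ nor $\bar\Omega$ need extend continuously there, but they are harmless for everything else: $|u|^{-p}$ and $|u|^{-2p}$ are integrable in $u$ for $p<1$, and every integral in which they occur is taken over the $u$‑slab $[-\underline v V,u]$ of width $\le\underline v V$, so the degeneracy is always accompanied by a small power of $\underline v$. In particular $\partial_u\phi\in L^1_u$, whence $\phi\in C^{1-p}_u$.

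The main obstacle I expect is the bookkeeping around this degeneracy: one must order the powers of $\kappa$ in the various bootstrap bounds correctly --- so that, e.g., the $\bar\Omega^2$‑estimate, which is forced to use $m$ rather than $\partial_u r$, sits at a strictly lower power of $\kappa$ than the estimates that use $\bar\Omega^2$, just as in \cref{region II section} --- verify that the conjugated estimate for $\partial_u\phi$ really closes uniformly in $(u_1,V_1)$, in particular for $|u_1|$ arbitrarily small, and check that the now non‑uniform bounds for $\partial_u\phi$ and $\bar\Omega^2$ still feed consistently into the transport estimates for the ``good'' quantities. Once \eqref{A1 III improved}--\eqref{A5 III improved} are established, existence of the solution in $\mathcal R_{III}$ follows from the standard local existence/continuity argument as in \cref{region I section} and \cref{region II section}, and one records the resulting bounds on $\{V=1\}$ to serve as part of the initial data for $\mathcal R_{IV}$.
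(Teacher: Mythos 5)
Your proposal follows essentially the same route as the paper's \cref{sec:regRIII}: a bootstrap in the $(u,V)$ coordinates with data on the spacelike hypersurface $\{u=-\underline{v}V\}$ supplied by \cref{region II corollary}, $\partial_u$-transport estimates exploiting the $u$-width $\leq\underline{v}V$ of the slab for $\bar{\Omega}^{-2}\partial_ur$, $\partial_Vr$, $\partial_V\phi$, the conjugation of $r\partial_u\phi$ by the weight $|u|^p/V^p$ (cf.\ \eqref{A2 III}), the Hawking-mass route to $\log\bar{\Omega}^2$, and the resulting non-uniform degeneration of $\partial_u\phi$ and $\bar{\Omega}$ as $u\to0$ with $\phi\in C^{1-p}_u$. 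Your suggestion to run the $\partial_u\phi$ estimate through $\partial_u\partial_V(r\phi)=-\tfrac{m\bar{\Omega}^2}{2r^2}\phi$ is a workable variant; the paper instead closes \eqref{A2 III} by a direct energy estimate on $|u/V|^p\,r\partial_u\phi$ with source $-\partial_ur\,\partial_V\phi$ and reserves the Hawking-mass form of the wave equation for region $\mathcal R_{IV}$.

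One detail must be corrected. The bound you record for the lapse, $|\log\bar{\Omega}^2|\lesssim\kappa^{1-O(\delta)}|u|^{-p}$, is too weak to serve as the bootstrap assumption: exponentiating gives $\bar{\Omega}^{\pm2}\leq\exp\bigl(C\kappa^{1-O(\delta)}|u|^{-p}\bigr)$, which is not dominated by any power of $V/|u|$ and is not even integrable in $u$ near $u=0$, so the improvements of $\partial_Vr$ and $\bar{\Omega}^{-2}\partial_ur$ --- whose sources carry a factor $\bar{\Omega}^{\pm2}$ and are integrated over $[-\underline{v}V,u]$ with $u$ arbitrarily close to $0$ --- would not close. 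The computation you actually describe yields the sharper, logarithmic bound: with $|\partial_u\phi\,\partial_V\phi|\lesssim\kappa^{1-O(\delta)}|u|^{-p}V^{p-2}$ and $\tfrac{m}{r}\lesssim\kappa^{1-O(\delta)}$, integrating $\partial_u\partial_V\log\bar{\Omega}^2$ in $V$ from $|u|/\underline{v}$ gives $|\partial_u\log\bar{\Omega}^2|\lesssim\kappa^{1-O(\delta)}\underline{v}^{1-p}|u|^{-1}$, and a further $u$-integration gives $|\log\bar{\Omega}^2|\lesssim\kappa^{1-O(\delta)}\bigl(1+\log|V\underline{v}/u|\bigr)$, which is the paper's \eqref{A5 III}. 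It is this form --- equivalently $\bar{\Omega}^{\pm2}\lesssim|V\underline{v}/u|^{O(\kappa^{1-O(\delta)})}\leq 2|V\underline{v}/u|^{p/2}$ for $\kappa$ small --- that makes every source term involving $\bar{\Omega}^{\pm2}$ integrable against the short $u$-interval. (A minor further point: your order of integration, first in $u$ and then in $V$, requires a bound on $\partial_V\log\bar{\Omega}^2$ along $\{u=-\underline{v}V\}$, which region $\mathcal R_{II}$ does not record; the paper sidesteps this by integrating first in $V$, using the bound \eqref{eq:estimate-partialulogomega-region-II} on $\partial_u\log\bar{\Omega}^2$ there.) With the logarithmic form of the lapse bound, your argument is the paper's.
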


\subsubsection*{Bootstrap Assumptions}

Let $p>0$ be a small constant (for example $p=1/10$ is sufficient). We consider the following bootstrap assumptions:
\begin{align}\label{A1 III}
    V\cdot\big|\partial_V\phi\big| & \leq 100\kappa^{\frac{1}{2}-\epsilon}
\\ \label{A2 III}
    V\cdot\bigg|\frac{u}{V}\bigg|^p\cdot\big|\partial_u\phi\big| & \leq 100\kappa^{\frac{1}{2}-\epsilon}
\\ \label{A3 III}
    \bigg|\partial_Vr-\frac{1}{4}\bigg| & \leq 100\kappa^{1-2\epsilon}
\\ \label{A4 III}
    \big|\bar{\Omega}^{-2}\partial_ur+1\big|& \leq 100\kappa^{1-2\epsilon}
\\ \label{A5 III}
    \big|\log\bar{\Omega}^2\big| & \leq10\kappa^{1-2\epsilon}\cdot\bigg(1+100\log\bigg|\frac{V\underline{v}}{u}\bigg|\bigg)
\end{align}

A standard local existence argument implies that   \cref{region III main result} follows if we prove the following result:

\begin{proposition}\label{region III bootstrap result}
    For any $(u_1,V_1)\in \mathcal R_{III},$ we assume that the solution satisfies the bootstrap assumptions \eqref{A1 III}--\eqref{A5 III} for all $(u,V)\in\mathcal{R}^{III}_{u_1,V_1}=\mathcal R_{III}\cap\big([-1,u_1]\times[0,V_1]\big).$ Then the solution satisfies the improved estimates in $\mathcal{R}^{III}_{u_1,V_1}$:
   {\allowdisplaybreaks \begin{align}\label{A1 III improved}
    V\cdot\big|\partial_V\phi\big| & \leq 10\kappa^{\frac{1}{2}-\epsilon}
\\ \label{A2 III improved}
    V\cdot\bigg|\frac{u}{V}\bigg|^p\cdot\big|\partial_u\phi\big| & \leq 10\kappa^{\frac{1}{2}-\epsilon}
\\ \label{A3 III improved}
    \bigg|\partial_Vr-\frac{1}{4}\bigg|& \leq 10\kappa^{1-2\epsilon}
\\ \label{A4 III improved}
    \big|\bar{\Omega}^{-2}\partial_ur+1\big|& \leq 10\kappa^{1-2\epsilon}
\\ \label{A5 III improved}
    \big|\log\bar{\Omega}^2\big|&\leq\kappa^{1-2\epsilon}\cdot\bigg(1+10\log\bigg|\frac{V\underline{v}}{u}\bigg|\bigg).
\end{align}}
\end{proposition}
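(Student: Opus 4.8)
The plan is to run a continuity/bootstrap argument in the $(u,V)$ coordinates, exactly parallel to the proofs of \cref{region I main result} and \cref{region II main result}: fix $(u_1,V_1)\in\mathcal R_{III}$, assume \eqref{A1 III}--\eqref{A5 III} throughout $\mathcal R^{III}_{u_1,V_1}$, and improve each constant at an arbitrary point. The role of ``initial data'' is played by the spacelike hypersurface $\{u=-\underline v V\}=\{V/|u|=\underline v^{-1}\}$, which is the future boundary of $\mathcal R_{II}$ and on which \cref{region II corollary} supplies every bound we need; rewriting that corollary in $(u,V)$ coordinates shows that on $\{u=-\underline v V\}$ the left-hand sides of \eqref{A1 III}--\eqref{A5 III} with their weights stripped are $O(\kappa^{1-\epsilon/2})$, respectively $O(\kappa^{1/2-\epsilon/2}V^{-1})$, so the bootstrap is open initially with room to spare. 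Geometrically $\mathcal R_{III}$ is a thin neighbourhood of $\{u=0\}$ truncated at $\{V=1\}$, running down to the singularity $b_\Gamma=(0,0)$, and every past-directed characteristic from a point of $\mathcal R^{III}_{u_1,V_1}$ meets $\{u=-\underline v V\}$; the difference from $\mathcal R_{II}$ is that there is no exponential weight, only the benign weights $V$, $(|u|/V)^p$, or $\log|V\underline v/u|$.

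The first step is to record preliminary estimates. Integrating $\partial_u r=\bar\Omega^2\cdot\bar\Omega^{-2}\partial_u r$ and $\partial_u(r\partial_V r)=-\tfrac14\bar\Omega^2$ in $u$ from $\{u=-\underline v V\}$ gives $r(u,V)=\tfrac V4-u+O(V\kappa^{1-2\epsilon})$, and---crucially---the $\underline v V$ contributions of the data and of the source $-\tfrac14 u'$ cancel, so one reads off $|\partial_V r-\tfrac14|\lesssim\kappa^{1-2\epsilon}$ directly, improving \eqref{A3 III}. Integrating \eqref{dv m} in $V$ from $\{u=-\underline v V\}$ and using the mass bound there inherited from $\mathcal R_{II}$ gives $m(u,V)\lesssim\kappa^{1-2\epsilon}V$, hence $\mu=\tfrac{2m}{r}\lesssim\kappa^{1-2\epsilon}$. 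The remaining $\partial_u$-transport quantities are then improved by integrating in $u$ from $\{u=-\underline v V\}$: Raychaudhuri \eqref{partial u} for $\bar\Omega^{-2}\partial_u r$ (the source $r\bar\Omega^{-2}(\partial_u\phi)^2$ is $O(\kappa^{1-2\epsilon}V^{2p-1}|u'|^{-2p})$, which is integrable in $u'$ up to $\{u=-\underline v V\}$ since $2p<1$, with an $\underline v^{1-2p}$ gain), and $\partial_u(r\partial_V\phi)=-\partial_u\phi\,\partial_V r$ for $\partial_V\phi$ (source $O(\kappa^{1/2-\epsilon}V^{-1}(V/|u'|)^p)$, integrable with an $\underline v^{1-p}$ gain). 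Each closes because all errors are gained either through powers of $\underline v$ or through $\kappa^{1-\epsilon/2}\ll\kappa^{1-2\epsilon}$.

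The delicate quantities are $\partial_u\phi$ and $\log\bar\Omega^2$, and I expect the main obstacle to be here, precisely because near $\{u=0\}$ these do not extend continuously. For $\partial_u\phi$ I would write the wave equation as $\partial_V(r\partial_u\phi)=-\partial_u r\,\partial_V\phi$ and integrate in $V$ from the point of $\{u=-\underline v V\}$ at the given $u$; the source is only $O(\kappa^{1/2-\epsilon}/V')$, so the integration costs a logarithm and yields $|\partial_u\phi|(u,V)\lesssim V^{-1}\big(\kappa^{1/2-\epsilon/2}+\kappa^{1/2-\epsilon}\log|V\underline v/u|\big)$, which is exactly why one obtains only $\phi\in C^{1-p}_u$ and cannot conclude that $\partial_u\phi$ or $\bar\Omega$ extend continuously to $\{u=0\}$. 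Multiplying by the bootstrap weight $V(|u|/V)^p$ and using that $x^{-p}\log x\to0$ as $x\to\infty$---so $(|u|/V)^p\log|V\underline v/u|\lesssim\underline v^{p}\log(1/\underline v)\ll1$ for $\underline v$ small depending on $p$, since $V/|u|\geq\underline v^{-1}$---improves \eqref{A2 III}; this is the purpose of the conjugation by $|u|^p/V^p$. For $\log\bar\Omega^2$ I would mimic $\mathcal R_{II}$: from \eqref{du dv Omega using m}, i.e.\ $\partial_u\partial_V\log\bar\Omega^2=-2\partial_u\phi\,\partial_V\phi+\tfrac{\bar\Omega^2}{r^2}\cdot\tfrac{m}{r}$, the mass bound above, and the $\mathcal R_{II}$ estimate \eqref{eq:estimate-partialulogomega-region-II} on $\{u=-\underline v V\}$, one first obtains $|\partial_u\log\bar\Omega^2|\lesssim\kappa^{1-6\delta}|u|^{-1}$ in $\mathcal R_{III}$ by integrating the mixed equation in $V$, and then integrates in $u$ from $\{u=-\underline v V\}$ to get $|\log\bar\Omega^2|\lesssim\kappa^{1-\epsilon/2}+\kappa^{1-6\delta}\log|V\underline v/u|$, which improves \eqref{A5 III} since $1-6\delta>1-2\epsilon$; this makes visible why the logarithmic weight in \eqref{A5 III} is necessary and is reproduced with a better constant.

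As with $\mathcal R_{II}$, the genuine difficulty is bookkeeping: the powers of $\kappa$ in \eqref{A1 III}--\eqref{A5 III} must be strictly ordered so that any estimate whose improvement invokes \eqref{A5 III} carries a lower power of $\kappa$ than \eqref{A5 III} itself---which is why the mass is controlled via \eqref{dv m} rather than via $\mu=1+4\bar\Omega^{-2}\partial_u r\,\partial_V r$---and the logarithmic weight must be propagated consistently through the successive $u$- and $V$-integrations. Once \cref{region III bootstrap result} is established in this way, \cref{region III main result} follows by the standard local-existence/continuity argument, and one records the resulting bounds on $\{V=1\}$ for use in region $\mathcal R_{IV}$.
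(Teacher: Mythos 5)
Your proposal follows the same overall architecture as the paper's proof: a bootstrap from data on the spacelike hypersurface $\{u=-\underline v V\}$ supplied by \cref{region II corollary}, preliminary estimates on $r$ and $\bar\Omega^2$, integration of each transport equation in its natural direction, the $|u/V|^p$ weight for $\partial_u\phi$, and control of $\log\bar\Omega^2$ through the Hawking mass via \eqref{dv m} and \eqref{du dv Omega using m}. Two steps take a genuinely different route. For \eqref{A2 III} the paper conjugates $r\partial_u\phi$ by $|u/V|^p$ and runs an energy estimate in $V$, absorbing the cross term into the good term $\tfrac{2p}{V}(\cdot)^2$; you instead integrate $\partial_V(r\partial_u\phi)=-\partial_u r\,\partial_V\phi$ directly and multiply by the weight afterwards. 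Both close, but your accounting of the source is off: under the bootstrap assumptions only $\bar\Omega^{-2}\partial_u r$ is $O(1)$, while $\bar\Omega^{2}$ (hence $\partial_u r$ itself) is controlled only up to a factor $|V\underline v/u|^{p/2}$ by \eqref{A5 III}, since $\kappa^{1-2\epsilon}\log|V\underline v/u|$ is not uniformly small near $u=0$. The source is therefore $O\big(\kappa^{\frac12-\epsilon}(V')^{-1}|V'/u|^{p/2}\big)$ and the $V$-integration costs a power $|V/u|^{p/2}$, not a logarithm. Your argument still closes because the weight $(|u|/V)^p$ beats $(V/|u|)^{p/2}$ with a net gain $\underline v^{p/2}$ on $\mathcal R_{III}$ --- but the operative mechanism is the absorption of this power, not of a log. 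The same caveat applies to your improvement of \eqref{A3 III} via $\partial_u(r\partial_V r)=-\tfrac14\bar\Omega^2$: the claimed cancellation of the $\underline v V$ contributions requires $\int|\bar\Omega^2-1|\,du'$ and $\int|\partial_u r+1|\,du'$ to be $O(\kappa^{1-2\epsilon}\underline v V)$, which does hold because the singularity $|V\underline v/u'|^{p/2}\log|V\underline v/u'|$ is integrable up to $u'=0$, but this is precisely the point that must be checked; the paper sidesteps it by writing the equation for $r(\partial_V r-\tfrac14)$ with source $-\tfrac{\bar\Omega^2}{4}(\bar\Omega^{-2}\partial_u r+1)$, so that the smallness sits on the bootstrapped quantity $\bar\Omega^{-2}\partial_u r+1$. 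The remaining steps --- \eqref{A1 III}, \eqref{A4 III} with the $\underline v^{1-2p}$ gain from $2p<1$, and \eqref{A5 III} via the mass bound and \eqref{eq:estimate-partialulogomega-region-II} --- coincide with the paper up to constants.
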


\begin{proof}[Proof of \cref{region III bootstrap result}]
We notice that the bootstrap assumptions hold initially on the spacelike hypersurface $\big\{u=-\underline{v}V\big\}$ because of  \cref{region II corollary}. For the rest of the section, we assume that the bootstrap assumptions hold in $\mathcal{R}^{III}_{u_1,V_1}$ and we prove the desired improved estimates.

\subsubsection*{Preliminary Estimates}

As a consequence of the bootstrap assumptions, we get the bounds:

\[\frac{1}{2}\bigg|\frac{V\underline{v}}{u}\bigg|^{-\frac{p}{2}}\leq\bar{\Omega}^2,\bar{\Omega}^{-2}\leq2\bigg|\frac{V\underline{v}}{u}\bigg|^{\frac{p}{2}},\ \frac{1}{8}\leq|\partial_Vr|\leq\frac{1}{2},\ \big|\partial_ur\big|\leq2\bigg|\frac{V}{u}\bigg|^{\frac{p}{2}}\]
\begin{lemma}
    The area radius function satisfies the following:
    \[r(u,V)=\frac{V}{4}-u+O(\kappa^{1-2\epsilon}V)\]
    In particular, we also have $V/8\leq r\leq V/2.$
\end{lemma}
\begin{proof} 
    The proof follows from \eqref{A3 III} and the estimate for $r$ on the spacelike hypersurface $\{u=-\underline{v}V\}.$
\end{proof}

\subsubsection*{Main Estimates}

We begin by improving assumption \eqref{A1 III}. Using the bootstrap assumptions, the preliminary estimates, and the estimates on the spacelike hypersurface $\big\{u=-\underline{v}V\big\},$ we have from \eqref{wave}:
\begin{align*}\big|r\partial_V\phi\big|(u,V)&\leq\big|r\partial_V\phi\big|(-\underline{v}V,V)+\int_{-\underline{v}V}^u\big|\partial_Vr\cdot\partial_u\phi\big|(u',V)du'\\
& \leq\frac{1}{2}\kappa^{\frac{1}{2}-\epsilon}+O\bigg(\kappa^{\frac{1}{2}-\epsilon}\int_{-\underline{v}V}^uV^{-1}\cdot\bigg|\frac{u'}{V}\bigg|^{-p}du'\bigg).\end{align*}
As a result, we have that
\[\big|r\partial_V\phi\big|(u,V)\leq\frac{1}{2}\kappa^{\frac{1}{2}-\epsilon}+O\big(\kappa^{\frac{1}{2}-\epsilon}\underline{v}^{1-p}\big)\leq\kappa^{\frac{1}{2}-\epsilon}.\]
Since $V\leq8r,$ we improved the assumption \eqref{A1 III} and proved \eqref{A1 III improved}.

Next, we improve the assumption \eqref{A2 III}. We can rewrite the wave equation as:
\[\partial_V\bigg(\bigg|\frac{u}{V}\bigg|^pr\partial_u\phi\bigg)=-\frac{p}{V}\cdot\bigg|\frac{u}{V}\bigg|^pr\partial_u\phi-\bigg|\frac{u}{V}\bigg|^p\partial_ur\partial_V\phi\]
We multiply the equation in order to obtain:
\[\partial_V\bigg(\bigg|\frac{u}{V}\bigg|^pr\partial_u\phi\bigg)^2+\frac{2p}{V}\cdot\bigg(\bigg|\frac{u}{V}\bigg|^pr\partial_u\phi\bigg)^2=-2\bigg|\frac{u}{V}\bigg|^{2p}\partial_ur\partial_V\phi\cdot r\partial_u\phi\]
We integrate this equation and absorb an error term to obtain the energy estimate:
\begin{multline*}
\bigg(\bigg|\frac{u}{V}\bigg|^pr\partial_u\phi\bigg)^2(u,V)+p\int_{-\frac{u}{\underline{v}}}^V\frac{1}{V'}\cdot\bigg(\bigg|\frac{u}{V'}\bigg|^pr\partial_u\phi\bigg)^2dV' \\ \leq\big|\underline{v}^pr\partial_u\phi\big|^2\bigg(u,\frac{-u}{\underline{v}}\bigg)+p^{-1}\int_{-\frac{u}{\underline{v}}}^VV'\cdot\bigg(\bigg|\frac{u}{V'}\bigg|^p\partial_ur\partial_V\phi\bigg)^2dV'.\end{multline*}
Using the bootstrap assumptions, the preliminary estimates, and the bounds of the solution on $\big\{u=-\underline{v}V\big\}$, we get
\[\bigg(\bigg|\frac{u}{V}\bigg|^pr\partial_u\phi\bigg)^2\leq\frac{1}{2}\kappa^{1-2\epsilon}+O\bigg(\kappa^{1-2\epsilon}p^{-1}\int_{-\frac{u}{\underline{v}}}^V(V')^{-1}\cdot\bigg|\frac{u}{V'}\bigg|^pdV'\bigg)\leq\frac{1}{2}\kappa^{1-2\epsilon}+O\bigg(\kappa^{1-2\epsilon}\underline{v}^p\bigg)\leq\kappa^{1-2\epsilon}.\]
Since $V\leq8r,$ we improved the assumption \cref{A2 III} by using the smallness of $\underline{v}$. 

In the following, we will improve assumption \eqref{A3 III}. We can rewrite the wave equation \eqref{partial uv r} for $r$ as:
\[\partial_u\bigg(r\bigg(\partial_Vr-\frac{1}{4}\bigg)\bigg)=-\frac{\bar{\Omega}^2}{4}\cdot\big(\bar{\Omega}^{-2}\partial_ur+1\big)\]
We integrate this and use the bootstrap assumptions, the preliminary estimates, and the bounds of the solution on $\{u=-\underline{v}V\}$ in order to get:
\[r\bigg|\partial_Vr-\frac{1}{4}\bigg|\leq \frac{1}{2}V\kappa^{1-\epsilon}+O\bigg(\kappa^{1-2\epsilon}\int_{-\underline{v}V}^u\bigg|\frac{V}{u'}\bigg|^{\frac{p}{2}}du'\bigg)\leq \frac{1}{2}V\kappa^{1-\epsilon}+O\big(V\kappa^{1-2\epsilon}\underline{v}^{1-\frac{p}{2}}\big)\leq V\kappa^{1-2\epsilon}\]
Using $v\leq8r$ again, we improved the assumption \eqref{A3 III} and proved \eqref{A3 III improved}.

We improve the assumption \eqref{A4 III}. We have from the Raychaudhuri equation \eqref{partial u}:
\[\big|\bar{\Omega}^{-2}\partial_ur+1\big|(u,V)\leq\big|\bar{\Omega}^{-2}\partial_ur+1\big|(-\underline{v}V,V)+\int_{-\underline{v}V}^ur\bar{\Omega}^{-2}(\partial_u\phi)^2(u',V)du'\]
Using the bootstrap assumptions and the preliminary estimates, the second term can be bounded as follows:
\begin{align}\int_{-\underline{v}V}^ur\bar{\Omega}^{-2}(\partial_u\phi)^2(u',V)du' & \lesssim V^{-1}\int_{-\underline{v}V}^u\bigg(V\bigg|\frac{u'}{V}\bigg|^p\big|\partial_u\phi\big|\bigg)^2\cdot\bar{\Omega}^{-2}\cdot\bigg|\frac{u'}{V}\bigg|^{-2p}du' \\ &\lesssim\kappa^{1-2\epsilon}V^{-1}\int_{-\underline{v}V}^u\bigg|\frac{u'}{V}\bigg|^{-\frac{5}{2}p} \underline v^{\frac p2} du'\lesssim\kappa^{1-2\epsilon}\underline{v}^{1-2 p}\end{align}
since $2p<1.$ Using this, the estimates on the spacelike hypersurface $\{u=-\underline{v}V\}$, and the smallness of $\underline{v}$, we get 
\[\big|\bar{\Omega}^{-2}\partial_ur+1\big|(u,V) \leq10\kappa^{1-2\epsilon}.\]
 Thus, we improved the assumption \eqref{A4 III} and proved \eqref{A4 III improved}.

Finally, we improve the assumption \eqref{A5 III}. As in region $\mathcal R_{II}$, an essential step is obtaining estimates on the Hawking mass. We have from \eqref{dv m}:
\[m(u,V)\leq m\bigg(u,-\frac{u}{\underline{v}}\bigg)+\int_{-\frac{u}{\underline{v}}}^V\big(2\partial_Vr\big)^{-1}\cdot\bigg(1-\frac{2m}{r}\bigg)\cdot r^2(\partial_V\phi)^2dV'\]
Using the bootstrap assumptions, the preliminary estimates, and the estimate for $m$ in region $\mathcal R_{II}$, we get:
\[m(u,V)\leq \kappa^{1-\epsilon}\bigg|\frac{u}{\underline{v}}\bigg|+O\bigg(\int_{-\frac{u}{\underline{v}}}^V\kappa^{1-2\epsilon}dV'\bigg)\lesssim V\kappa^{1-2\epsilon}\]
As a result, we obtain that:
\[\frac{m}{r}(u,V)=O\big(\kappa^{1-2\epsilon}\big)\]
We use this estimate in \eqref{du dv Omega using m}, and we get:
\[\partial_u\partial_V\log\bar{\Omega}^2=O\bigg(\kappa^{1-2\epsilon}V^{-2}\bigg|\frac{u}{V}\bigg|^{-p}\bigg)\]
We integrate this and use the estimate \eqref{eq:estimate-partialulogomega-region-II} for $\partial_u\log\bar{\Omega}^2$ in region $\mathcal R_{II}$:
\begin{equation*}\partial_u\log\bar{\Omega}^2(u,v)=\partial_u\log\bar{\Omega}^2\bigg(u,-\frac{u}{\underline{v}}\bigg)+O\bigg(\int_{-\frac{u}{\underline{v}}}^V\kappa^{1-2\epsilon}|u|^{-p}(V')^{-2+p}dV'\bigg)=O\big(\kappa^{1-2\epsilon}\underline{v}^{1-p}|u|^{-1}\big).\end{equation*}
Once again, we integrate this and use the estimate \eqref{eq:estimate-on-log-omega-region-II} for $\log\bar{\Omega}^2$ in region $\mathcal R_{II}$:
\[\big|\log\bar{\Omega}^2\big|(u,V)\leq\big|\log\bar{\Omega}^2\big|(-\underline{v}V,V)+O\bigg(\int_{-\underline{v}V}^u\kappa^{1-2\epsilon}\underline{v}^{1-p}|u'|^{-1}du'\bigg)\leq\kappa^{1-2\epsilon}\bigg(1+\log\bigg|\frac{V\underline{v}}{u}\bigg|\bigg),\]
where we also used the smallness of $\underline{v}.$ In conclusion, we also improved the assumption \eqref{A5 III}, completing the proof of \cref{region III bootstrap result}.
\end{proof}

\subsection{Region \texorpdfstring{$\mathcal R_{IV}$}{RIV}}

In this section we prove existence of the solution in region $\mathcal R_{IV}$ and use the obtained estimates to complete the proof of  \cref{main theorem}, by showing that the constructed spacetime has an incomplete $\mathcal{I}^+.$ In the double null coordinates $(u,V)$, we have that region $\mathcal R_{IV}$ is given by:
\[\mathcal R_{IV}=\bigg\{(u,V):\ u\in[-1,0),\ V\in[1,\infty),\ \frac{V}{|u|}\geq\underline{v}^{-1}\bigg\}\]
To simplify our analysis, we split region $\mathcal R_{IV}$ into the following two sub-regions:
\[\mathcal R_{IV}=\big(\mathcal R_{IV}\cap\{V\leq\underline{v}^{-1}\}\big)\cup[-1,0)\times[\underline{v}^{-1},\infty)\]
We first prove existence of the solution in the first sub-region:
\begin{proposition}\label{region IV part 1 result}
    We denote $\eta=10\epsilon.$ The solution constructed in  \cref{region III main result} extends to region $\mathcal R_{IV}$ for $V\leq\underline{v}^{-1}$, and satisfies the following estimates:
    \begin{align}\label{estimate 1 IV}
        |m|& \leq\kappa^{1-\epsilon}\\
  \label{estimate 2 IV}
        |\partial_ur|& \leq10|u|^{-\frac{p}{2}}
  \\ \label{estimate 3 IV}
        \big|4\partial_Vr-1\big|& \leq\kappa^{1-\eta}
\\ \label{estimate 4 IV}
        \big|\log\bar{\Omega}^2\big|& \leq\kappa^{1-3\eta}\big(1-\log|u|\big)
\\ \label{estimate 5 IV}
        |r\phi|& \leq\kappa^{\frac{1}{2}-\eta}
\\ \label{estimate 6 IV}
        \big|r\partial_V\phi\big|& \leq\kappa^{\frac{1}{2}-\eta}
\\ \label{estimate 7 IV}
        \big|r\partial_u\phi\big| &\leq\kappa^{\frac{1}{2}-\eta}|u|^{-p}
    \end{align}
\end{proposition}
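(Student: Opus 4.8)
The plan is to run a bootstrap argument in the sub-region $\mathcal R_{IV}\cap\{V\le \underline v^{-1}\}$, entirely parallel to the treatment of $\mathcal R_{III}$ in \cref{sec:regRIII}, the new feature being that this is the first region reaching down to $\{u=0\}$. Its past boundary is the union of the null segment $\{V=1\}$, where the solution and the bounds \eqref{A1 III improved}--\eqref{A5 III improved} are already available from \cref{region III main result}, and the spacelike hypersurface $\{u=-\underline v V\}$, where \cref{region II corollary} applies; these supply the initial data for the bootstrap, whose assumptions are versions of \eqref{estimate 1 IV}--\eqref{estimate 7 IV} with enlarged constants. The first step is the Hawking mass: since the bootstrap assumptions give $\partial_v r>0$ and $\partial_u r<0$ throughout the region, \eqref{partial u} yields the monotonicity $\partial_u m\le0$ along lines $\{V=\mathrm{const}\}$, so $m(u,V)\lesssim m|_{\{u=-1\}}\lesssim\kappa$ by \cref{lem:hawking-mass-bounded-on-C+}; as $\kappa\le\kappa^{1-\epsilon}$ this improves \eqref{estimate 1 IV} and gives $\mu=2m/r\lesssim\kappa/V\lesssim\kappa$.

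With \eqref{estimate 1 IV} in hand the remaining estimates are closed as for $\mathcal R_{III}$. For $\partial_V r$ one integrates the renormalised equation $\partial_u\!\big(r(\partial_V r-\tfrac14)\big)=-\tfrac{\bar\Omega^2}{4}\big(\bar\Omega^{-2}\partial_u r+1\big)$ (from \eqref{partial uv r}) in $u$ starting from $\{u=-\underline v V\}$, using that $\bar\Omega^{-2}\partial_u r+1=\big((4\partial_V r-1)+\mu\big)/(4\partial_V r)$ is controlled by the bootstrap on $\partial_V r$ and the bound on $\mu$; this gives \eqref{estimate 3 IV}, and then \eqref{estimate 2 IV} follows from $\partial_u r=\bar\Omega^2\cdot\bar\Omega^{-2}\partial_u r$ with $\bar\Omega^2\lesssim|u|^{-p/2}$ coming from the bootstrap on $\log\bar\Omega^2$. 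For the scalar field, $r\partial_V\phi$ is propagated by integrating $\partial_u(r\partial_V\phi)=-\partial_V r\,\partial_u\phi$ (from \eqref{wave}) in $u$; $r\partial_u\phi$ is propagated by the $|u/V|^{p}$-conjugated form of \eqref{wave}, namely $\partial_V\!\big(|u/V|^{p}r\partial_u\phi\big)=-\tfrac pV|u/V|^{p}r\partial_u\phi-|u/V|^{p}\partial_u r\,\partial_V\phi$, together with the weighted $\partial_V$-energy estimate already used for $\mathcal R_{III}$ (with $2p<1$ and $\kappa\ll\underline v$ absorbing the error over the now length-$\sim\underline v^{-1}$ range of $V$); and $r\phi$ is controlled via the renormalised wave equation $\partial_u\partial_V(r\phi)=-\tfrac{m\bar\Omega^2}{2r^2}\phi=-\tfrac{m\bar\Omega^2}{2r^3}(r\phi)$, viewed as a transport equation for $\partial_V(r\phi)$ in $u$ with source bounded using $m\lesssim\kappa$ and the bootstrap on $r\phi$, then integrated in $V$ and closed by Gr\"onwall. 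Finally, $\log\bar\Omega^2$ is estimated from \eqref{du dv Omega using m}, $\partial_u\partial_V\log\bar\Omega^2=-2\partial_u\phi\partial_V\phi+\tfrac{\bar\Omega^2}{r^2}\cdot\tfrac mr$, by integrating first in $V$---using the bound \eqref{eq:estimate-partialulogomega-region-II} for $\partial_u\log\bar\Omega^2$ on $\{u=-\underline v V\}$---and then in $u$, using the bounds on $\log\bar\Omega^2$ on the past boundary from \cref{region II corollary} and \cref{region III main result}; the logarithmic factor $1-\log|u|$ in \eqref{estimate 4 IV} appears precisely from integrating a $\kappa^{1-3\eta}|u|^{-1}$-type bound in $u$ toward $u=0$. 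A standard local existence argument then promotes the closed bootstrap to the statement of \cref{region IV part 1 result}.

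I expect the main obstacle to be exactly what the degeneracies in \eqref{estimate 2 IV}, \eqref{estimate 4 IV}, \eqref{estimate 7 IV} announce: $\bar\Omega$, $\partial_u r$ and $\partial_u\phi$ need not extend continuously to $\{u=0\}$, so the $|u|$-weights must be chosen so that the bootstrap still closes---concretely, that the $|u/V|^{p}$-weighted $\partial_V$-energy estimate for $r\partial_u\phi$ closes with a gain from $\underline v\ll1$ (needing $2p<1$), and that the source $\tfrac{\bar\Omega^2}{r^2}\cdot\tfrac mr\lesssim\kappa\,|u|^{-p/2}$ in the $\log\bar\Omega^2$ equation remains integrable in $u$ down to $u=0$ (needing $p/2<1$). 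As in $\mathcal R_{II}$ and $\mathcal R_{III}$, it is crucial to control $\bar\Omega^2$ through the Hawking mass rather than through the ($|u|$-growing) bound on $\partial_u r$; obtaining $m\lesssim\kappa$ first, by monotonicity, makes this essentially free. A secondary bookkeeping point is that each $V$-integration now runs over an interval of length $\sim\underline v^{-1}$ rather than $O(1)$, so the resulting factor $\underline v^{-1}$ must be reabsorbed using the hierarchy $\kappa\ll\underline v$, which is why $\kappa$ is taken small depending on $\underline v$.
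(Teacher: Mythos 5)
Your proposal is correct and follows essentially the same route as the paper: a bootstrap on $\mathcal R_{IV}\cap\{V\le\underline v^{-1}\}$ with past boundary data on $\{V=1\}\cup\{u=-\underline v V\}$ inherited from \cref{region III main result} and \cref{region II corollary}, the Hawking mass bound obtained first from monotonicity and the bound on $C_0^+$, and the remaining quantities closed by transport estimates with the $|u|^{-p/2}$, $|u|^{-p}$ and $(1-\log|u|)$ weights accounting for the possible degeneration at $\{u=0\}$. The one localized difference is the scalar field: you re-run the region-$\mathcal R_{III}$ weighted $\partial_V$-energy estimate for $|u/V|^{p}r\partial_u\phi$ (which does close --- the error term is $\lesssim\kappa^{1-\cdot}\underline v^{p}$ in both cases $|u|\lessgtr\underline v$, and the resulting extra factor $V^{p}\le\underline v^{-p}$ is absorbed into a power of $\kappa$ via the hierarchy $\kappa\ll\underline v$, as you note), whereas the paper exploits the mass bound more directly by integrating $\partial_u\partial_V(r\phi)=-\tfrac{m\bar\Omega^2}{2r^2}\phi$ to control $\partial_u(r\phi)$ and $\partial_V(r\phi)$, and then recovers $r\partial_u\phi$ and $r\partial_V\phi$ by the product rule together with the bound on $\phi$; this avoids the energy estimate entirely and is slightly cleaner. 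Likewise your reversed order of integration for $\log\bar\Omega^2$ (first in $V$, then in $u$, as in $\mathcal R_{III}$) is fine and produces the $(1-\log|u|)$ factor from the $|u|^{-1}$ bound on $\partial_u\log\bar\Omega^2$, while the paper integrates in $u$ first and inherits the logarithm from the data on $\{V=1\}$; both variants close under the same hierarchy of parameters.
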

\begin{proof}
    For any $(u_1,V_1)\in \mathcal R_{IV}\cap\{V\leq\underline{v}^{-1}\},$ we assume that the solution satisfies the following bootstrap assumptions for all $(u,V)\in\mathcal{R}^{IV}_{u_1,V_1}=\big([-1,u_1]\times[0,V_1]\big)\cap \mathcal R_{IV}\cap\{V\leq\underline{v}^{-1}\}:$
    \[|\phi|\leq1,\ |\partial_ur|\leq100|u|^{-\frac{p}{2}},\ \frac{1}{8}\leq\partial_Vr\leq\frac{1}{2},\ \bar{\Omega}^2,\bar{\Omega}^{-2}\leq100|u|^{-\frac{p}{2}}\]
    We remark that the bootstrap assumptions hold initially, since the solution satisfies the following estimates on $\big\{u\in[-\underline{v},0),V=1\big\}\cup\big\{V\in[1,\underline{v}^{-1}],\ u=-\underline{v}V\big\}:$
    \[|\partial_ur|\leq4|u|^{-\frac{p}{10}},\ \big|4\partial_Vr-1\big|\leq\kappa^{1-\frac{\eta}{2}},\ \big|\log\bar{\Omega}^2\big|\leq\kappa^{1-\frac{\eta}{2}}\big(1-\log|u|\big)\]
    \[|r\phi|\leq\kappa^{\frac{1}{2}-\frac{\eta}{2}},\ \big|r\partial_V\phi\big|\leq\kappa^{\frac{1}{2}-\frac{\eta}{2}},\ \big|r\partial_u\phi\big|\leq\kappa^{\frac{1}{2}-\frac{\eta}{2}}|u|^{-p}\]
 In the above, the bound for $r\phi$ follows since $\phi(p)=0$ and using our previous bounds for $\partial_u\phi,\ \partial_V\phi.$ To complete the proof, it suffices to prove that the estimates \eqref{estimate 1 IV}--\eqref{estimate 7 IV} hold in $\mathcal{R}^{IV}_{u_1,V_1}.$ Using the bootstrap assumption on $\partial_Vr$, we obtain that in $\mathcal{R}^{IV}_{u_1,V_1}$ the radius function satisfies $r\sim V.$ We first remark that the monotonicity properties of the Hawking mass and the bound on $C_0^+$ imply \eqref{estimate 1 IV}. We use this in the equation:
    \[\partial_u\partial_Vr=-\frac{\bar{\Omega}^2\cdot m}{2r^2}\]
    in order to obtain the estimate:
    \[\partial_Vr=\frac{1}{4}+O(\kappa^{1-\frac{\eta}{2}})+O\bigg(\int_{-\underline{v}V}^u\kappa^{1-\epsilon}V^{-2}|u'|^{-\frac{p}{2}}du'\bigg)=\frac{1}{4}+O(\kappa^{1-\frac{\eta}{2}})\]
    which proves \eqref{estimate 3 IV}. Similarly, we prove \eqref{estimate 2 IV} by the estimate:
    \[|\partial_ur|\leq4|u|^{-\frac{p}{10}}+O\bigg(\int_{\max(1,-\underline{v}u)}^V\kappa^{1-\epsilon}(V')^{-2}|u|^{-\frac{p}{2}}dV'\bigg)\leq10|u|^{-\frac{p}{2}}\]
    Next, we rewrite the wave equation \eqref{wave} as:
    \[\partial_u\partial_V(r\phi)=-\frac{\bar{\Omega}^2\cdot m\cdot\phi}{2r^2}\]
    As a result, we obtain the bound
    \[
    \big|\partial_u(r\phi)\big|\leq C\kappa^{\frac{1}{2}-\frac{\eta}{2}}|u|^{-p}+O\bigg(\int_{\max(1,-\underline{v}u)}^V\kappa^{1-\epsilon}(V')^{-2}|u|^{-\frac{p}{2}}dV'\bigg)\leq2C\kappa^{\frac{1}{2}-\frac{\eta}{2}}|u|^{-p}.
    \]
    We integrate this bound in the $u$ direction to prove \eqref{estimate 5 IV}. Moreover, we also have
    \[\big|r\partial_u\phi\big|\leq2C\kappa^{\frac{1}{2}-\frac{\eta}{2}}|u|^{-p}+\big|\phi\partial_ur\big|\leq\kappa^{\frac{1}{2}-\eta}|u|^{-p}\]
    which proves \eqref{estimate 7 IV}. Similarly, we prove \eqref{estimate 6 IV}:
    \[
    \big|r\partial_V\phi\big|\leq C\kappa^{\frac{1}{2}-\frac{\eta}{2}}+|\phi\partial_Vr|+O\bigg(\int_{-\underline{v}V}^u\kappa^{1-\epsilon}V^{-2}|u'|^{-\frac{p}{2}}du'\bigg)\leq\kappa^{\frac{1}{2}-\eta}.
    \]
    Using equation \eqref{du dv Omega using m} and the previously established bounds, we obtain the estimate
    \[
\big|\partial_V\log\bar{\Omega}^2\big|\leq\kappa^{1-\frac{\eta}{2}}+O\bigg(\int_{-\underline{v}V}^u\kappa^{1-2\eta}V^{-2}|u'|^{-p}du'\bigg)\leq\kappa^{1-\frac{5}{2}\eta}.
    \]
    Finally, we integrate this bound using the estimate on the initial data, in order to obtain \eqref{estimate 4 IV}.
\end{proof}

We now prove existence of the solution in the second sub-region:
\begin{proposition}\label{region IV part 2 result}
    The solution constructed in \cref{region IV part 1 result} extends to $\big\{u\in[-1,0),\ V\in[\underline{v},\infty)\big\}$, and satisfies the following estimates:
    \begin{align}\label{estimate 1 IV part 2}
        |m|& \leq\kappa^{1-\epsilon}
 \\ \label{estimate 2 IV part 2}
        |\partial_ur|& \leq20|u|^{-\frac{p}{2}}
\\ \label{estimate 3 IV part 2}
        \big|4V^{\kappa}\partial_Vr-1\big|& \leq\kappa^{1-2\eta}
  \\ \label{estimate 4 IV part 2}
        \big|\log V^{\kappa}\bar{\Omega}^2\big|& \leq\kappa^{1-5\eta}\big(1-\log|u|\big)
 \\ \label{estimate 5 IV part 2}
        |\phi|& \leq\kappa^{\frac{1}{2}-2\eta}
    \\ \label{estimate 6 IV part 2}
\big|rV^{\kappa}\partial_V\phi\big|& \leq\kappa^{\frac{1}{2}-2\eta}
        \\ \label{estimate 7 IV part 2}
        \big|r\partial_u\phi\big|& \leq\kappa^{\frac{1}{2}-2\eta}|u|^{-p}
    \end{align}
\end{proposition}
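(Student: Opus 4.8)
The plan is to run a bootstrap argument entirely parallel to the one used in the proof of \cref{region IV part 1 result}, now on the remaining subregion $\{u\in[-1,0),\ V\geq\underline v^{-1}\}$, whose past boundary is the union of the segment of $C_0^+$ with $V\geq\underline v^{-1}$ and the spacelike hypersurface $\{V=\underline v^{-1}\}$. The genuinely new feature of this far region is that the self-similar coordinate $V$ is no longer comparable to the area radius: one expects $r\sim V^{1-\kappa}$ and $V^\kappa\bar\Omega^2\approx1$, which accounts for the weights $V^\kappa$ in \eqref{estimate 3 IV part 2}--\eqref{estimate 6 IV part 2}. For $(u_1,V_1)$ in the subregion I would make, on $([-1,u_1]\times[0,V_1])\cap\{V\geq\underline v^{-1}\}$, the crude bootstrap assumptions
\[ |\phi|\leq 1,\qquad |\partial_ur|\leq 100|u|^{-\frac p2},\qquad \tfrac18\leq V^\kappa\partial_Vr\leq\tfrac12,\qquad V^\kappa\bar\Omega^2,\ V^{-\kappa}\bar\Omega^{-2}\leq 100|u|^{-\frac p2}, \]
and check that they hold on the past boundary: on $\{V=\underline v^{-1}\}$ this follows from \cref{region IV part 1 result} together with $V^\kappa=\underline v^{-\kappa}=1+O(\kappa|\log\underline v|)$, while on $\{u=-1\}$ it follows by rewriting the asymptotically flat initial data bounds of \cref{initial data section} in the $(u,V)$ coordinates, using $v=V^{1-\kappa}$, $\tfrac{dv}{dV}=(1-\kappa)V^{-\kappa}$, the bounds \eqref{ansatz for phi C0+}, \eqref{estimates for r initial data}, \eqref{estimates for dv r initial data}, and the exact identity $V^\kappa\bar\Omega^2(-1,V)=(1-\kappa)\Omega^2(-1,V^{1-\kappa})=1$ coming from the gauge \eqref{eq:gauge-condition-for-omega}. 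From the bootstrap bound on $V^\kappa\partial_Vr$ and the value of $r$ on $\{V=\underline v^{-1}\}$ supplied by \cref{region IV part 1 result} one gets $r\sim V^{1-\kappa}$ throughout, which is used freely below.

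I would then improve the estimates in the order: Hawking mass, radius, scalar field, lapse. \textbf{(i) Hawking mass.} Exactly as in \cref{region IV part 1 result}, the monotonicity properties of the Hawking mass together with $m|_{C_0^+}\lesssim\kappa$ from \cref{lem:hawking-mass-bounded-on-C+} give $|m|\lesssim\kappa\leq\kappa^{1-\epsilon}$, which is \eqref{estimate 1 IV part 2}. \textbf{(ii) Radius.} Using $\partial_u\partial_Vr=-\tfrac{\bar\Omega^2 m}{2r^2}$ with $|m|\lesssim\kappa$, $V^\kappa\bar\Omega^2\lesssim|u|^{-p/2}$, $r\sim V^{1-\kappa}$ and $p<1$: integrating in $u$ from $\{u=-1\}$, where $4V^\kappa\partial_Vr=1+O(\kappa)$, gives $|4V^\kappa\partial_Vr-1|\lesssim\kappa+\kappa V^{-2+2\kappa}\lesssim\kappa^{1-2\eta}$, that is \eqref{estimate 3 IV part 2}; integrating the same equation in $V$ from $\{V=\underline v^{-1}\}$, using \eqref{estimate 2 IV} and $\int_{\underline v^{-1}}^{\infty}V'^{-2+\kappa}\,dV'\lesssim\underline v^{1-\kappa}$, gives \eqref{estimate 2 IV part 2}.

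\textbf{(iii) Scalar field.} Writing \eqref{wave} as $\partial_u\partial_V(r\phi)=-\tfrac{\bar\Omega^2 m}{2r^2}\phi$, integrating in $V$ from $\{V=\underline v^{-1}\}$ (where $|\partial_u(r\phi)|\lesssim\kappa^{\frac12-\eta}|u|^{-p}$ by \cref{region IV part 1 result}) and then in $u$ from $\{u=-1\}$ yields $|r\phi|\lesssim\kappa^{\frac12-\eta}+\sqrt\kappa\,V^{1-\kappa}$; the term $\sqrt\kappa\,V^{1-\kappa}$ reflects that the outgoing Bondi news $\partial_v(r\phi)(-1,v)$ is only $O(\sqrt\kappa)$ and does \emph{not} decay, so $r\phi$ grows. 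Dividing by $r\sim V^{1-\kappa}$ then gives $|\phi|\lesssim\kappa^{\frac12-2\eta}$, proving \eqref{estimate 5 IV part 2} and improving the assumption $|\phi|\leq1$. Next, $r\partial_u\phi=\partial_u(r\phi)-\phi\,\partial_ur$ gives \eqref{estimate 7 IV part 2}, and integrating $\partial_u(r\partial_V\phi)=-\partial_Vr\,\partial_u\phi$ in $u$ from $\{u=-1\}$ (using the just-proven bound on $r\partial_u\phi$, $\int_{-1}^0|u'|^{-p}du'\lesssim1$, and $|r\partial_V\phi(-1,V)|\lesssim\sqrt\kappa\,V^{-1}$) gives \eqref{estimate 6 IV part 2}. \textbf{(iv) Lapse.} Finally, from \eqref{du dv Omega using m} — note that $\partial_u\partial_V\log(V^\kappa\bar\Omega^2)=\partial_u\partial_V\log\bar\Omega^2$ — together with the bounds on $m,\partial_u\phi,\partial_V\phi$, one obtains $|\partial_u\partial_V\log(V^\kappa\bar\Omega^2)|\lesssim\kappa^{1-4\eta}|u|^{-p}V^{-2+\kappa}$; integrating in $u$ from $\{u=-1\}$, where $\partial_V\log(V^\kappa\bar\Omega^2)(-1,V)=0$, and then in $V$ from $\{V=\underline v^{-1}\}$ (using \eqref{estimate 4 IV} and $\int_{\underline v^{-1}}^{\infty}V'^{-2+\kappa}dV'\lesssim\underline v^{1-\kappa}$) yields \eqref{estimate 4 IV part 2}. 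This improves every bootstrap assumption, so the local existence theory extends the solution to the whole subregion with the stated bounds.

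The main difficulty is essentially bookkeeping forced by the $V^\kappa$ weights: since the $(u,V)$ coordinates are adapted to $C_0^-$ rather than to $\mathcal I^+$, one must carry the mismatch $r\sim V^{1-\kappa}$, $\bar\Omega^2\sim V^{-\kappa}$ through every equation and verify that each error term gains either a power of $\kappa$ (traceable to the smallness of the data and to the bound $|m|\lesssim\kappa$) or a power of $\underline v$ (obtained by integrating the fast radial decay $\sim V^{-2+\kappa}$ over $[\underline v^{-1},\infty)$), so that the bootstrap constants genuinely improve uniformly up to $V=\infty$. The one conceptually non-routine point is that the radiation field $r\phi$ itself \emph{grows} like $\sqrt\kappa\,V^{1-\kappa}$ toward $\mathcal I^+$; hence the smallness of $\phi$ can only be recovered after dividing by $r$, and — exactly as in \cref{region IV part 1 result} — it is the monotonicity of the Hawking mass, and not any direct transport estimate on $r\phi$, that controls the geometry in this region and, in the subsequent step, permits the conclusion that $\mathcal I^+$ is incomplete.
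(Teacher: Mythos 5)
Your proposal is correct and follows essentially the same route as the paper: the identical bootstrap assumptions, the same order of improvements (Hawking-mass monotonicity first, then $\partial_u(V^{\kappa}\partial_Vr)$ and $\partial_V\partial_ur$, then the radiation field via $\partial_u\partial_V(r\phi)=-\tfrac{\bar\Omega^2 m}{2r^2}\phi$, then the lapse via \eqref{du dv Omega using m}), and the same key observations ($r\sim V^{1-\kappa}$, the gauge identity $V^{\kappa}\bar\Omega^2(-1,V)=1$, and the gain of $\underline v^{1-\kappa}$ from integrating $V^{-2+\kappa}$ over $[\underline v^{-1},\infty)$). The only cosmetic difference is that for \eqref{estimate 6 IV part 2} you integrate $\partial_u(r\partial_V\phi)=-\partial_u\phi\,\partial_Vr$ directly rather than passing through $V^{\kappa}\partial_V(r\phi)$ and subtracting $\phi V^{\kappa}\partial_Vr$ as the paper does; both give the stated bound, and your explicit discussion of the growth $r\phi\sim\sqrt\kappa\,V^{1-\kappa}$ on $C_0^+$ is a correct elaboration of a step the paper leaves implicit.
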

\begin{proof}
    For any $(u_1,V_1)\in [-1,0)\times[\underline{v},\infty),$ we assume that the solution satisfies the following bootstrap assumptions for all $(u,V)\in[-1,u_1]\times[\underline{v},V_1]$:
    \begin{equation*}|\phi|\leq1,\ |\partial_ur|\leq100|u|^{-\frac{p}{2}},\ \frac{1}{8}\leq V^{\kappa}\partial_Vr\leq\frac{1}{2},\ V^{\kappa}\bar{\Omega}^2\leq100|u|^{-\frac{p}{2}}.\end{equation*}
    The bootstrap assumptions hold initially because of the estimates in \cref{region IV part 1 result} and the bounds on the initial data on $C_0^+.$ To complete the proof, we prove that the estimates \eqref{estimate 1 IV part 2}--\eqref{estimate 7 IV part 2} hold in $[-1,u_1]\times[\underline{v},V_1].$ Using the bootstrap assumption  on $\partial_Vr$, we obtain that in $[-1,u_1]\times[\underline{v},V_1]$ the radius function satisfies $r\sim V^{1-\kappa}.$ 
    As before, \eqref{estimate 1 IV part 2} follows from the monotonicity properties of the Hawking mass and the bound on $C_0^+$. We use this bound in the equation
    \begin{equation*}\partial_u\big(V^{\kappa}\partial_Vr\big)=-\frac{\bar{\Omega}^2\cdot m\cdot V^{\kappa}}{2r^2}\end{equation*}
    in order to obtain the estimate
\begin{equation*}V^{\kappa}\partial_Vr=\frac{1}{4}+O(\kappa^{1-\eta})+O\bigg(\int_{-1}^u\kappa^{1-\epsilon}V^{-2+2\kappa}|u'|^{-\frac{p}{2}}du'\bigg)=\frac{1}{4}+O(\kappa^{1-\eta})\end{equation*}
    which proves \eqref{estimate 3 IV part 2}. 
    Similarly, we prove \eqref{estimate 2 IV part 2} using
    \begin{equation*}|\partial_ur|\leq10|u|^{-\frac{p}{2}}+O\bigg(\int_{\underline{v}}^V\kappa^{1-\epsilon}(V')^{-2+\kappa}|u|^{-\frac{p}{2}}dV'\bigg)\leq20|u|^{-\frac{p}{2}}.\end{equation*}
    Next, we use \eqref{wave} to obtain the bound
    \begin{equation*}\big|\partial_u(r\phi)\big|\leq C\kappa^{\frac{1}{2}-\eta}|u|^{-p}+O\bigg(\int_{\underline{v}}^V\kappa^{1-\epsilon}(V')^{-2+\kappa}|u|^{-\frac{p}{2}}dV'\bigg)\leq2C\kappa^{\frac{1}{2}-\eta}|u|^{-p}
    \end{equation*}
    We integrate in the $u$ direction and divide by $r$ in order to prove \eqref{estimate 5 IV part 2}. Moreover, we also have
    \begin{equation*}\big|r\partial_u\phi\big|\leq2C\kappa^{\frac{1}{2}-\eta}|u|^{-p}+\big|\phi\partial_ur\big|\leq\kappa^{\frac{1}{2}-2\eta}|u|^{-p}\end{equation*}
    which proves \eqref{estimate 7 IV part 2}. Similarly, we prove \eqref{estimate 6 IV part 2}:
    \begin{equation*}
\big|rV^{\kappa}\partial_V\phi\big|\leq C\kappa^{\frac{1}{2}-\frac{\eta}{2}}+|\phi V^{\kappa}\partial_Vr|+O\bigg(\int_{-1}^u\kappa^{1-\epsilon}V^{-2+2\kappa}|u'|^{-\frac{p}{2}}du'\bigg)\leq\kappa^{\frac{1}{2}-2\eta}  .
 \end{equation*}
    Using equation \eqref{du dv Omega using m} and the previously established bounds, we obtain 
\begin{equation*}   
	 \big|\partial_V\log V^{\kappa}\bar{\Omega}^2\big|=O\bigg(\int_{-1}^u\kappa^{1-4\eta}V^{-2+\kappa}|u'|^{-p}du'\bigg)=O\big(\kappa^{1-4\eta}V^{-2+\kappa}\big).
\end{equation*}
    Since we obtain enough decay in $V,$ we integrate this bound and use \eqref{estimate 4 IV} to obtain \eqref{estimate 4 IV part 2}.
\end{proof}

\subsection{Proof of \texorpdfstring{\cref{main theorem}}{Theorem~1}}
Finally, the estimates that we proved so far allow us to complete the proof of   \cref{main theorem}:
\begin{proof}[Proof of  \cref{main theorem}]
    By construction, the solution satisfies the discrete self-similarity conditions \eqref{self similar properties thm} on $C_0^-$. Our analysis of the solution in regions $\mathcal R_{I}, \mathcal R_{II}, \mathcal R_{III},$ and $\mathcal R_{IV}$ proves that the solution exists in the exterior region $\Rext$. Moreover, the affine length of $\mathcal{I}^+$ is given (see \cite{D05}) by
    \begin{equation*}
\lim_{v\rightarrow\infty}\int_{-1}^0\frac{\Omega^2(u,v)}{\Omega^2(-1,v)}du=\lim_{V\rightarrow\infty}\int_{-1}^0\frac{\bar{\Omega}^2(u,V)}{\bar{\Omega}^2(-1,V)}du=\lim_{V\rightarrow\infty}\int_{-1}^0V^{\kappa}\bar{\Omega}^2(u,V)du.
\end{equation*}
    We use the bound \eqref{estimate 4 IV part 2} in order to get:    \begin{equation*}\lim_{V\rightarrow\infty}\int_{-1}^0V^{\kappa}\bar{\Omega}^2(u,V)du\leq\lim_{V\rightarrow\infty}\int_{-1}^0e^{\kappa^{1-5\eta}}|u|^{-p}du\lesssim 1.\end{equation*}
    In conclusion, the spacetime $\big(M,g\big)$ has a future incomplete $\mathcal{I}^+$.
\end{proof}

\printbibliography[heading=bibintoc] 

\end{document}